\newtheorem{lemma}{Lemma}
\newtheorem{cor}{Corollary}
\newtheorem{theorem}{Theorem}
\newtheorem{prop}{Proposition}
\DeclareMathOperator*{\argmin}{argmin}
\DeclareMathOperator{\RCV}{RCV}
\DeclareMathOperator{\diam}{diam}
\DeclareMathOperator{\diag}{diag}
\DeclareMathOperator{\SPE}{SPE}
\DeclareMathOperator{\SEE}{SEE}
\DeclareMathOperator{\NSR}{NSR}
\newcommand{\SN}[1]{\textcolor{orange}{\textbf{#1}}}
\newcommand{\dd}{\,\mathrm{d}\,}    
\newcommand{\mypar}[1]{\smallskip\noindent\textbf{\emph{#1}}}
\DeclareRobustCommand \triangled{\tikz{\filldraw[color =  red,fill= red](0,0.1) -- (0.1,0.3) -- (0.2,0.1) -- cycle;}}
\DeclareRobustCommand \circled{\tikz{\filldraw[color =  black,fill= black] circle (2.5pt);}}
\DeclareRobustCommand \circledd{\tikz{\filldraw[color =  yellow, fill= yellow] circle (3pt);}}
\DeclareRobustCommand\dashedd {\tikz[baseline=-0.6ex]\draw[blue,thick, dash pattern=on 6pt off 4pt] (0,0)--(0.62,0);} 
\DeclareRobustCommand\solidd {\tikz[baseline=-0.6ex]\draw[black,thick] (0,0)--(0.62,0);} 
\begin{document}

\title{Robust Functional Regression with Discretely Sampled Predictors}
\author{Ioannis Kalogridis$^1$ and Stanislav Nagy$^2$}
\date{%
    $^1$ Department of Mathematics, KU Leuven  \\%
    $^2$ Department of Probability and Mathematical Statistics, Charles University \\[2ex]%
    \today
}
\maketitle

\begin{abstract}

The functional linear model is an important extension of the classical regression model allowing for scalar responses to be modeled as functions of stochastic processes. Yet, despite the usefulness and popularity of the functional linear model in recent years, most treatments, theoretical and practical alike, suffer either from (i) lack of resistance towards the many types of anomalies one may encounter with functional data or (ii) biases resulting from the use of discretely sampled functional data instead of completely observed data. To address these deficiencies, this paper introduces and studies the first class of robust functional regression estimators for partially observed functional data. The proposed broad class of estimators is based on thin-plate splines with a novel computationally efficient quadratic penalty, is  easily implementable and enjoys good theoretical properties under weak assumptions. We show that, in the incomplete data setting, both the sample size and discretization error of the processes determine the asymptotic rate of convergence of functional regression estimators and the latter cannot be ignored. These theoretical properties remain valid even with multi-dimensional random fields acting as predictors and random smoothing parameters. The effectiveness of the proposed class of estimators in practice is demonstrated by means of a simulation study and a real-data example.

\end{abstract}
\section{Introduction}

In recent years, technological innovations and improved storage capabilities have led practitioners to observe and record increasingly complex high-dimensional data that are characterized by an underlying functional structure. Such data are nowadays commonly referred to as functional data and relevant research has been enjoying considerable popularity, following works such as \citet{Ramsay:1982}, \citet{Ramsay:1991} and \citet{Ramsay:2005}. While the field of functional data analysis (FDA) has become very broad with many specialized subpaths, see, e.g., \citet{Ferraty:2006, Horv:2012, Kokoszka:2017}, the functional linear model continues to occupy a prominent place within FDA. This model stipulates that an $\mathcal{L}^2(\mathcal{I})$-process $\{X(t), t \in \mathcal{I}\}$, for some bounded interval $\mathcal{I} \subset \mathbbm{R}$, influences the response variable $Y$ through an unknown coefficient function $\beta_0 \in \mathcal{L}^2(\mathcal{I})$, viz,
\begin{align}
\label{eq:FLM}
Y = \alpha_0 + \int_{\mathcal{I}} X(t) \beta_0(t) \dd t + \epsilon,
\end{align}
where $\alpha_0 \in \mathbbm{R}$ is an unknown offset (intercept) term and $\epsilon$ a random error, which is assumed to be independent of $X$. Typically, $\epsilon$ is also assumed to possess finite second moments, but, as we shall see, moment assumptions on the errors will not be needed for the theoretical results of this paper.

Assuming that the researcher readily possesses a random sample $(X_1, Y_1), \ldots, (X_n, Y_n)$ following model \eqref{eq:FLM}, there is a wealth of possible estimation methods that she could employ in order to estimate the unknown parameters $(\alpha_0, \beta_0) \in \mathbbm{R} \times \mathcal{L}^2(\mathcal{I})$. Since the covariance operator of the process $\{X(t), t \in \mathcal{I} \}$ does not possess a bounded inverse, such estimation methods are invariably based on lower rank representations of either $X$ or $\beta_0$, regularization through appropriate penalties or a combination thereof. We mention, in particular, the popular approaches of functional principal component regression and penalized basis expansions regression expounded by \citet{Ramsay:2005} and \citet{Kokoszka:2017} and theoretically investigated by \citet{Hall:2007} and \citet{Cardot:2003, Li:2007}, respectively. More advanced approaches include the hybrid method of \citet{Reiss:2007} which combines functional principal components and penalized spline expansions, as well as the reproducing kernel Hilbert space estimator of \citet{Yuan:2010}. To these methods we may add the sparsity-inducing proposal of \citet{James:2009}, which ensures the smoothness and interpretability of the estimates by taking advantage of the sparsity of higher order derivatives of $\beta_0$ when those exist.

From the point of view of robustness, that is, resistance against atypical observations and model misspecification, the aforementioned proposals are not satisfactory, as they are all generalizations of the ordinary least-squares estimator. The latter is known to be very susceptible even to mild deviations from the ideal model assumptions \citep[see, e.g.,][]{Huber:2009}. To overcome this weakness of least-squares based estimators, several proposals have been put forth in the literature over the years. These proposals essentially amount to replacing the square loss with a more slowly increasing loss function thereby ensuring that the influence of atypical observations and model misspeficication on the estimates is better controlled. Examples of such an approach include \citet{Qingguo:2017} and \citet{Kal:2019}, who generalized the functional principal component estimator of \citet{Hall:2007}. Later, \citet{Boente:2020, Kalogridis:2023} proposed robust spline estimators based on the principle of MM-estimation, and \citet{Shin:2016} generalized the work of \citet{Yuan:2010} by allowing for a large number of loss functions.

Nevertheless, a severe drawback of all aforementioned theoretical contributions, robust and non-robust alike, is their reliance on completely observed functional predictors $X_1, \ldots, X_n$ throughout $\mathcal{I}$. In practice, this is an unrealistic assumption and in the vast majority of cases either computational or resource constraints dictate that the curves are only recorded at a finite grid of points within $\mathcal{I}$: $t_1, \ldots, t_p$, say. Functional regression with discretely sampled data is a setting that has received much less attention despite the prevalence of partially observed functional data in practice. To the best of our knowledge, the only theoretical contributions for this setting are the works of \citet{Crambes:2009} and \citet{Kato:2012}, who proposed least-squares smoothing-spline estimation with a slightly modified penalty and quantile regression on the leading functional principal components, respectively. The estimator of \citet{Crambes:2009} is non-robust on account of the least-squares criterion employed therein whereas the $L_1$-estimator of \citet{Kato:2012}, while robust towards heavy-tailed error distributions, is inefficient in the case of clean data, e.g., under light tailed sub-Gaussian errors. Moreover, both works rely on the assumption that $p$, the number of discretization points, is large relative to $n$ in order for the discretization error to be asymptotically negligible  and it is not clear how these estimators behave for sparsely observed functional data. 

To overcome the drawbacks of existing methods either with respect to lack of robustness or reliance on densely observed functional data, this paper introduces a flexible family of penalized thin-plate spline M-estimators that can operate under both densely and sparsely observed functional data. This flexibility is accomplished through the introduction of a novel penalty functional that is inspired by norms on  multi-dimensional Sobolev spaces, which are widely used in the study of partial differential equations. Robustness towards atypical observations and model misspecification can be achieved through appropriate selection of the loss function. Possible loss functions in that respect include not only the quantile loss, but also the Huber loss, which can lead to estimators that are resistant to atypical observations and model misspecification while remaining highly efficient under light-tailed errors. 

We study the proposed class of estimators not only under the classical one-dimensional functional linear model with $\mathcal I \subset\mathbbm{R}$, but also with general $\mathcal I \subset \mathbbm{R}^d$. This extension of the functional linear model is particularly important, as it allows more general random objects, such as images, to be used as explanatory variables for the response variable $Y$. To the best of our knowledge, functional regression with general, possibly multi-dimensional, random fields has not been considered previously even for relatively simple least-squares estimators so that the present contributions is the first of its kind. Moreover, unlike the theoretical results of \citet{Crambes:2009, Kato:2012}, our theoretical results hold even with random smoothing parameters. Since smoothing parameters are normally selected in a data-dependent way, treating them as random variables leads to considerably more realistic and useful results.

The rest of the paper is structured as follows. Section~\ref{sec:2} introduces the proposed class of penalized M-estimators and establishes its existence under general conditions. Section~\ref{sec:3} is dedicated to the asymptotic study of our estimators. We obtain a useful decomposition of the frequently used mean squared error highlighting the delicate interplay between the sample size and level of discretization, which jointly determine the asymptotic rate of convergence of the proposed family of estimators. Section~\ref{sec:4} outlines an effective computational algorithm and a model selection procedure, which works well in a variety of settings as we demonstrate by means of our numerical experiments and real-data example in Section~\ref{sec:5} and Section~\ref{sec:6}, respectively. Finally, Section~\ref{sec:7} briefly discusses two interesting directions for future research.

\section{Thin-Plate Splines for Functional Regression}
\label{sec:2}
\subsection{The Proposed Family of Estimators}
Throughout this section we suppose that $X_1, \ldots, X_n$ are independent and identically distributed (i.i.d.) copies of the random field $\{X(\mathbf{t}), \mathbf{t} \in \mathcal{I}\}$ for some $\mathcal{I} \subset \mathbbm{R}^d$. As commonly done in functional regression, we also assume that $X$ is a second-order process over $\mathcal{I}$, or, equivalently, $\mathbb{E} \{\|X\|^2\}<\infty$ with $\|\cdot\|$ denoting the classical $\mathcal{L}^2(\mathcal{I})$-norm. The response variables $Y_1, \ldots, Y_n$, are assumed to be generated according to the model
\begin{align}
\label{eq:FLMI}
Y_i = \alpha_0 +  \int_{\mathcal{I}} X_i(\mathbf{t}) \beta_0(\mathbf{t}) \dd \mathbf{t} + \epsilon_i, \quad (i = 1, \ldots, n),
\end{align}
for unknown quantities $(\alpha_0, \beta_0) \in \mathbbm{R} \times \mathcal{L}^2(\mathcal{I})$ that are the objects of interest and  i.i.d. errors $\{\epsilon_i \}_{i=1}^n$.

For completely observed random fields $\{X_i\}_{i=1}^n$ most current estimation proposals would boil down to the minimization of 
\begin{align}
\label{eq:Loss}
\frac{1}{n} \sum_{i=1}^n \rho\left(Y_i-\alpha -  \int_{\mathcal{I}} X_i(\mathbf{t}) \beta(\mathbf{t}) \dd \mathbf{t} \right) + \lambda J(\beta),
\end{align}
over $(\alpha,\beta) \in \mathbbm{R} \times \Theta$, where $\Theta$ is a suitable subspace of functions, $\rho:\mathbbm{R} \to \mathbbm{R}_{+}$ represents an a-priori chosen loss function, e.g., the square loss $\rho(x) = x^2$ or the $L_1$ loss $\rho(x) =|x|$, $J: \Theta \to \mathbbm{R}_{+}$ is a penalty functional, usually a semi-norm, and $\lambda \geq 0$ a penalty parameter that regulates the trade-off between smoothness and fidelity to the data. In the one-dimensional case, that is, $d=1$, popular choices of $\Theta$ include the subspace spanned by a small number of eigenfunctions of the covariance operator of $X$ or a spline subspace, whereas $J(\beta)$ would be a function of the derivatives of $\beta$ or it could even be omitted entirely whenever $\Theta$ is a finite dimensional subspace and thus the risk of overfitting is mitigated. The interested reader is referred to \citet[Chapter~15]{Ramsay:2005} for a discussion of the most popular estimation methods for scalar-on-function regression in the one-dimensional setting.

In cases where the data is only discretely observed, at $\mathbf{t}_1, \ldots, \mathbf{t}_p \in \mathbbm{R}^d$, say, the above recipe is not applicable, as the $\mathcal{L}^2(\mathcal{I})$-inner product between the random fields and each candidate coefficient function $\beta$ cannot be computed. Instead, a popular method of estimation with discretely sampled functional data consists  of approximating the integral in \eqref{eq:Loss} with a Riemann sum and applying any of the aforementioned methods. In essence, this strategy consists of ignoring the error associated with the discrete nature of the data. However, while appealing due to its simplicity, the theoretical effects of such a practice are not entirely understood. We take a different approach in this paper by dealing with the Riemann sum directly and deriving our estimators from this Riemann sum in combination with a suitable penalty functional $J(\cdot)$ on the space of smooth functions on $\mathbbm{R}^d$. 

To explain our method in detail, let  $\{A_j\}_{j=1}^p$ denote a disjoint partition of $\mathcal{I}$ such that $\mathbf{t}_j \in A_j$ and $\mu(A_j)>0$ for each $j = 1, \dots, p$, where $\mu(A)$ denotes the Lebesgue-measure (volume) of a measurable set $A$. With this notation we may rewrite the functional linear model \eqref{eq:FLMI} as
\begin{align*}
Y_i = \alpha_0 +  \sum_{j=1}^p X_i(\mathbf{t}_j) \beta_0(\mathbf{t}_j) \mu(A_j) + d_i + \epsilon_i, \quad (i = 1, \ldots, n),
\end{align*}
where  $\{d_i\}_{i=1}^n$ represent the unobserved discretization errors, i.e.,
\begin{align}   \label{eq:di}
d_i = \int_{\mathcal{I}} X_i(\mathbf{t}) \beta_0(\mathbf{t}) \dd \mathbf{t} - \sum_{j=1}^p X_i(\mathbf{t}_j) \beta_0(\mathbf{t}_j) \mu(A_j), \quad (i = 1,\ldots, n).
\end{align}
In order to both deal with the discretization error theoretically and allow for a great variety of coefficient functions $\beta_0$, we assume in this paper that $\beta_0$ is a smooth function on $\mathbbm{R}^d$ in the sense of possessing partial derivatives of a given order. A rich space of functions fulfilling this property is the Sobolev-Hilbert space of functions of order $m \geq 1$, $\mathcal{H}^{m}(\mathbbm{R}^d)$, defined by
\begin{align*}
\mathcal{H}^{m}(\mathbbm{R}^d) = \left\{ f:\mathbbm{R}^d \to \mathbbm{R}, \  \frac{\partial^m f(t_1, \ldots, t_d)}{\partial t_{1}^{m_1} \ldots \partial t_{d}^{m_d}} \ \text{exists for all} \ m_1 + \ldots + m_d = m \  \text{and} \  I_m^2(f) < \infty \right\},
\end{align*}
where the semi-norm $I_{m}:\mathcal{H}^{m}(\mathbbm{R}^d) \to \mathbbm{R}_{+}$ is given by
\begin{align}
\label{eq:Sob}
I_{m}^2(f) = \sum_{m_1 + \ldots + m_d = m} \binom{m}{m_1, \ldots, m_d} \int_{\mathbbm{R}} \ldots \int_{\mathbbm{R}} \left( \frac{\partial^m f(t_1, \ldots, t_d)}{\partial t_{1}^{m_1} \ldots \partial t_{d}^{m_d}}  \right)^2 \dd t_1 \ldots \dd t_d.
\end{align}
In other words, $\mathcal{H}^m(\mathbbm{R}^d)$ consists of all functions with square integrable partial derivatives of total order $m$. It can be shown \citep[see, e.g.,][Chapter 3]{Adams:2003} that $\mathcal{H}^m(\mathbbm{R}^d)$ is the completion of the space of functions with continuous partial derivatives of total order $m$ under a suitable norm and as such it affords us slightly greater generality. In what follows we place no restrictions on the coefficient function $\beta_0$, except that it is an element of the Sobolev-Hilbert space of functions, i.e., $\beta_0 \in \mathcal{H}^m(\mathbbm{R}^d)$. 

If we accept that $\beta_0 \in \mathcal{H}^m(\mathbbm{R}^d)$, it is intuitively appealing to minimize  a suitably chosen objective function over the whole space $\mathbbm{R} \times \mathcal{H}^m(\mathbbm{R}^d)$ in order to recover the unknown $(\alpha_0,\beta_0)$. To achieve both resistance against atypical observations $(X_i, Y_i)$ and high efficiency in regular data we propose to estimate $(\alpha_0, \beta_0)$ by $(\widehat\alpha_n, \widehat\beta_n)$ solving
\begin{align}
\label{eq:Est}
    \left(\widehat\alpha_n, \widehat\beta_n\right) = \argmin_{(\alpha, \beta) \in \mathbbm{R} \times \mathcal{H}^{m}\left(\mathbbm{R}^d\right) } \left[\frac{1}{n} \sum_{i=1}^n \rho \left( \frac{Y_i - \alpha - \sum_{j=1}^p X_i(\mathbf{t}_j) \beta(\mathbf{t}_j) \mu(A_j) }{\widehat\sigma_n} \right) + \lambda J_m^2(\beta)\right],
\end{align}
where $\rho:\mathbbm{R} \to \mathbbm{R}_{+}$ is a convex loss function, $\widehat\sigma_n$ is an auxiliary scale estimator for the scale of the error $\epsilon$, $\lambda$ is a penalty parameter, and the penalty functional $J_m:\mathcal{H}^m(\mathbbm{R}^d) \to \mathbbm{R}_{+}$ is given by
\begin{align}
\label{eq:Pen}
J_m^2(\beta) =  \sum_{j=1}^p \left|\beta(\mathbf{t}_j)\right|^2 \mu(A_j) + I_m^2(\beta), \quad \beta \in \mathcal{H}^m(\mathbbm{R}^d).
\end{align}
That is, our penalty functional $J_m^2$ is the sum of two semi-norms: the empirical squared semi-norm at the discretization points $\{\mathbf{t}_j\}_{j=1}^p$ and the Sobolev semi-norm $I_m^2$ from \eqref{eq:Sob}.

It is worth commenting on two particular aspects of our estimation framework, namely on the flexibility allowed with respect to the choice of the loss function $\rho$ and the slightly unconventional form of the penalty functional $J_m^2(\cdot)$. Indeed, our framework allows not only for the square loss $\rho(x)=x^2$, which is known to lead to efficient but highly sensitive estimates, and the $L_1$ loss $\rho(x) = |x|$, which leads to resistant but inefficient estimates under light-tailed errors. It also allows for a compromise between these two extremes through the use of many other loss functions, such as the logistic loss $\rho(x) = 2x + 4 \log(1+e^{-x})$ and the celebrated Huber loss given by
\begin{align}
\label{eq:huber}
    \rho_{k}(x) = \begin{cases}
        x^2 & |x| < k \\
        2k|x|-k^2 & |x| \geq k,
    \end{cases}
\end{align}
for some $k > 0$ regulating the mixing of the square and $L_1$ losses. For loss functions that are not power functions, the scale estimate, $\widehat\sigma_n$ in \eqref{eq:Est}, is necessary to ensure approximate scale equivariance of the estimates. That is, to ensure that our estimates for $(\alpha_0, \beta_0)$ do not heavily depend  on the measurement units of the response variables $\{Y_i\}_{i=1}^n$. Standardization with auxiliary scale estimates in order to achieve equivariance has been standard practice in classical (unpenalized) robust regression \citep[see, e.g.][]{Maronna:2019}. Such scale estimates can be defined without assuming the existence of any moments of the error $\epsilon$, see Section~\ref{sec:3} for an example.

The penalty functional~\eqref{eq:Pen} employed herein is a modification of the thin-plate penalty function appearing in the classical monographs of \citet{Wahba:1990} and \citet{Green:1994} where it only involves $I_m^2(\beta)$. It will be shown, however, that in our framework the addition of $\sum_{j=1}^p |\beta(\mathbf{t}_j)|^2 \mu(A_j)$ is essential in order to ensure the existence of the proposed estimators without heavy assumptions on the process $\{X(\mathbf{t}), \mathbf{t} \in \mathcal{I}\}$. We will also show that despite the inclusion of this additional term the solution to the minimization problem may still be found in the space of natural thin-plate splines, as is the case when only employing the classical thin-plate penalty functional. The practical interpretation of $J_m^2$ is similar to that of $I_m^2$: a premium is placed on functions with ``large" partial derivatives of total order $m$ but, unlike $I_m^2$, $J_m^2$ also penalizes functions assuming large values at the $\{\mathbf{t}_j\}_{j=1}^p$. Thus, $J_m^2$ enforces the smoothness of $\widehat\beta_n$ not only by penalizing roughness but also by shrinking it towards zero.

\subsection{Existence of the Estimators}

We now establish the existence of the penalized M-estimators defined in \eqref{eq:Est}. We shall require the following two general conditions involving the set $\mathcal{I}$, its boundary $\partial \mathcal{I}$ and the discretization points $\{\mathbf{t}_j\}_{j=1}^p$. We use $\left\Vert \cdot \right\Vert_{\mathbbm{R}^d}$ to denote the Euclidean norm on $\mathbbm{R}^d$.

\begin{enumerate}[label=(A\arabic*), ref=(A\arabic*)]
\item \label{A1} The discretization points $\{\mathbf{t}_j\}_{j=1}^p$ are contained in a bounded open set $\mathcal{I} \subset \mathbbm{R}^d$ whose boundary, $\partial \mathcal{I}$, satisfies the uniform cone condition of \citet[p. 83]{Adams:2003}. 
\item \label{A2} Define the quantities
\begin{align*}
h_{\max} & = \sup_{\mathbf{t} \in \mathcal{I}} \min_{1 \leq j \leq p} \|\mathbf{t}-\mathbf{t}_j\|_{\mathbbm{R}^d} 
\\ 
h_{\min} & = \min_{j \neq k} \| \mathbf{t}_j - \mathbf{t}_k\|_{\mathbbm{R}^d}.
\end{align*}
Then, for all large $n$ and $p$, there exists a finite constant $B$ such that $h_{\max}/h_{\min} \leq B$. 
\end{enumerate}
Condition \ref{A1} is technical in nature and precludes very irregular boundaries. It is satisfied quite generally. For example, it is valid for balls and rectangles in $\mathbbm{R}^d$. Condition \ref{A2} is essentially a density condition for the discretization points; it is required that discretization points are unique and cover $\mathcal{I}$ sufficiently well. Both of these implications follow from the bound $h_{\max}/h_{\min} \leq B$. This condition was introduced by \citet{Utr:1988} for the study of thin-plate splines in the context of non-parametric regression.

Proposition~\ref{Prop:1} below establishes the existence of the estimators and provides a useful characterization that will form the basis of our computational algorithm in Section~\ref{sec:4}. Prior to the statement of Proposition~\ref{Prop:1}, we remind the reader that a thin-plate spline of order $m$ with knots at $\{\mathbf{t}_j\}_{j=1}^p$ is any function $g: \mathbbm{R}^d \to \mathbbm{R}$ of the form
\begin{align}
\label{eq:TP}
g\left(\mathbf{t}\right) = \sum_{j=1}^p \gamma_j \eta_{m,d} \left(\left\|\mathbf{t}-\mathbf{t}_j \right\|_{\mathbbm{R}^d} \right) + \sum_{k=1}^M \delta_k \phi_k\left(\mathbf{t}\right),
\end{align} 
with $\eta_{m,d}: \mathbbm{R}_{+} \to \mathbbm{R}$ given by
\begin{align*}
\eta_{m, d}(x) = \begin{cases} \frac{(-1)^{m+1+d/2}}{2^{2m-1} \pi^{d/2} (m-1)!(m-d/2)!} x^{2m-d} \log(x) & d \ \text{even} \\ 
\frac{\Gamma(d/2-m)}{2^{2m} \pi^{d/2} (m-1)!}x^{2m-d} & d \ \text{odd}, \end{cases}
\end{align*}
where $\Gamma(\cdot)$ denotes Euler's gamma function. The functions $\{\phi_{k}\}_{k=1}^M$ are (any) basis for the space of polynomials on $\mathbbm{R}^d$ of total order less than $m$, which has dimension $M = \binom{m+d-1}{d}$. Moreover, $g$ is called a natural thin-plate spline, if, in addition to \eqref{eq:TP}, $I_m(g)<\infty$ where $I_m(\cdot)$ is defined in \eqref{eq:Sob}. As we explain in Section~\ref{sec:4} below, this condition places an orthogonality restriction on the coefficients $\boldsymbol{\gamma} = (\gamma_1, \ldots, \gamma_p)$.

\begin{prop}
\label{Prop:1}
Suppose that $\rho$ is a convex loss function, $2m>d$ and \ref{A1} and \ref{A2} hold. Then, there exists a solution to \eqref{eq:Est} in $\mathbbm{R} \times \mathcal{H}^{m}(\mathbbm{R}^d)$ denoted by $(\widehat{\alpha}_n, \widehat{\beta}_n)$. Moreover, if the set $\{\mathbf{t}_j\}_{j=1}^p$ contains a $\mathcal{P}_{m}$-unisolvent set, then $\widehat{\beta}_n$ is necessarily a natural thin-plate spline of order $m$ with knots at $\{\mathbf{t}_j\}_{j=1}^p$.
\end{prop}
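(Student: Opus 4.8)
\mypar{Existence of a minimizer.} Write $F(\alpha,\beta)$ for the objective in \eqref{eq:Est}. Two structural observations drive the argument: $F$ depends on $\beta$ only through the point values $(\beta(\mathbf t_1),\dots,\beta(\mathbf t_p))$ and the Sobolev seminorm $I_m(\beta)$, and $F$ is left unchanged if one adds to $\beta$ any element of $N:=\{q\in\mathcal P_m:\ q(\mathbf t_j)=0 \text{ for } j=1,\dots,p\}$, the polynomials of total order $<m$ that vanish at every design point. Starting from a minimizing sequence $(\alpha^{(k)},\beta^{(k)})$, I would first subtract from $\beta^{(k)}$ its $N$-component, which changes nothing, and then write the result as $\beta^{(k)}=u^{(k)}+q^{(k)}$ with $u^{(k)}$ orthogonal in $L^2(\mathcal I)$ to $\mathcal P_m$ and $q^{(k)}$ lying in a fixed complement of $N$ in $\mathcal P_m$. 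Since $\rho\ge0$, the penalty satisfies $\lambda J_m^2(\beta^{(k)})\le F(\alpha^{(k)},\beta^{(k)})=\inf F+o(1)$, so for $\lambda>0$ (the case of interest) $I_m(\beta^{(k)})$ is bounded, and because $\mu(A_j)>0$ each $|\beta^{(k)}(\mathbf t_j)|$ is bounded. A Deny--Lions/Poincar\'e inequality on $\mathcal I$ --- available thanks to the cone condition \ref{A1} --- bounds $\|u^{(k)}\|_{W^{m,2}(\mathcal I)}$ by $C\,I_m(\beta^{(k)})$, and since point evaluation is injective on the complement of $N$, the bounded values $q^{(k)}(\mathbf t_j)=\beta^{(k)}(\mathbf t_j)-u^{(k)}(\mathbf t_j)$ bound $q^{(k)}$ in the finite-dimensional space $\mathcal P_m$; hence $\beta^{(k)}$ is bounded in $W^{m,2}(\mathcal I)$. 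As $2m>d$, the embedding $W^{m,2}(\mathcal I)\hookrightarrow C(\overline{\mathcal I})$ is compact, so along a subsequence $\beta^{(k)}\to\beta^*$ uniformly on $\overline{\mathcal I}$ --- in particular $\beta^{(k)}(\mathbf t_j)\to\beta^*(\mathbf t_j)$ --- while, viewing $u^{(k)}$ in the Hilbert space $\mathcal H^m(\mathbbm R^d)/\mathcal P_m$, each $m$-th order partial derivative of $\beta^{(k)}$ converges weakly in $L^2(\mathbbm R^d)$, giving $I_m^2(\beta^*)\le\liminf_k I_m^2(\beta^{(k)})$ by weak lower semicontinuity of the seminorm. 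A non-constant convex $\rho$ is unbounded, which (immediately for the symmetric losses considered here) forces $\alpha^{(k)}$ to remain bounded once the $\beta^{(k)}(\mathbf t_j)$ are bounded, so $\alpha^{(k)}\to\alpha^*$ along a further subsequence. Finally, continuity of the loss term in $(\alpha,(\beta(\mathbf t_j))_j)$ together with lower semicontinuity of $J_m^2$ yields $F(\alpha^*,\beta^*)\le\liminf_k F(\alpha^{(k)},\beta^{(k)})=\inf F$, so $(\widehat\alpha_n,\widehat\beta_n):=(\alpha^*,\beta^*)\in\mathbbm R\times\mathcal H^m(\mathbbm R^d)$ is a solution.

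\mypar{The thin-plate-spline representation.} Let $(\widehat\alpha_n,\widehat\beta_n)$ be any solution and put $\widehat c_j:=\widehat\beta_n(\mathbf t_j)$. On the affine set $\{\beta\in\mathcal H^m(\mathbbm R^d):\ \beta(\mathbf t_j)=\widehat c_j \text{ for all } j\}$ the loss term and the term $\sum_{j=1}^p|\beta(\mathbf t_j)|^2\mu(A_j)$ are constant, so $\widehat\beta_n$ must minimize $I_m^2$ over that set --- otherwise replacing $\widehat\beta_n$ by a competing interpolant with smaller $I_m^2$ would strictly decrease $F$. By the classical variational theory of thin-plate splines \citep{Wahba:1990,Green:1994}, when $\{\mathbf t_j\}_{j=1}^p$ contains a $\mathcal P_m$-unisolvent set the function of minimal $I_m^2$ interpolating prescribed values at $\{\mathbf t_j\}_{j=1}^p$ exists, is unique, and is exactly the natural thin-plate spline of order $m$ with knots $\{\mathbf t_j\}_{j=1}^p$; uniqueness is precisely where unisolvence enters, since two such minimizers differ by an element of $\mathcal P_m$ vanishing at all the $\mathbf t_j$. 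Consequently $\widehat\beta_n$ coincides with this natural thin-plate spline. (The standing hypothesis $2m>d$ is used throughout, both to make $\eta_{m,d}$ in \eqref{eq:TP} well defined and to ensure that point evaluation is a continuous functional on $\mathcal H^m$ over bounded sets.)

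\mypar{Main obstacle.} The delicate step is the existence claim. The penalty $J_m^2$ is assembled from a seminorm with the nontrivial null space $\mathcal P_m$, the minimization runs over the unbounded domain $\mathbbm R^d$, and --- unlike in the representation statement --- no unisolvence is assumed, so $J_m$ need not be a genuine norm and the coercivity required to extract a convergent minimizing subsequence is not automatic. The way around this is to quotient out the design-invisible polynomials $N$, which leaves $F$ untouched, and then to combine the Poincar\'e inequality on $\mathcal I$ with the compact embedding $W^{m,2}(\mathcal I)\hookrightarrow C(\overline{\mathcal I})$, both of which rely on \ref{A1} and $2m>d$; the relevant quantitative estimates for thin-plate splines under exactly these assumptions trace back to \citet{Utr:1988}. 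Condition \ref{A2} plays the auxiliary role of ensuring that, for $n$ and $p$ large, the design points are spread out enough to eventually contain a $\mathcal P_m$-unisolvent set, so that neither the above reduction nor the unisolvence hypothesis of the second part is vacuous.
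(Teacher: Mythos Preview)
Your characterization step matches the paper's almost exactly: freeze the fitted values $\widehat\beta_n(\mathbf t_j)$, observe that the loss and the discrete part of $J_m^2$ are then constant, so $\widehat\beta_n$ must minimize $I_m^2$ among interpolants, and invoke Duchon's result that the unique such minimizer (under $\mathcal P_m$-unisolvence) is the natural thin-plate spline.

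For existence your route is genuinely different. The paper does not run a direct compactness argument; instead it verifies the single hypothesis of an abstract existence theorem \citep[Theorem~2.9]{Gu:2013}, namely that the null space of $J_m$ is $\{0\}$. That is where the paper actually uses \ref{A1} and \ref{A2}: it invokes the discrete Poincar\'e-type inequality of \citet[Theorem~3.4]{Utr:1988},
\[
\int_{\mathcal I}|g|^2 \le \frac{C_0}{p}\sum_{j=1}^p |g(\mathbf t_j)|^2 + C_0 B_1^{2m}\, I_m^2(g|_{\mathcal I}),
\]
so $J_m(g)=0$ forces $g|_{\mathcal I}=0$; since $I_m(g)=0$ makes $g$ a polynomial, $g$ vanishes on an open set and hence everywhere. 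Your approach---quotient out the design-invisible polynomials $N$, bound a minimizing sequence in $W^{m,2}(\mathcal I)$ via Deny--Lions, and pass to the limit using the compact embedding into $C(\overline{\mathcal I})$ together with weak lower semicontinuity of $I_m$ on $\mathcal H^m(\mathbbm R^d)/\mathcal P_m$---is more self-contained and, as you implicitly notice, does not seem to need \ref{A2} for existence at all. What the paper's route buys is brevity: once the null space is trivial, existence is a one-line citation. What yours buys is that it works even when $J_m$ has a nontrivial kernel, at the price of a longer argument and some care in stitching together the limit on $\mathcal I$ with the weak limit of the $m$-th derivatives on $\mathbbm R^d$.

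One small correction: the role you assign to \ref{A2} (``ensuring \ldots\ a $\mathcal P_m$-unisolvent set'') is not how the paper uses it. In the paper \ref{A2} enters only through the Utreras inequality above, i.e.\ to make $J_m$ a genuine norm; unisolvence is a separate, explicitly stated hypothesis for the second assertion. Also, your handling of $\alpha^{(k)}$ (``a non-constant convex $\rho$ is unbounded'') is exactly as loose as the paper's own one-line dismissal of $\alpha$, so no harm done, but for asymmetric $\rho$ such as the quantile loss it deserves a word more.
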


The condition $2m>d$ in the statement of Proposition~\ref{Prop:1} is both necessary and sufficient for $\mathcal{H}^{m}(\mathbbm{R}^d)$ to be a reproducing kernel Hilbert space of functions wherein point evaluation is well-defined and therefore this condition cannot be weakened. This condition is satisfied for all integers $m \geq 1$ when $d=1$, but for larger $d$ it may require us to increase $m$ accordingly, e.g., $m>2$ whenever $d = 2$. The additional condition that $\{\mathbf{t}_j\}_{j=1}^p$ contains a $\mathcal{P}_{m}$-unisolvent set is a general condition requiring that there exists a subset of $\{\mathbf{t}_j\}_{j=1}^p$ that allows for unique polynomial interpolation. Equivalently, the $p \times M$ matrix $\boldsymbol{\Phi}$ with elements $\phi_j(\mathbf{t}_i)$, $i = 1, \dots, p$, $j=1,\dots,M$, needs to have full column rank. In the unidimensional case $d=1$ it is easy to show that this condition is satisfied if the points $\{t_j\}_{j=1}^p$ are distinct and $p \geq m$, which is a classical condition for the existence of smoothing spline estimators \citep[see, e.g.][Theorem 2.3]{Green:1994}.  Lastly, we remark that although Proposition~\ref{Prop:1} does not establish the uniqueness of the minimizers, uniqueness can be established rather easily whenever $\rho$ is a strictly convex loss function, such as the square or logistic losses. 

\begin{cor}
\label{Cor:1}
Suppose that the condititions of Proposition~\ref{Prop:1} are satisfied and that $\rho$ is a strictly convex function. Then, the solution $(\widehat{\alpha}_n, \widehat{\beta}_n)$ to \eqref{eq:Est} in $\mathbbm{R} \times \mathcal{H}^m(\mathbbm{R}^d)$ is unique.
\end{cor}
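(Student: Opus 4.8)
The plan is to combine the convexity of the objective in \eqref{eq:Est}, the strict convexity of $\rho$, and the particular structure of the penalty $J_m^2$ in \eqref{eq:Pen}. Denote the objective by $L(\alpha,\beta)$ and, for a candidate pair $(\alpha,\beta)\in\mathbbm{R}\times\mathcal{H}^m(\mathbbm{R}^d)$, write $r_i(\alpha,\beta)=\widehat\sigma_n^{-1}\bigl(Y_i-\alpha-\sum_{j=1}^p X_i(\mathbf{t}_j)\beta(\mathbf{t}_j)\mu(A_j)\bigr)$ for the $i$-th scaled residual (we take $\widehat\sigma_n>0$ as given). Each $r_i$ is an affine function of $\bigl(\alpha,(\beta(\mathbf{t}_j))_{j=1}^p\bigr)$, so $(\alpha,\beta)\mapsto\rho(r_i(\alpha,\beta))$ is convex; moreover $J_m^2$ is the square of the seminorm induced by the semi-inner product $\langle f,g\rangle_J=\sum_{j=1}^p f(\mathbf{t}_j)g(\mathbf{t}_j)\mu(A_j)+\langle f,g\rangle_{I_m}$, with $\langle\cdot,\cdot\rangle_{I_m}$ the bilinear form associated with $I_m^2$ in \eqref{eq:Sob}, and hence $J_m^2$ obeys the parallelogram identity $J_m^2\bigl(\tfrac{f+g}{2}\bigr)=\tfrac12 J_m^2(f)+\tfrac12 J_m^2(g)-\tfrac14 J_m^2(f-g)$. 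Consequently $L$ is convex in $(\alpha,\beta)$.

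Next I would take two minimizers $(\alpha_1,\beta_1)$ and $(\alpha_2,\beta_2)$ --- at least one exists by Proposition~\ref{Prop:1} --- and consider their midpoint $(\bar\alpha,\bar\beta)$. Since $L$ is convex and both endpoints attain the minimum, so does $(\bar\alpha,\bar\beta)$, and this forces every inequality used to bound $L(\bar\alpha,\bar\beta)\le\tfrac12 L(\alpha_1,\beta_1)+\tfrac12 L(\alpha_2,\beta_2)$ to hold with equality. From the equality $\rho\bigl(\tfrac{r_i(\alpha_1,\beta_1)+r_i(\alpha_2,\beta_2)}{2}\bigr)=\tfrac12\rho(r_i(\alpha_1,\beta_1))+\tfrac12\rho(r_i(\alpha_2,\beta_2))$ and the strict convexity of $\rho$, I obtain $r_i(\alpha_1,\beta_1)=r_i(\alpha_2,\beta_2)$ for every $i=1,\dots,n$; from equality in the parallelogram identity I obtain $J_m^2(\beta_1-\beta_2)=0$.

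Because $\mu(A_j)>0$ for every $j$, the relation $J_m^2(\beta_1-\beta_2)=0$ together with \eqref{eq:Pen} gives both $\beta_1(\mathbf{t}_j)=\beta_2(\mathbf{t}_j)$ for all $j=1,\dots,p$ and $I_m(\beta_1-\beta_2)=0$; the latter means $\beta_1-\beta_2$ coincides with a polynomial of total degree less than $m$ (the null space of $I_m$ on $\mathcal{H}^m(\mathbbm{R}^d)$, which consists of genuine polynomials since $2m>d$ makes point evaluation continuous). Substituting $\beta_1(\mathbf{t}_j)=\beta_2(\mathbf{t}_j)$ into $r_i(\alpha_1,\beta_1)=r_i(\alpha_2,\beta_2)$ cancels all $\beta$-terms and leaves $\alpha_1=\alpha_2$. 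Finally, $\beta_1-\beta_2$ is a polynomial of total degree less than $m$ vanishing at all of $\{\mathbf{t}_j\}_{j=1}^p$, hence at the $\mathcal{P}_m$-unisolvent subset those points contain; by the definition of unisolvency it is the zero polynomial, so $\beta_1=\beta_2$ and uniqueness follows.

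I expect the genuinely delicate point to be the concluding step rather than the routine convexity bookkeeping: strict convexity of $\rho$ alone cannot yield uniqueness, because the data-fidelity term sees $\beta$ only through the finitely many values $\beta(\mathbf{t}_j)$ and is flat in infinitely many directions, while $I_m^2$ is flat on the entire space of polynomials of degree less than $m$. What makes the estimator identifiable is the interplay between the pointwise part $\sum_{j=1}^p|\beta(\mathbf{t}_j)|^2\mu(A_j)$ of $J_m^2$, which pins down the knot values, and the $\mathcal{P}_m$-unisolvency hypothesis, which upgrades ``equal at the knots and equal $m$-th order roughness'' to ``equal as functions in $\mathcal{H}^m(\mathbbm{R}^d)$''. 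It is worth flagging that this is precisely where the extra term in the modified thin-plate penalty \eqref{eq:Pen} earns its keep, paralleling the role it plays in the existence statement of Proposition~\ref{Prop:1}.
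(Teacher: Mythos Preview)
Your argument is correct and follows the same midpoint/strict-convexity skeleton as the paper, but you fill in details the paper leaves implicit. The paper's proof simply asserts that for two distinct minimizers the strict convexity of $\rho$ forces $L_n$ at the midpoint to be strictly below the average, a contradiction; it does not separately treat the penalty. You correctly flag that strict convexity of $\rho$ alone cannot do this (the data term sees $\beta$ only through the finitely many values $\beta(\mathbf{t}_j)$), and you handle the penalty via the parallelogram identity to obtain $J_m^2(\beta_1-\beta_2)=0$. The one place your route genuinely differs is the last step: you split $J_m^2(\beta_1-\beta_2)=0$ into equal knot values plus $I_m(\beta_1-\beta_2)=0$ and then invoke $\mathcal{P}_m$-unisolvency to kill the residual polynomial. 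A shorter path is already on the table: Step~1 of the paper's proof of Proposition~\ref{Prop:1} establishes (via Utreras's inequality and the Sobolev embedding, using only \ref{A1}, \ref{A2} and $2m>d$) that the null space of $J_m$ on $\mathcal{H}^m(\mathbbm{R}^d)$ is $\{0\}$, so $J_m(\beta_1-\beta_2)=0$ yields $\beta_1=\beta_2$ immediately. Your unisolvency argument is valid and perhaps more elementary, but it re-derives a special case of that fact and leans on the additional unisolvency hypothesis, which in Proposition~\ref{Prop:1} is invoked only for the thin-plate characterization, not for existence.
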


It is interesting to relate the penalty functional $J_m$ employed herein with the penalty functional earlier used by \citet{Crambes:2009} in the specific least-squares case, i.e., $\rho(x) = x^2$. In particular, for the one-dimensional case, \citet{Crambes:2009} proposed using
\begin{align*}
\widetilde{J}_m^{2}(\beta) = \frac{1}{p} \sum_{j=1}^p (\mathcal{P}\beta)^2(t_j) + \int_{0}^1|\beta^{(m)}(t)|^2 \dd t,
\end{align*}
where $\mathcal{P}$ is the projection operator onto the space of discretized order $m$ polynomials, that is, the polynomials of order $m$ evaluated only at $\{t_j\}_{j=1}^p$. Note that for $d = 1$, $I_m^2(\beta) = \int_{\mathbbm{R}} |
\beta^{(m)}(t)|^2 \dd t$ but for $\mathcal{I} = (0,1)$, as is assumed by \citet{Crambes:2009}, it can be shown \citep[see, e.g.,][Theorem 2.3]{Green:1994} that we can reduce the range of integration from $\mathbbm{R}$ to $(0,1)$ without loss of generality. This simplification is not valid for $d>1$. For $\mathcal{I}= (0,1)$ it is easy to see that condition \ref{A1} is satisfied while assumption \ref{A2} is satisfied, e.g., if $t_j = (j-1/2)/p$ in which case $h_{\max} \lesssim p^{-1}$ and $h_{\min} = p^{-1}$. Although seemingly different, $J_m(\beta)$ and $\widetilde{J}_m(\beta)$ turn out to be equivalent semi-norms, as Proposition~\ref{Prop:2} shows. 

\begin{prop}
\label{Prop:2}
Suppose that $\mathcal{I} = (0,1)$. Then, there exist $0<c_1 \leq c_2 < \infty $ with the property that $c_1 \widetilde{J}_m(\beta) \leq J_m(\beta) \leq c_2 \widetilde{J}_m(\beta)$ for all $\beta \in \mathcal{H}^m(\mathbbm{R})$.
\end{prop}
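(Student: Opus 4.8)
The plan is to rewrite both semi-norms in the common shape ``(a quadratic form in the sampled values $\beta(t_1),\dots,\beta(t_p)$) $+$ (the squared $L^2(0,1)$-norm of $\beta^{(m)}$)'' and then compare the two quadratic forms. First I would set $R(\beta)=\int_0^1|\beta^{(m)}(t)|^2\dd t$ and $\mathbf b=\mathbf b(\beta)=(\beta(t_1),\dots,\beta(t_p))^\top\in\mathbbm{R}^p$, noting that $\beta$ restricted to $(0,1)$ lies in the Sobolev space $H^m(0,1)$ since $\beta^{(m)}\in L^2$ and $\beta,\dots,\beta^{(m-1)}$ are continuous. Using the reduction of the range of integration from $\mathbbm{R}$ to $(0,1)$ noted just before the proposition, on $\mathcal I=(0,1)$ one identifies $I_m^2(\beta)$ with $R(\beta)$, so the roughness terms of $J_m$ and of $\widetilde J_m$ coincide. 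Next I would invoke \ref{A2} to conclude that the design is quasi-uniform, so that for the natural partition (the Voronoi partition, or $A_j=((j-1)/p,\,j/p)$ in the equispaced case) one has $\mu(A_j)\asymp p^{-1}$ uniformly in $j$, whence $\sum_{j=1}^p|\beta(t_j)|^2\mu(A_j)\asymp p^{-1}\|\mathbf b\|^2$. After this bookkeeping the claim reduces to $p^{-1}\|\mathbf b\|^2+R(\beta)\asymp p^{-1}\|\mathcal P\mathbf b\|^2+R(\beta)$ with constants depending on neither $n$, $p$, nor $\beta$.

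One direction is immediate. The operator $\mathcal P$ is the Euclidean orthogonal projection of $\mathbbm{R}^p$ onto the $m$-dimensional subspace spanned by the sampled polynomials $\{\phi_k\}_{k=1}^M$ with $M=m$ (genuinely $m$-dimensional since the $t_j$ are distinct and $p\ge m$), so $\|\mathcal P\mathbf b\|\le\|\mathbf b\|$; hence $\widetilde J_m^2(\beta)\lesssim J_m^2(\beta)$, which gives the lower bound $c_1\widetilde J_m(\beta)\le J_m(\beta)$.

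For the reverse inequality I would use Pythagoras, $\|\mathbf b\|^2=\|\mathcal P\mathbf b\|^2+\|\mathbf b-\mathcal P\mathbf b\|^2$, so that it remains to establish $p^{-1}\|\mathbf b-\mathcal P\mathbf b\|^2\lesssim R(\beta)$. Since $\mathcal P\mathbf b$ is the least-squares best approximation of $\mathbf b$ by sampled polynomials of degree less than $m$, for \emph{every} polynomial $q$ of degree less than $m$ we have $p^{-1}\|\mathbf b-\mathcal P\mathbf b\|^2\le p^{-1}\sum_{j=1}^p|\beta(t_j)-q(t_j)|^2\le\|\beta-q\|_{L^\infty(0,1)}^2$, so it suffices to produce a single good polynomial. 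Here I would invoke a Bramble--Hilbert estimate: there is a polynomial $q_\beta$ of degree less than $m$ and a constant $C_m$ depending only on $m$ with $\|\beta-q_\beta\|_{L^\infty(0,1)}\le C_m\,R(\beta)^{1/2}$. This follows from the Sobolev embedding $H^m(0,1)\hookrightarrow C([0,1])$ together with the Deny--Lions fact that $\beta\mapsto\|\beta^{(m)}\|_{L^2(0,1)}$ induces a norm on the quotient of $H^m(0,1)$ by the polynomials of degree less than $m$; alternatively, one may take $q_\beta$ to be the averaged Taylor polynomial of $\beta$ and bound the Peano remainder $\beta(t)-q_\beta(t)=\int_0^1 K_m(t,s)\,\beta^{(m)}(s)\dd s$ by Cauchy--Schwarz, reading $C_m$ off from $\sup_t\|K_m(t,\cdot)\|_{L^2(0,1)}$. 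Combining the two directions and undoing the reductions of the first paragraph then yields the equivalence with some $0<c_1\le c_2<\infty$.

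The step I expect to be the main obstacle is precisely $p^{-1}\|\mathbf b-\mathcal P\mathbf b\|^2\lesssim R(\beta)$ with a constant free of $p$: this is where the smoothness carried by $\beta^{(m)}\in L^2$ must be converted into control of the discrete polynomial-residual vector. It goes through because (i) quasi-uniformity, $\mu(A_j)\asymp p^{-1}$, lets one pass from the average of $|\beta-q_\beta|^2$ over the $p$ design points to its supremum over $(0,1)$ at only $O(1)$ cost, and (ii) the Bramble--Hilbert constant $C_m$ depends only on $m$, hence neither on $p$ nor on the design; the remaining care is merely to verify that $c_1$ and $c_2$ can be chosen uniformly over the admissible quasi-uniform designs and partitions, which the argument delivers.
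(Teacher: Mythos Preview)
Your proposal is correct and follows essentially the same approach as the paper: projection contraction gives $c_1=1$, and for the reverse inequality the paper also uses $\|\beta-\mathcal P\beta\|_p^2\le\|\beta-v\|_p^2$ for any degree-$(m-1)$ polynomial $v$ and then takes $v$ to be the Taylor polynomial of $\beta$ about $0$, bounding the Peano remainder $\int_0^1 (t_j-u)_+^{m-1}/(m-1)!\,\beta^{(m)}(u)\,du$ via Cauchy--Schwarz---exactly the concrete alternative you list. The only cosmetic difference is that the paper works directly with the discrete $\|\cdot\|_p$-norm of the remainder rather than passing through $L^\infty$, and it tacitly takes $\mu(A_j)=1/p$ rather than invoking \ref{A2}.
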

\noindent
Hence, for $d = 1$ it does not make a difference whether one employs $J_m(\beta)$ or $\widetilde{J}_m(\beta)$, as these penalty functionals are qualitatively similar and share the same null space, which, by the proof of Proposition~\ref{Prop:1} in the Appendix, consists solely of the zero function. The advantage of our penalty, however, is that it avoids the computation of the projection onto the discretized polynomials so that the associated computational burden is reduced. For higher dimensions the computational simplicity of our penalty functional is even more appealing as in that case the projection $\mathcal P$ would have to be on a space of polynomials whose dimension grows essentially like $m^{d}/(d!)$.

\section{Asymptotic Properties}
\label{sec:3}
\subsection{Thin-Plate Splines without Scale Estimation}

To simplify our notation we shall assume in this section that the response variables are centered so that $\alpha_0 = 0$ and the object of interest is the coefficient function $\beta_0$, as is typically the case. Furthermore, we also assume for simplicity that $\mu(A_j) = \mu(\mathcal{I})/p$, i.e., $\mathcal{I}$ is partitioned into $p$ sets of equal volume.  We begin our study of the asymptotic properties of our estimators by first examining the asymptotic behavior of thin-plate spline estimators that do not require standardization with a scale $\widehat\sigma_n$. In other words we can set $\widehat\sigma_n = 1$ in \eqref{eq:Est}. As noted in Section~\ref{sec:2}, the $L_1$ estimator with $\rho(x) = |x| $ belongs to this class of estimators and so does its quantile generalization with $\rho_{\tau}(x) = 2x(\tau-I(x<0))$ for $\tau \in (0,1)$. Symmetry of the loss function is not required in our treatment, hence this important class of estimators is also covered by our theory.

Our aim is to establish the rate of convergence of our estimators as a function of the sample size and the discretization error with respect 
to the distance given by
\begin{align}
\label{eq:np norm}
    \left\|\widehat\beta_n - \beta_0 \right\|_n^2 = \frac{1}{n} \sum_{i=1}^n \left|\langle X_i, \widehat\beta_n - \beta_0 \rangle\right|^2,
\end{align}
where $\langle \cdot, \cdot \rangle$ denotes the standard $\mathcal{L}^2(\mathcal{I})$ inner product. This distance is related to the centered empirical covariance operator $\Gamma_n(\beta) = \sum_{i=1}^n X_i \langle X_i, \beta \rangle/n$ of a process $X$ through $\|\widehat\beta_n - \beta_0 \|_n^2 = \langle \Gamma_n(\widehat\beta_n - \beta_0), \widehat\beta_n - \beta_0 \rangle$. It bears the intuitive interpretation of the mean squared error resulting from using $\langle X_i, \widehat\beta_n \rangle$  to predict $\mathbbm{E}\{Y_i|X_i\} = \langle X_i, \beta_0\rangle$ for $i = 1,\ldots, n$. As a first step in our analysis, we establish a rate of convergence with respect to the discretized variant of \eqref{eq:np norm} given by
\begin{align}
\label{eq:npd norm}
    \left\|\widehat\beta_n - \beta_0 \right\|_{n,p}^2 = \frac{\mu(\mathcal I)}{np^2} \sum_{i=1}^n \left| \sum_{j=1}^p X_i(\mathbf{t}_j)\left(\widehat\beta_n(\mathbf{t}_j) - \beta_0(\mathbf{t}_j) \right)\right|^2.
\end{align}
Clearly, \eqref{eq:npd norm} may be viewed as a Riemann approximation to \eqref{eq:np norm} based on the discretization points $\{\mathbf{t}_j\}_{j=1}^p$ and partition sets $\{A_j\}_{j=1}^p$ with $\mu(A_j) = \mu(\mathcal I)/p$. The assumptions that we need for our theoretical development are as follows.

\begin{enumerate}[label=(A\arabic*), ref=(A\arabic*), start=3]
\item \label{A3} There exist a $t \in (0,1)$ such that $\mathbb{E}\{e^{t \|X\|_{\infty}^2} \}<\infty$ and a  $C_1>0$ and $\kappa \in (0,1]$ such that
\begin{align*}
\mathbb{E}\left\{ \left|X\left(\mathbf{t}\right) - X\left(\mathbf{s}\right) \right|^2 \right\} \leq C_1\left\|\mathbf{t}-\mathbf{s} \right\|^{2\kappa}_{\mathbbm{R}^d},
\end{align*}
for all $\mathbf{t},\mathbf{s} \in \mathcal{I}$.
\item \label{A4} The loss function $\rho: \mathbbm{R} \to \mathbbm{R}_{+}$ is convex and satisfies a Lipschitz condition, i.e., there exists a $C_2>0$ such that
\begin{align*}
|\rho(x)-\rho(y)| \leq C_2|x-y|, \quad \forall (x,y) \in \mathbbm{R}^2.
\end{align*}
\item \label{A5} The function $g(t) := \mathbb{E}\{\rho(\epsilon_1+t)\}$ is uniquely minimized at $t = 0$ and is twice differentiable with a uniformly bounded second derivative. 
\item \label{A6} There exists an $\varepsilon>0$ such that for all $|t| \leq \varepsilon$, $g(t) - g(0) \geq \varepsilon \, t^2$. 

\end{enumerate}

Assumption \ref{A3} concerns the distribution and the sample paths of the functional covariate $X$.  We require that $X$ is a light-tailed but not necessarily smooth process. The requirement of the existence of a squared exponential moment of $\|X\|_{\infty}$ is clearly satisfied by bounded processes but it is also satisfied by the vast majority of Gaussian processes \citep[see, e.g.][Theorem 2.1.2]{adler2007random}. Thus, \ref{A3} is a considerable generalization of the boundedness requirement of \citep{Boente:2020, Kalogridis:2023} for their respective robust functional regression estimators. The required H\"{o}lder-continuity in mean-square implies that $X$ has a modification with almost surely continuous sample paths.  Beyond continuity, however, the criterion does not imply any higher order smoothness of the sample paths. For example, if $X$ is the Wiener process on $\mathcal{I} = (0,1)$, it is well-known that $\mathbb{E}\{|X(t)|^2\} = t$ and $\mathbb{E}\{X(t) X(s) \} = \min(t,s)$ for all $t, s \in (0,1)$. Hence, \ref{A3} is satisfied with $\kappa = 1/2$, although the Wiener process is well-known to possess sample paths that are nowhere differentiable.

Assumption \ref{A4} is standard in robust regression \citep[see, e.g.,][]{vandeGeer:2000} and is satisfied by many loss functions including the quantile loss. It should be noted that \ref{A4} is the only assumption that is placed directly on the loss function, as \ref{A5} and \ref{A6} are in effect placed on the distribution of the errors. Specifically, in \ref{A5} we require that the expectation $\mathbb{E}\{\rho(\epsilon_1+t)\}$ is uniquely minimized at zero and is smooth as a function of $t$. The former is required for identifiability while the latter for convenience in the proofs. It is also worth noting that by requiring $\mathbb{E}\{\rho(\epsilon_1+t)\}$ to be smooth instead of $\rho$ itself to be smooth, we are effectively allowing smoothness to be traded between $\rho$ and the distribution of the error, thereby permitting even non-smooth $\rho$-functions within our theoretical framework. It is also possible, but more notationally cumbersome, to relax the i.i.d. assumption and instead consider merely independent errors in \eqref{eq:FLMI}. In this case, \ref{A5} and \ref{A6} would need to hold uniformly for all $\epsilon_i$. 

Both parts of \ref{A5} are satisfied quite generally. For instance, for $\rho(x) = |x|$, \ref{A5} is satisfied if the errors have unique median at zero. Furthermore, as for this $\rho$,
\begin{align}
\label{eq:10}
\mathbb{E} \left\{\rho\left(\epsilon_1+t\right) -\rho\left(\epsilon_1\right) \right\} = 2 \int_{0}^{-t} F(x) \dd x + t,
\end{align}
it is easy to see that the second part of \ref{A5} is satisfied if the distribution function of the error $F(x) = \mathbbm{P}(\epsilon_1 \leq x)$ is differentiable with a bounded derivative (density) $f = F^{\prime}$. This assumption is less restrictive than the corresponding assumption in \citet{Kato:2012} (see assumption (A5) there), as we do not require the density to be differentiable. Similarly to \ref{A5}, assumption \ref{A6} is satisfied quite generally. Continuing with the $L_1$ loss, it is easy to see that for all small $|t|$ expression \eqref{eq:10} becomes 
\begin{align*}
g(t) - g(0) = \mathbb{E} \left\{\rho\left(\epsilon_1+t\right) -\rho\left(\epsilon_1\right) \right\} =  f(-t)t^2 + o(t^2),
\end{align*}
so that \ref{A6} is ensured if $f$ is strictly positive in a neighborhood about zero. These observations generalize straightforwardly to quantile estimation with $\tau \in (0,1)$. 

With assumptions \ref{A1}--\ref{A6} in place we may establish a rate of convergence for the minimizer $\widehat{\beta}_n$ in \eqref{eq:Est}, whose existence is ensured by Proposition~\ref{Prop:1}. In the statement of Theorem~\ref{thm:1} below we use $\diam(A)$ to denote the diameter of a set $A \subset \mathbbm{R}^d$, that is, $\diam(A) = \sup_{\mathbf{x},\mathbf{y} \in A} \|\mathbf{x}-\mathbf{y}\|_{\mathbbm{R}^d}$. 

\begin{theorem}
\label{thm:1}
Assume that \ref{A1}--\ref{A6} hold, $2m>d+1$ and $\lambda$ satisfies
\begin{enumerate}[label=$(\roman*)$, ref=$(\roman*)$]
    \item \label{i} $\lambda  = \log^{\frac{3}{2}}(n) O_\mathbb{P}(n^{-\frac{2m}{2m+d}} +\max_{1 \leq j \leq p} \diam^{2\kappa}(A_j))$.
    \item \label{ii} $ \lambda^{-1} \log^{\frac{3}{2}}(n) \left( n^{-\frac{2m}{2m+d}} +  \max_{1 \leq j \leq p} \diam^{2\kappa}(A_j) \right) = O_{\mathbb{P}}(1)$.
\end{enumerate}
Then, for $\widehat{\beta}_n$ the solution to \eqref{eq:Est} with $\widehat\sigma_n = 1$, we have
\begin{align*}
\left\|\widehat{\beta}_n - \beta_0 \right\|_{n,p} = O_{\mathbb{P}} \left( \log^2(n) \left\{ n^{-\frac{m}{2m+d}} +  \max_{1 \leq j \leq p} \diam^{\kappa}(A_j) \right\}\right) \quad \text{and} \quad J_m(\widehat{\beta}_n) = O_{\mathbb{P}}(1).
\end{align*}
\end{theorem}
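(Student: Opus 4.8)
\smallskip\noindent\textbf{\emph{Overall plan.}} This is a classical penalized M-estimation rate argument, adapted to the discretized norm $\|\cdot\|_{n,p}$ and the penalty $J_m$. The plan is to exploit convexity of $\rho$ to reduce the problem to controlling the empirical process of $\rho$-increments on a shrinking neighborhood, then to combine a basic inequality coming from the definition of $\widehat\beta_n$ as a minimizer with (i) a lower bound on the expected criterion (the ``margin'' condition \ref{A6}), (ii) an upper bound on the stochastic fluctuation term via a peeling/entropy argument, and (iii) an approximation-theoretic bound that controls how well the true $\beta_0$ is reproduced by a natural thin-plate spline. I would work with the recentered criterion
$
M_n(\beta) = \frac1n\sum_{i=1}^n \bigl\{\rho(Y_i - \ell_{i,p}(\beta)) - \rho(\epsilon_i)\bigr\} + \lambda J_m^2(\beta),
$
where $\ell_{i,p}(\beta) = \sum_{j=1}^p X_i(\mathbf{t}_j)\beta(\mathbf{t}_j)\mu(A_j)$, so that $M_n(\widehat\beta_n) \le M_n(\beta_0^\ast)$ for any competitor $\beta_0^\ast$; the natural choice of competitor is a smoothing-spline-type approximation $\beta_0^\ast$ to $\beta_0$ lying in the span of the thin-plate spline basis, whose existence and approximation error at rate $h_{\max}^m \asymp p^{-m/d}$ (together with $J_m(\beta_0^\ast) = O(1)$) follow from standard thin-plate spline approximation theory under \ref{A1}--\ref{A2}; one then absorbs the bias $|\ell_{i,p}(\beta_0^\ast) - \ell_{i,p}(\beta_0)|$ and the discretization error $d_i$ into the $\max_j \diam^{\kappa}(A_j)$ term using \ref{A3} and the Lipschitz property \ref{A4}.

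\smallskip\noindent\textbf{\emph{Key steps.}} First I would show that it suffices to bound $\|\widehat\beta_n - \beta_0^\ast\|_{n,p}$ and $J_m(\widehat\beta_n)$, since the triangle inequality in $\|\cdot\|_{n,p}$ and the bias bound then give the stated conclusion for $\|\widehat\beta_n - \beta_0\|_{n,p}$. Second, by convexity, if the estimator did not satisfy the claimed rate we could move along the segment from $\widehat\beta_n$ to $\beta_0^\ast$ and land on the boundary of a ball $\{\beta : \|\beta - \beta_0^\ast\|_{n,p} \le \delta_n, J_m(\beta - \beta_0^\ast) \le K\}$ for a suitable radius $\delta_n = \log^2(n)(n^{-m/(2m+d)} + \max_j\diam^\kappa(A_j))$ and a large constant $K$, while still satisfying the basic inequality; so it is enough to derive a contradiction on this boundary. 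Third, on that boundary I would lower-bound the deterministic part $\mathbb{E}[M_n(\beta) \mid X_1,\dots,X_n]$ using \ref{A5}--\ref{A6}: a second-order Taylor expansion of $t \mapsto \mathbb{E}\{\rho(\epsilon_1+t)\}$ at $0$, valid because the increments $\ell_{i,p}(\beta) - \ell_{i,p}(\beta_0)$ are uniformly small on the ball (here the squared-exponential moment in \ref{A3} plus the entropy bound on the spline ball controls $\max_i |\ell_{i,p}(\beta) - \ell_{i,p}(\beta_0)|$), yields a quadratic lower bound $\gtrsim \|\beta - \beta_0\|_{n,p}^2$. Fourth, the stochastic part $(M_n - \mathbb{E}[M_n\mid X])(\beta) - (M_n - \mathbb{E}[M_n\mid X])(\beta_0^\ast)$ is an empirical process indexed by the spline ball; using \ref{A4} (Lipschitz in $\rho$), a contraction inequality, and the metric entropy of balls in $\mathcal{H}^m$ under $\|\cdot\|_{n,p}$ (which behaves like a Sobolev ball, entropy $\asymp (K/u)^{d/m}$), a chaining/peeling argument bounds this term, modulo logarithmic factors and a factor depending on $\max_i \|X_i\|_\infty$ (controlled by \ref{A3}), by $O_\mathbb{P}(\log^{3/2}(n)\, n^{-m/(2m+d)} (\|\beta - \beta_0^\ast\|_{n,p}^{1 - d/(2m)} + \cdots))$ plus a penalty-dependent remainder. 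Fifth, I would plug these two bounds into the basic inequality together with conditions \ref{i}--\ref{ii} on $\lambda$ — which are precisely tuned so that $\lambda J_m^2$ dominates the $J_m$-dependent error terms yet $\lambda J_m^2(\beta_0^\ast) = O(\delta_n^2)$ — and solve the resulting quadratic-type inequality in $\|\widehat\beta_n - \beta_0^\ast\|_{n,p}$ to reach the contradiction, simultaneously reading off $J_m(\widehat\beta_n) = O_\mathbb{P}(1)$ from the penalty term.

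\smallskip\noindent\textbf{\emph{Main obstacle.}} The delicate point is the empirical-process bound in the discretized norm $\|\cdot\|_{n,p}$ rather than the population $\mathcal{L}^2$ norm: one must show the metric entropy of a $J_m$-ball, measured in $\|\cdot\|_{n,p}$, is with high probability comparable to that of a genuine Sobolev ball, which requires transferring from the (random, discretized) seminorm to the Sobolev seminorm $I_m$ using the embedding $\mathcal{H}^m(\mathbb{R}^d)\hookrightarrow C(\overline{\mathcal I})$ (valid since $2m>d$), the density condition \ref{A2} controlling the thin-plate interpolation operator, and the Hölder-in-mean-square condition \ref{A3} to compare $\|X_i(\cdot)\beta(\cdot)\|$ with its Riemann sum uniformly over the ball. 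Keeping the $\max_j\diam^{\kappa}(A_j)$ discretization term and the $n^{-m/(2m+d)}$ statistical term correctly separated throughout — and verifying that the $\log$ powers in \ref{i}--\ref{ii} and in the conclusion are exactly what the squared-exponential moment in \ref{A3} buys in the maximal inequalities for $\max_i \|X_i\|_\infty$ — is where most of the care is needed; the rest is the standard convex M-estimation template.
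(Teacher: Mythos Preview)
Your overall template — basic inequality from minimization, quadratic lower bound via \ref{A5}--\ref{A6}, empirical-process upper bound via Sobolev entropy, then balance — matches the paper's architecture. But two of the load-bearing steps are not quite right, and one ingredient is superfluous.

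\textbf{The margin step does not work as you describe.} You claim the increments $\langle X_i,\beta_0-\beta\rangle_p$ are ``uniformly small on the ball'' so that the local quadratic condition \ref{A6} applies. They are not: on the set $\{J_m(\beta-\beta_0)\le K\}$ one only has $\max_i|\langle X_i,\beta_0-\beta\rangle_p|\le K\max_i\|X_i\|_\infty$, and under \ref{A3} this is of order $\sqrt{\log n}$, which is \emph{not} $\le\varepsilon$ as \ref{A6} requires. The paper's fix is a rescaling trick you are missing: since $\rho$ is convex with $g'(0)=0$, the excess risk $g(t)-g(0)$ is nondecreasing in $|t|$, so for any $c\ge 1$ one has $g(t)-g(0)\ge g(t/c)-g(0)$. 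Taking $c=D_\varepsilon\sqrt{\log n}\,(1+J_m(\beta))$ forces the rescaled increment below $\varepsilon$ for every $i$, at the cost of picking up a factor $c^{-2}$ in the lower bound. This is the source of the $\log(n)$ and the $(1+J_m(\widetilde\beta_n))^{2}$ in the denominator of the paper's lower bound \eqref{eq:A4}, and it drives the entire four-case balancing argument in the paper's Step~3 (which in turn needs Lemma~\ref{lem:3} to close). Your proposal attributes the $\log$ factors to the empirical-process step; in the paper they come almost entirely from this rescaling in the deterministic lower bound.

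\textbf{The empirical-process bound needs simultaneous peeling in $J_m$.} Your description of the stochastic term is close, but because the lower bound carries a $(1+J_m)^{-2}$ penalty, one cannot fix a single $J_m$-ball a priori. The paper proves a dedicated modulus-of-continuity lemma (Lemma~\ref{lem:2}) that peels \emph{both} in $\|\cdot\|_{n,p}$ and in $1+J_m(\beta)$, yielding a bound $O_{\mathbb P}(n^{-1/2})\|\beta-\beta_0\|_{n,p}^{1-d/(2m)}(1+J_m(\beta))^{d/(2m)}$ uniformly over $\mathcal H^m$. The Lipschitz constant here is deterministic ($2C_2$ from \ref{A4}), so no extra $\log$ appears in this step. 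Combining this with the lower bound produces an inequality of the form $a\le b+c+d+e$ (the paper's \eqref{eq:A11}) in which $J_m(\widetilde\beta_n)$ appears on both sides with different exponents; each of the four dominance cases yields $J_m(\widetilde\beta_n)=O_{\mathbb P}(1)$ and the stated rate, using conditions \ref{i}--\ref{ii} exactly.

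\textbf{The spline approximant $\beta_0^\ast$ is unnecessary.} Since $\beta_0\in\mathcal H^m(\mathbb R^d)$ by assumption, the paper uses $\beta_0$ itself as the competitor in the basic inequality; there is no approximation-theoretic bias term to track. The discretization error enters only through the $d_i$'s in \eqref{eq:di}, handled directly via \ref{A3} and the Sobolev embedding $\mathcal H^m\hookrightarrow C^{0,1}$ (valid because $2m>d+1$). Finally, the paper's localization is the convex-combination device $\widetilde\beta_n=\gamma_n\widehat\beta_n+(1-\gamma_n)\beta_0$ with $\gamma_n=(1+\|\widehat\beta_n-\beta_0\|_{n,p})^{-1}$, which automatically gives $\|\widetilde\beta_n-\beta_0\|_{n,p}\le 1$ without any a priori control on $\widehat\beta_n$; the rate and $J_m$-bound for $\widehat\beta_n$ are then recovered in a short final step.
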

\noindent
The side condition of the theorem $2m>d+1$ is a strengthening of our previous condition $2m>d$, but likewise, it is not too restrictive. For instance,  for both $d=1$ and $d=2$ it is fulfilled by the popular choice $m=2$ corresponding to cubic thin-plate splines. The limit conditions on $\lambda$ are more complex than those for the smoothing parameters in \citet{Crambes:2009} and \citet{Kato:2012}, but these conditions ensure that our results remain valid even if $\lambda$ is random. For deterministic $\lambda$ conditions \ref{i} and \ref{ii} simplify to
\begin{align*}
    \lambda \asymp \log^{\frac{3}{2}}(n) \left(n^{-\frac{2m}{2m+d}} +\max_{1 \leq j \leq p} \diam^{2\kappa}(A_j)\right).
\end{align*}
As noted in the introduction, the results of \citet{Crambes:2009} and \citet{Kato:2012} do not hold if the smoothing parameters of their respective methods are random. Hence, our theoretical results are novel in this respect. Theorem~\ref{Prop:1} shows that the rate of convergence for functional regression problems with discretely sampled predictors depends on both the sample size and the grid resolution, i.e., the discretization error. In particular, except for a $\log^2(n)$-term, the rate of convergence is the optimal non-parametric rate  of convergence for $d$-dimensional data \citep{stone1982optimal} plus the Riemann approximation error, which is governed by the regularity of the sample paths of $X$. We see, in particular, that processes with $\kappa$ closer to $1$, i.e., processes with smoother sample paths, will produce better rates of convergence. It can be shown that the $\log^2(n)$-term in the result of Theorem~\ref{thm:2} does not appear if $X$ is assumed bounded, as other authors have done \citep[e.g., by][]{Cardot:2007, Boente:2020}. As noted earlier, however, our assumptions are considerably more general, because they permit both bounded and unbounded processes.

Another interesting observation that emerges from the rate of convergence presented in Theorem~\ref{thm:1} is the existence of a threshold value between the sample size, $n$, and the discretization error, $\max_{j \leq p} \diam^{\kappa}(A_j)$, that determines the leading term in the asymptotic error of the estimator, namely,  $T_{n,p} = n^{m/(2m+d)} \max_{ j \leq p} \diam^{\kappa}(A_j)$. For  $T_{n,p} < 1$, that is, for relatively fast decay of the discretization error, the asymptotic error, $\|\widehat{\beta}_n - \beta_0\|_{n,p}$, is solely determined by the sample size through $n^{-m/(2m+d)}$ whereas in the case $T_{n,p} \geq 1$ it is the discretization error, $\max_{j \leq p} \diam^{\kappa}(A_j)$, that determines up to a logarithmic factor the rate of convergence. This finding provides support for the intuitive idea that as long as the discretization grid is sufficiently dense, then the discretization error becomes negligible in the limit. However, Theorem~\ref{thm:1} also demonstrates that the discretization error cannot be ignored otherwise and indeed dominates the asymptotic error whenever the discretization grid is relatively sparse.

The dimension $d$ plays an important role in the foregoing discussion, as, for larger $d$, one needs a larger sample size relative to the discretization error in order to obtain the optimal non-parametric rate of convergence $n^{-m/(2m+d)}$. This may be viewed as a manifestation of the curse of dimensionality that is often encountered in non-parametric estimation. For $d = 1$ and equispaced points $\{t_j\}_{j=1}^p$ within $\mathcal{I} = (0,1)$, the result of Theorem~\ref{thm:1} simplifies to 
\begin{align*}
\left\|\widehat{\beta}_n - \beta_0 \right\|_{n,p} = O_{\mathbb{P}} \left( \log^2(n) \left\{ n^{-\frac{m}{2m+1}} +  p^{-\kappa}\right\}\right) \quad \text{and} \quad J_m(\widehat{\beta}_n) = O_{\mathbb{P}}(1),
\end{align*}
which, except for the $\log^2(n)$-term, is the same rate of convergence obtained by \citet{Crambes:2009} for their least-squares estimator without any tail conditions on the eigenvalues of the covariance operator, that is, with $q = 0$ in the notation of those authors. It should be noted, nevertheless, that our rate of convergence holds much more broadly, that is, for a much greater collection of loss functions as well as for higher dimensions.

Theorem~\ref{thm:1} establishes not only a rate of convergence with respect to the semi-norm $\| \cdot\|_{n,p}$ but also the boundedness in probability of the penalty functional $J_m(\widehat \beta_n)$ evaluated at $\widehat{\beta}_n$. As this penalty functional consists of two terms involving both $ \sum_{j=1}^p |\widehat\beta_n(\mathbf{t}_j)|^2 \mu(A_j)$ and the integrals of the squared partial derivatives, $I_m^2(\widehat \beta _n)$, the boundedness of $J_m(\widehat \beta_n)$ is essential in extending this rate of convergence to a rate of convergence in the popular $\| \cdot \|_n$-norm given in \eqref{eq:np norm}. Under the same assumptions as in Theorem~\ref{thm:1}, we obtain the following useful corollary.

\begin{cor}
\label{cor:2}
Suppose that the conditions of Theorem~\ref{thm:1} hold. Then, $\sup_{\mathbf{t} \in \mathcal{I}}|\widehat\beta_n(\mathbf{t})| = O_{\mathbb{P}}(1)$ and
\begin{align*}
\left\|\widehat{\beta}_n - \beta_0 \right\|_{n} = O_{\mathbb{P}} \left( \log^2(n) \left\{ n^{-\frac{m}{2m+d}} +  \max_{1 \leq j \leq p} \diam^{\kappa}(A_j) \right\}\right).
\end{align*}
\end{cor}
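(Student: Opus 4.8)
The plan is to derive both conclusions from the two statements of Theorem~\ref{thm:1} --- the rate $\|\widehat\beta_n-\beta_0\|_{n,p}=O_{\mathbb{P}}(r_n)$ with $r_n:=\log^2(n)\{n^{-m/(2m+d)}+\max_{1\le j\le p}\diam^{\kappa}(A_j)\}$, and $J_m(\widehat\beta_n)=O_{\mathbb{P}}(1)$ --- combined with deterministic Sobolev-type inequalities for functions in $\mathcal{H}^m(\mathbbm{R}^d)$. By the definition of the penalty functional in \eqref{eq:Pen}, $J_m(\widehat\beta_n)=O_{\mathbb{P}}(1)$ is equivalent to having both $I_m(\widehat\beta_n)=O_{\mathbb{P}}(1)$ and $\sum_{j=1}^p|\widehat\beta_n(\mathbf{t}_j)|^2\mu(A_j)=O_{\mathbb{P}}(1)$. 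Since $2m>d$, point evaluation is continuous on $\mathcal{H}^m(\mathbbm{R}^d)$, and --- with a constant $C$ independent of $n$ and $p$ by virtue of \ref{A1} and \ref{A2}, as already established in the proof of Proposition~\ref{Prop:1} --- the scattered-data sampling inequality
\begin{align*}
\sup_{\mathbf{t}\in\mathcal{I}}|f(\mathbf{t})|^2\le C\Big(\sum_{j=1}^p|f(\mathbf{t}_j)|^2\mu(A_j)+I_m^2(f)\Big)=C\,J_m^2(f)
\end{align*}
holds for every $f\in\mathcal{H}^m(\mathbbm{R}^d)$ and every sufficiently large $p$. Taking $f=\widehat\beta_n$ yields the first assertion, $\sup_{\mathbf{t}\in\mathcal{I}}|\widehat\beta_n(\mathbf{t})|=O_{\mathbb{P}}(1)$.

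For the second assertion, set $\Delta_n:=\widehat\beta_n-\beta_0$ and isolate the quadrature error
\begin{align*}
R_i:=\langle X_i,\Delta_n\rangle-\sum_{j=1}^p X_i(\mathbf{t}_j)\Delta_n(\mathbf{t}_j)\mu(A_j)=\sum_{j=1}^p\int_{A_j}\big\{X_i(\mathbf{t})\Delta_n(\mathbf{t})-X_i(\mathbf{t}_j)\Delta_n(\mathbf{t}_j)\big\}\dd\mathbf{t}.
\end{align*}
Writing the integrand as $(X_i(\mathbf{t})-X_i(\mathbf{t}_j))\Delta_n(\mathbf{t})+X_i(\mathbf{t}_j)(\Delta_n(\mathbf{t})-\Delta_n(\mathbf{t}_j))$ and letting $j(\mathbf{t})$ be the index with $\mathbf{t}\in A_{j(\mathbf{t})}$, one obtains from $\|\mathbf{t}-\mathbf{t}_{j(\mathbf{t})}\|_{\mathbbm{R}^d}\le\diam(A_{j(\mathbf{t})})$, the Cauchy--Schwarz inequality and $\sum_j\mu(A_j)=\mu(\mathcal{I})$ that
\begin{align*}
|R_i|\le\Big(\sup_{\mathbf{t}\in\mathcal{I}}|\Delta_n(\mathbf{t})|\Big)\int_{\mathcal{I}}|X_i(\mathbf{t})-X_i(\mathbf{t}_{j(\mathbf{t})})|\dd\mathbf{t}+[\Delta_n]_{\mathrm{Lip}}\,\mu(\mathcal{I})\Big(\max_{1\le j\le p}\diam(A_j)\Big)\|X_i\|_{\infty},
\end{align*}
where $[\Delta_n]_{\mathrm{Lip}}$ is the Lipschitz constant of $\Delta_n$ on $\overline{\mathcal{I}}$. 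From the first assertion and $\beta_0\in\mathcal{H}^m(\mathbbm{R}^d)$ we get $\sup_{\mathbf{t}\in\mathcal{I}}|\Delta_n(\mathbf{t})|=O_{\mathbb{P}}(1)$; combining this with $I_m(\Delta_n)=O_{\mathbb{P}}(1)$ and the standard equivalence, on the cone-regular set $\mathcal{I}$, of the full Sobolev norm with $(\|\cdot\|_{L^2(\mathcal{I})}^2+I_m^2)^{1/2}$ (intermediate-derivative estimates), we conclude that the Sobolev norm of $\Delta_n$ is $O_{\mathbb{P}}(1)$; since $2m>d+1$ forces the integer $m$ to satisfy $m\ge d/2+1$, the Sobolev embedding theorem then gives $[\Delta_n]_{\mathrm{Lip}}=O_{\mathbb{P}}(1)$.

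It remains to bound $n^{-1}\sum_{i=1}^n R_i^2$. By Cauchy--Schwarz and \ref{A3},
\begin{align*}
\mathbb{E}\Big\{\Big(\int_{\mathcal{I}}|X(\mathbf{t})-X(\mathbf{t}_{j(\mathbf{t})})|\dd\mathbf{t}\Big)^2\Big\}\le\mu(\mathcal{I})\int_{\mathcal{I}}\mathbb{E}|X(\mathbf{t})-X(\mathbf{t}_{j(\mathbf{t})})|^2\dd\mathbf{t}\le C_1\mu(\mathcal{I})^2\max_{1\le j\le p}\diam^{2\kappa}(A_j),
\end{align*}
so Markov's inequality gives $n^{-1}\sum_{i=1}^n(\int_{\mathcal{I}}|X_i(\mathbf{t})-X_i(\mathbf{t}_{j(\mathbf{t})})|\dd\mathbf{t})^2=O_{\mathbb{P}}(\max_{1\le j\le p}\diam^{2\kappa}(A_j))$; moreover $n^{-1}\sum_{i=1}^n\|X_i\|_{\infty}^2=O_{\mathbb{P}}(1)$ by the law of large numbers, because $\mathbb{E}\|X\|_{\infty}^2<\infty$ follows from $\mathbb{E}\{e^{t\|X\|_{\infty}^2}\}<\infty$. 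Since $\diam^2(A_j)\le\diam^{2\kappa}(A_j)$ for all large $p$, these estimates, together with $\sup_{\mathbf{t}\in\mathcal{I}}|\Delta_n(\mathbf{t})|=O_{\mathbb{P}}(1)$ and $[\Delta_n]_{\mathrm{Lip}}=O_{\mathbb{P}}(1)$, yield $n^{-1}\sum_{i=1}^n R_i^2=O_{\mathbb{P}}(\max_{1\le j\le p}\diam^{2\kappa}(A_j))$. Finally, the triangle inequality in $\mathbbm{R}^n$ and $\mu(A_j)=\mu(\mathcal{I})/p$ give $\|\Delta_n\|_n\le\mu(\mathcal{I})^{1/2}\|\Delta_n\|_{n,p}+(n^{-1}\sum_{i=1}^n R_i^2)^{1/2}=O_{\mathbb{P}}(r_n)+O_{\mathbb{P}}(\max_{1\le j\le p}\diam^{\kappa}(A_j))=O_{\mathbb{P}}(r_n)$, which is the stated bound.

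The main obstacle is the pair of deterministic functional-analytic inputs: the sampling inequality $\sup_{\mathcal{I}}|f|^2\lesssim\sum_j|f(\mathbf{t}_j)|^2\mu(A_j)+I_m^2(f)$ with a constant uniform in $p$ --- which relies on the quasi-uniformity condition \ref{A2} --- and the passage from the $I_m$-seminorm bound to a genuine Sobolev, and hence Lipschitz, bound on $\mathcal{I}$ --- which relies on the cone condition \ref{A1}. Once these are secured the probabilistic part is elementary: it needs only a second moment of $\|X\|_{\infty}$ and the mean-square H\"older continuity \ref{A3}, and the transition from $\|\cdot\|_{n,p}$ to $\|\cdot\|_n$ costs nothing beyond the rate $r_n$ already supplied by Theorem~\ref{thm:1}, in particular no additional logarithmic factors.
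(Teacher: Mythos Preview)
Your proof is correct and follows essentially the same route as the paper: both arguments derive $\sup_{\mathcal{I}}|\widehat\beta_n|=O_{\mathbb{P}}(1)$ by combining the Utreras sampling inequality (Theorem~3.4 of \citealp{Utr:1988}) with the Sobolev embedding $\mathcal{H}^m(\mathcal{I})\hookrightarrow\mathcal{C}(\mathcal{I})$, and both control the passage from $\|\cdot\|_{n,p}$ to $\|\cdot\|_n$ by bounding a quadrature remainder via the Lipschitz embedding $\mathcal{H}^m(\mathcal{I})\hookrightarrow\mathcal{C}^{0,1}(\mathcal{I})$ (which uses $2m>d+1$) together with the mean-square H\"older condition in \ref{A3}. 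The only organizational difference is that you work directly with $\Delta_n=\widehat\beta_n-\beta_0$ and split the integrand as $(X_i(\mathbf{t})-X_i(\mathbf{t}_j))\Delta_n(\mathbf{t})+X_i(\mathbf{t}_j)(\Delta_n(\mathbf{t})-\Delta_n(\mathbf{t}_j))$, whereas the paper first separates the $\widehat\beta_n$- and $\beta_0$-parts (the latter giving back the $d_i$ from \eqref{eq:di}) and uses the reversed splitting; your treatment of the averages $n^{-1}\sum_i\|X_i\|_\infty^2$ and $n^{-1}\sum_i(\int_{\mathcal{I}}|X_i(\mathbf{t})-X_i(\mathbf{t}_{j(\mathbf{t})})|\dd\mathbf{t})^2$ via the LLN and Markov's inequality is slightly more explicit than the paper's termwise $O_{\mathbb{P}}(1)$ bounds, but the substance is the same.
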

\noindent
Much like Theorem~\ref{thm:1}, Corollary~\ref{cor:2} reveals that unless $n^{-m/(2m+d)} > \max_{1 \leq j \leq p} \diam^{\kappa}(A_j)$ or, equivalently, $T_{n,p}<1$, the average squared error of prediction depends on both the sample size and the discretization error and the latter cannot be ignored in one's analysis. 

\subsection{Thin-Plate Splines with Scale Estimation}

We now turn to the theoretical investigation of thin-plate estimators requiring an auxiliary scale estimate, such as the estimators based on the logistic or Huber losses. We require the following assumptions in order to establish the equivalent result of Theorem~\ref{thm:1} in this setting.

\begin{enumerate} [label=(B\arabic*), ref=(B\arabic*), start=4]
\item \label{B4} The loss function $\rho: \mathbbm{R} \to \mathbbm{R}_{+}$ is convex with bounded derivative $\psi$ that satisfies the following tail condition: for every $\varepsilon_1>0$ there exists $D_{\varepsilon_1}>0$ such that 
\begin{align*}
\sup_{x\in \mathbbm{R}}\left|\psi(tx)-\psi(sx) \right| \leq D_{\varepsilon_1}|t-s|,
\end{align*}
for any $t\geq\varepsilon_1$, $s \geq \varepsilon_1$.
\item \label{B5} There exists a $\sigma_0\in(0,\infty)$ such that $\widehat{\sigma}_n \xrightarrow{\mathbb{P}} \sigma_0$.
\item \label{B6} There exists an $\varepsilon_2>0$ such that for all $\sigma \in (\sigma_0- \varepsilon_2, \sigma_0 + \varepsilon_2)$, $\mathbb{E}\{\psi(\epsilon_1/\sigma)\} = 0$ and the function $g_{\sigma}(t) := \mathbb{E}\{\psi(\epsilon_1/\sigma + t)\}$ is continuously differentiable with a  uniformly bounded derivative $g_{\sigma}^{\prime}$ and $g_{\sigma}^{\prime}(0)>0$.

\end{enumerate}

Assumption \ref{B4} is slightly more restrictive than \ref{A4}, as a bounded derivative of $\rho$ ensures that the Lipschitz condition in \ref{A4} is satisfied. The tail condition in \ref{B4} is satisfied whenever $\psi(x) = \rho^{\prime}(x)$ changes slowly in the tail. For differentiable $\psi$-functions it is easy to show that this tail condition is satisfied provided that $\sup_{x \in \mathbbm{R}}|x\psi^{\prime}(x)|<\infty$, but differentiability is not required. For instance, \ref{B4} is also satisfied for the Huber $\psi$-function derived from \eqref{eq:huber} although $\psi$ in that case is only piecewise differentiable. It is important to note that the limiting value $\sigma_0$ in \ref{B5} needs not be the standard deviation of the error, which may not exist. Lastly, \ref{B6} parallels \ref{A5} and \ref{A6} and ensures the Fisher consistency of the estimators while also allowing for smoothness to be traded between $\psi$ and $F$, the distribution function of the error. As a specific example, consider again the Huber $\rho$-function of \eqref{eq:huber}. Then, if $F$ is symmetric about $0$ with density $f$, direct calculation shows that
\begin{align*}
    g_{\sigma}^{\prime}(0) = 2F\left(\frac{k}{\sigma}\right)-1,
\end{align*}
for every $\sigma>0$. Thus, $g_{\sigma}^{\prime}(0)$ is strictly positive under symmetric $F$ and $k>0$ so that \ref{B6} is satisfied. We are now ready to state the main result of this section.

\begin{theorem}
    \label{thm:2}
    Assume that \ref{A1}--\ref{A3} and \ref{B4}--\ref{B6} hold, $2m>d+1$ and $\lambda = o_{\mathbb{P}}(1)$ in such a way that
\begin{enumerate}[label=$(\roman*)$, ref=$(\roman*)$]
    \item \label{ip} $\lambda  = \log^{\frac{5}{2}}(n) O_\mathbb{P}(n^{-\frac{2m}{2m+d}} +\max_{1 \leq j \leq p} \diam^{2\kappa}(A_j)) $.
    \item \label{iip} $ \lambda^{-1} \log^{\frac{5}{2}}(n) \left( n^{-\frac{2m}{2m+d}} +  \max_{1 \leq j \leq p} \diam^{2\kappa}(A_j) \right) = O_{\mathbb{P}}(1)$.
\end{enumerate}
Then, for $\widehat{\beta}_n$ the solution to \eqref{eq:Est}, we have
\begin{align*}
\left\|\widehat{\beta}_n - \beta_0 \right\|_{n,p} = O_{\mathbb{P}} \left( \log^3(n) \left\{ n^{-\frac{m}{2m+d}} +  \max_{1 \leq j \leq p} \diam^{\kappa}(A_j) \right\}\right) \quad \text{and} \quad J_m(\widehat{\beta}_n) = O_{\mathbb{P}}(1).
\end{align*}
\end{theorem}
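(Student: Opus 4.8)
The plan is to follow the architecture of the proof of Theorem~\ref{thm:1}, the one genuinely new ingredient being the data-dependent scale $\widehat\sigma_n$. Write $d_i$ for the discretization errors of \eqref{eq:di} and set $v_i = \widehat\alpha_n + \sum_{j=1}^p X_i(\mathbf{t}_j)(\widehat\beta_n(\mathbf{t}_j)-\beta_0(\mathbf{t}_j))\mu(A_j)$, so that the residual at $(\widehat\alpha_n,\widehat\beta_n)$ is $\epsilon_i+d_i-v_i$, the residual at $(\alpha_0,\beta_0)=(0,\beta_0)$ is $\epsilon_i+d_i$, and $n^{-1}\sum_i v_i^2$ coincides with $\|\widehat\beta_n-\beta_0\|_{n,p}^2$ up to a constant and the rank-one intercept contribution (which is handled in parallel). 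I would run the argument on the event $E_n$ on which $\widehat\sigma_n\in(\sigma_0-\varepsilon_2,\sigma_0+\varepsilon_2)$ and $\max_{i\le n}\|X_i\|_\infty\lesssim\sqrt{\log n}$; by \ref{B5} and the squared exponential moment in \ref{A3} one has $\mathbb{P}(E_n)\to1$, so it suffices to bound everything on $E_n$. The point of departure is the basic inequality from the minimality of $(\widehat\alpha_n,\widehat\beta_n)$ in \eqref{eq:Est},
\begin{align*}
\frac1n\sum_{i=1}^n\Bigl[\rho\Bigl(\tfrac{\epsilon_i+d_i-v_i}{\widehat\sigma_n}\Bigr)-\rho\Bigl(\tfrac{\epsilon_i+d_i}{\widehat\sigma_n}\Bigr)\Bigr]+\lambda J_m^2(\widehat\beta_n)\;\le\;\lambda J_m^2(\beta_0).
\end{align*}

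The first new step is to eliminate the random scale. Since $\psi=\rho'$ is bounded and, by \ref{B4}, uniformly Lipschitz in the scaling, writing $\rho(a/\sigma)-\rho(b/\sigma)=\int_b^a\sigma^{-1}\psi(x/\sigma)\,dx$ and applying the bound of \ref{B4} with $t=\widehat\sigma_n^{-1}$, $s=\sigma_0^{-1}$ (both bounded away from $0$ on $E_n$) gives, for every $i$,
\begin{align*}
\Bigl|\bigl[\rho\bigl(\tfrac{\epsilon_i+d_i-v_i}{\widehat\sigma_n}\bigr)-\rho\bigl(\tfrac{\epsilon_i+d_i}{\widehat\sigma_n}\bigr)\bigr]-\bigl[\rho\bigl(\tfrac{\epsilon_i+d_i-v_i}{\sigma_0}\bigr)-\rho\bigl(\tfrac{\epsilon_i+d_i}{\sigma_0}\bigr)\bigr]\Bigr|\;\lesssim\;\bigl|\widehat\sigma_n^{-1}-\sigma_0^{-1}\bigr|\,|v_i|,
\end{align*}
so that, by Cauchy--Schwarz, replacing $\widehat\sigma_n$ by $\sigma_0$ in the basic inequality costs at most $o_{\mathbb{P}}(1)\,\|\widehat\beta_n-\beta_0\|_{n,p}$, a lower-order term that will be absorbed later. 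This reduces matters to the fixed-scale loss $\widetilde\rho(x):=\rho(x/\sigma_0)$, for which \ref{B6} plays the role of \ref{A5}--\ref{A6}: $\mathbb{E}\{\widetilde\rho'(\epsilon_1+t)\}=\sigma_0^{-1}g_{\sigma_0}(t/\sigma_0)$ vanishes at $t=0$ with positive slope, so $t\mapsto\mathbb{E}\{\widetilde\rho(\epsilon_1+t)\}$ is convex, uniquely minimized at $0$, and has a local quadratic minorant, while $\mathbb{E}\{\widetilde\rho'(\epsilon_1)\}=0$ holds exactly, which is what centers the score even though it originally involved the error-dependent $\widehat\sigma_n$.

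From here the proof parallels Theorem~\ref{thm:1}. I would split the fixed-scale left-hand side into the sum of its conditional expectation given $\{X_i\}$ and a centered empirical process. For the expectation, a second-order Taylor expansion of $t\mapsto\mathbb{E}\{\widetilde\rho(\epsilon_1+t)\}$ about $0$ yields, on the region where the relevant deviations are small (the complement handled by convexity and the Lipschitz property of $\widetilde\rho$), a minorant of order $c\,\|\widehat\beta_n-\beta_0\|_{n,p}^2$ up to a discretization term of order $\max_j\diam^{2\kappa}(A_j)$ coming from $\mathbb{E}\{d_1^2\}\lesssim\max_j\diam^{2\kappa}(A_j)$, which follows from the mean-square Hölder continuity in \ref{A3}. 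For the empirical process, symmetrization and the contraction inequality (licit since $\widetilde\rho$ is Lipschitz) reduce matters to a Rademacher average indexed by $\{\,(\langle X_i,\beta-\beta_0\rangle)_{i\le n}:\beta\in\mathcal{H}^m(\mathbbm{R}^d),\ J_m(\beta)\le R\,\}$, which I would bound by Dudley's entropy integral with the Sobolev-ball entropy estimate $\log N(\delta,\{J_m(\beta)\le1\},\|\cdot\|)\lesssim\delta^{-d/m}$; the strengthening $2m>d+1$ beyond the $2m>d$ of Proposition~\ref{Prop:1} is used here to control the discretization error of the thin-plate spline $\widehat\beta_n$ itself (so that $\|\cdot\|_{n,p}$ and $\|\cdot\|_n$ stay comparable on the relevant class), and the squared exponential moment of $\|X\|_\infty$ enters through the logarithmic factors generated by $\max_i\|X_i\|_\infty\lesssim\sqrt{\log n}$. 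Inserting these estimates into the basic inequality and running a peeling argument over the dyadic ranges of $J_m(\widehat\beta_n)$ delivers simultaneously $J_m(\widehat\beta_n)=O_{\mathbb{P}}(1)$ and the asserted rate for $\|\widehat\beta_n-\beta_0\|_{n,p}$, once $\lambda$ is balanced against the entropy and discretization terms as in \ref{ip}--\ref{iip}.

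The step I expect to be the main obstacle is the interaction between the random scale and the empirical process: the scale-elimination must hold uniformly over the sieve of candidate coefficient functions and must not be corrupted by the dependence of $\widehat\sigma_n$ on the errors $\{\epsilon_i\}$. This is precisely what the uniform-in-$x$ form of \ref{B4} and the neighbourhood version of the centering identity $\mathbb{E}\{\psi(\epsilon_1/\sigma)\}=0$ in \ref{B6} are designed to accommodate, but propagating these uniform bounds through the peeling device costs an extra power of $\log n$ relative to Theorem~\ref{thm:1}, which accounts for the $\log^{5/2}(n)$ appearing in the conditions on $\lambda$ and the $\log^3(n)$ in the final rate (against $\log^{3/2}(n)$ and $\log^2(n)$ there). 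A secondary technicality is the uniform-in-$i$ validity of the local Taylor expansion of the risk, which once more rests on the bound $\max_i\|X_i\|_\infty\lesssim\sqrt{\log n}$ afforded by \ref{A3}.
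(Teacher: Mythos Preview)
Your scale-elimination step is the problem. You bound the cost of replacing $\widehat\sigma_n$ by $\sigma_0$ by $o_{\mathbb{P}}(1)\,\|\widehat\beta_n-\beta_0\|_{n,p}$ and call it ``a lower-order term that will be absorbed later,'' but this term is \emph{not} lower order: assumption \ref{B5} gives only $\widehat\sigma_n\xrightarrow{\mathbb{P}}\sigma_0$ with no rate. Trace through the case analysis of Theorem~\ref{thm:1} with an extra right-hand term $o_{\mathbb{P}}(1)\,r$ where $r=\|\widetilde\beta_n-\beta_0\|_{n,p}$. In the case where this term dominates, the first inequality gives $r\le o_{\mathbb{P}}(1)\log(n)(1+J_m(\widetilde\beta_n))^2$, and substituting back gives $J_m^2(\widetilde\beta_n)/(1+J_m(\widetilde\beta_n))^2\le o_{\mathbb{P}}(1)^2\log(n)/\lambda$. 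Under \ref{ip}--\ref{iip} the ratio $\log(n)/\lambda$ diverges, so without a rate on the $o_{\mathbb{P}}(1)$ you cannot conclude $J_m(\widetilde\beta_n)=O_{\mathbb{P}}(1)$, and the argument stalls. The paper explicitly notes this obstruction: the extra log factor disappears only if $n^{1/2}(\widehat\sigma_n-\sigma_0)=O_{\mathbb{P}}(1)$, which is not assumed.

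The paper does \emph{not} substitute $\sigma_0$ for $\widehat\sigma_n$. Instead it keeps the random scale and proves the empirical-process bound \emph{uniformly} over $\sigma\in[\sigma_0-\varepsilon,\sigma_0+\varepsilon]$ (Lemma~\ref{lem:4}). The mechanism is: write $\rho((\epsilon_i+d_i+u)/\sigma)-\rho((\epsilon_i+d_i)/\sigma)$ via the fundamental theorem of calculus as a sum of two pieces, one integrating $\psi((\epsilon_i+t)/\sigma)-\psi(\epsilon_i/\sigma)$ and one linear in $u$ with coefficient $\psi(\epsilon_i/\sigma)$. For the second piece, the full neighbourhood version of \ref{B6}, namely $\mathbb{E}\{\psi(\epsilon_1/\sigma)\}=0$ for \emph{every} $\sigma$ near $\sigma_0$ (not just at $\sigma_0$), keeps the process centred at every $\sigma$. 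Uniformity in $\sigma$ is then obtained by covering the interval $[\sigma_0-\varepsilon,\sigma_0+\varepsilon]$ with $O(\sqrt{n}/T)$ points, using \ref{B4} to pass between grid points, and a union bound; this covering is the true source of the extra $\sqrt{\log n}$ in the empirical-process rate and hence of the $\log^{5/2}(n)$ in \ref{ip}--\ref{iip} and the $\log^3(n)$ in the conclusion. Your explanation that the extra log arises from ``propagating these uniform bounds through the peeling device'' is not the mechanism at work.
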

\noindent
Comparing Theorem~\ref{thm:2} to Theorem~\ref{thm:1} reveals that after the addition of the random $\widehat{\sigma}_n$ to \eqref{eq:Est}, the the power on the $\log(n)$-term has been raised from two to three. This small change is a consequence of the possibly slow rate of convergence of $\widehat\sigma_n$ to $\sigma_0$ and it can be shown that it does not occur provided that $n^{1/2}(\widehat\sigma_n-\sigma_0)= O_{\mathbb{P}}(1)$. It is worth noting, nevertheless, that such high a rate is extremely difficult to achieve in inverse statistical problems, such as functional regression. By contrast, the generality of \ref{B5} allows for explicit checking for many interesting scale estimates, as we now demonstrate by means of an example.

Consider the popular M-scale estimator based on the residuals of an initial estimator not requiring standardization with a scale estimate, such as the $L_1$ estimator with $\rho(x) = |x|$. In particular, let $\{r_i\}_{i=1}^n$ denote the residuals of the initial estimator, i.e., $r_i = Y_i - \widehat\alpha_n- \sum_{j=1}^p X_i(\mathbf{t}_j)\widehat\beta_n(\mathbf{t}_j) \mu(A_j)$, and let $\chi: \mathbbm{R} \to (0,1)$ denote an even loss function. Then, the M-scale estimate $\widehat\sigma_n$ is the solution of
\begin{align*}
    \frac{1}{n}\sum_{i=1}^n \chi \left(\frac{r_i}{\widehat \sigma_n} \right) = \frac{1}{2}.
\end{align*}
Observe that for $\chi(x) = x^2$, $\widehat\sigma_n$ is $\sqrt{2}$ times the standard deviation of the $\{r_i\}_{i=1}^n$. We refer to \citet[Chapters 2 and 10]{Maronna:2019} for an extensive discussion of such estimates. Define the population value $\sigma_0$ as the solution to
\begin{align*}   
        \mathbb{E} \left\{\chi \left(\frac{\epsilon_1}{\sigma_0} \right) \right\} = \frac{1}{2}.
\end{align*}
It can be shown that $\sigma_0$ is unique and positive provided that $\chi$ is continuous and strictly increasing in all $x>0$ for which $\chi(x) <1$ (recall that $\chi$ is even). Thus, for bounded $\chi$ no moments of the error $\epsilon$ are required for $\sigma_0$ to be well-defined. Now, the residuals $\{r_i\}_{i=1}^n$ approximate $\{\epsilon_i\}_{i=1}^n$ and it may be expected that $\widehat\sigma_n \xrightarrow{\mathbb{P}} \sigma_0$ under the right conditions.  This is indeed true, as Theorem~\ref{thm:3} below shows.

\begin{theorem}
    \label{thm:3}
    Suppose that the conditions of Theorem~\ref{thm:1} hold for the initial thin-plate estimates $(\widehat\alpha_n, \widehat\beta_n)$. Suppose also that $\chi: \mathbbm{R} \to [0,1]$ is even, strictly increasing for all $x>0$ for whuch $\chi(x) <1$ and has a bounded derivative $\chi^{\prime}$. Then, $\widehat\sigma_n \xrightarrow{\mathbb{P}} \sigma_0$, i.e. condition~\ref{B5} is satisfied.
\end{theorem}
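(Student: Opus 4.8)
The plan is to transfer the defining relation $\frac1n\sum_{i=1}^n\chi(r_i/\widehat\sigma_n)=\tfrac12$ to the population relation $\mathbb{E}\{\chi(\epsilon_1/\sigma_0)\}=\tfrac12$ by showing that the residuals $r_i$ agree with the errors $\epsilon_i$ up to a root-mean-square–negligible perturbation, and then to conclude by a monotonicity argument that bypasses any uniform law of large numbers.

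First I would record the residual decomposition $r_i=\epsilon_i+u_i$, where, using $Y_i=\alpha_0+\int_{\mathcal I}X_i\beta_0+\epsilon_i$ and \eqref{eq:di},
\begin{align*}
u_i=(\alpha_0-\widehat\alpha_n)+d_i+\sum_{j=1}^p X_i(\mathbf{t}_j)\bigl(\beta_0(\mathbf{t}_j)-\widehat\beta_n(\mathbf{t}_j)\bigr)\mu(A_j),
\end{align*}
and then argue $\frac1n\sum_{i=1}^n u_i^2=o_{\mathbb{P}}(1)$. Since $\mu(A_j)=\mu(\mathcal I)/p$, the average square of the last summand equals $\mu(\mathcal I)\,\|\widehat\beta_n-\beta_0\|_{n,p}^2$, which is $o_{\mathbb{P}}(1)$ by Theorem~\ref{thm:1}; the term $\frac1n\sum_i d_i^2$ is $o_{\mathbb{P}}(1)$, being a Riemann-approximation error controlled through \ref{A3}, the Sobolev embedding ($2m>d+1$ makes $\beta_0$ bounded and Lipschitz) and the finiteness of all moments of $\|X\|_\infty$ granted by \ref{A3}, so that $\mathbb{E}\{d_i^2\}\lesssim\max_{j\le p}\diam^{2\kappa}(A_j)\to0$ — a bound already obtained inside the proof of Theorem~\ref{thm:1}; and $\widehat\alpha_n-\alpha_0=o_{\mathbb{P}}(1)$ follows from that same proof. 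A crude bound such as $\frac1n\sum_i u_i^2\le 3(\widehat\alpha_n-\alpha_0)^2+\tfrac3n\sum_i d_i^2+3\mu(\mathcal I)\|\widehat\beta_n-\beta_0\|_{n,p}^2$ then delivers the claim.

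Next, with $\Lambda(\sigma):=\mathbb{E}\{\chi(\epsilon_1/\sigma)\}$ and $\Phi_n(\sigma):=\frac1n\sum_i\chi(r_i/\sigma)$, I would establish pointwise convergence $\Phi_n(\sigma)\xrightarrow{\mathbb{P}}\Lambda(\sigma)$ for each fixed $\sigma>0$: the weak law of large numbers handles $\frac1n\sum_i\chi(\epsilon_i/\sigma)\to\Lambda(\sigma)$ (i.i.d., bounded summands), while the bounded derivative of $\chi$ gives $|\Phi_n(\sigma)-\frac1n\sum_i\chi(\epsilon_i/\sigma)|\le\sigma^{-1}\|\chi'\|_\infty(\tfrac1n\sum_i u_i^2)^{1/2}=o_{\mathbb{P}}(1)$ by the previous step. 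Standard M-scale theory \citep[see, e.g.,][Chapter~2]{Maronna:2019} shows $\Lambda$ is continuous, non-increasing, strictly decreasing on $\{\sigma:\Lambda(\sigma)\in(0,1)\}$ and tends to $0$ as $\sigma\to\infty$, so $\sigma_0$ is its unique root at level $\tfrac12$ and $\Lambda(\sigma_0-\delta)>\tfrac12>\Lambda(\sigma_0+\delta)$ for all small $\delta>0$. Because $\chi$ is even and non-decreasing on $[0,\infty)$, each $\Phi_n$ is continuous and non-increasing in $\sigma$ with $\Phi_n(\widehat\sigma_n)=\tfrac12$; hence on the event $\{\Phi_n(\sigma_0-\delta)>\tfrac12\}$ one necessarily has $\widehat\sigma_n>\sigma_0-\delta$ (else $\Phi_n(\widehat\sigma_n)\ge\Phi_n(\sigma_0-\delta)>\tfrac12$), and this event has probability tending to one by the pointwise convergence. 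The symmetric argument gives $\mathbb{P}(\widehat\sigma_n<\sigma_0+\delta)\to1$, so letting $\delta\downarrow0$ yields $\widehat\sigma_n\xrightarrow{\mathbb{P}}\sigma_0$, i.e.\ \ref{B5}.

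I expect the main obstacle to be the first step — more precisely, pinning down which auxiliary facts ($\widehat\alpha_n\xrightarrow{\mathbb{P}}\alpha_0$ and $\frac1n\sum_i d_i^2=o_{\mathbb{P}}(1)$) are actually supplied by the proof of Theorem~\ref{thm:1} and checking that, together with the $\|\cdot\|_{n,p}$-rate, they force the averaged negligibility $\frac1n\sum_i u_i^2=o_{\mathbb{P}}(1)$ of the residual-minus-error perturbation. Once that is secured, the bounded derivative of $\chi$ makes the passage to the error-based empirical equation immediate and the monotonicity of $\chi$ removes any need for a uniform law of large numbers or for a separate argument excluding $\widehat\sigma_n\to0$ or $\widehat\sigma_n\to\infty$.
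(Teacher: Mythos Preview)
Your proposal is correct and follows essentially the same route as the paper: decompose $r_i=\epsilon_i+u_i$, use the bounded derivative of $\chi$ together with the consistency facts $\widehat\alpha_n\xrightarrow{\mathbb{P}}\alpha_0$, $n^{-1}\sum_i d_i^2=o_{\mathbb{P}}(1)$ and $\|\widehat\beta_n-\beta_0\|_{n,p}=o_{\mathbb{P}}(1)$ from the proof of Theorem~\ref{thm:1} to replace $r_i$ by $\epsilon_i$, apply the WLLN at the fixed levels $\sigma_0\pm\varepsilon$, and finish by the monotonicity of $\sigma\mapsto\chi(\cdot/\sigma)$. The paper writes the Lipschitz step as a mean-value Taylor expansion rather than a direct $\|\chi'\|_\infty$-bound, but this is a cosmetic difference; the auxiliary facts you were unsure about are indeed the ones the paper invokes.
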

\noindent
A popular loss function fulfilling the requirements of Theorem~\ref{thm:3} is the Tukey bisquare given by $\chi(x) = \min(1-(1-x^2)^3,1)$, which we will also use in our numerical and real data examples. We close this section by remarking that the rate of convergence  of thin-plate spline estimators with auxiliary scale in the $\|\cdot\|_n$-semi-norm given in \eqref{eq:np norm} is exactly the same as the rate obtained in Theorem~\ref{thm:2} for the $\|\cdot\|_{n,p}$-norm. The proof of this follows from completely analogous arguments as those employed in the proof of Corollary~\ref{cor:2} and is thus omitted.

\section{Practical Implementation}
\label{sec:4}

By Proposition~\ref{Prop:1} the solution to \eqref{eq:Est} may be found in the space of natural thin-plate splines with knots at $\{\mathbf{t}_j\}_{j=1}^p$ and each $g$ in that space admits the representation \eqref{eq:TP}. As a basis $\left\{ \phi_j \right\}_{j=1}^M$ for the space of polynomials of total order less than $m$, we use the monomials $t_1^{m_1} \ldots t_d^{m_{d}}$ for all nonnegative integers $m_1, \ldots, m_d$ satisfying $0 \leq m_1+\ldots+m_d \leq m-1$. That space is of dimension $M = \binom{m+d-1}{d}$. The requirement of $g$ to satisfy $I_m(g) < \infty$ is equivalent to $\mathbf{\Phi}^{\top} \boldsymbol{\gamma} = \mathbf{0}$ where $\mathbf{\Phi}_{i,j}= \phi_j(\mathbf{t}_i)$, $i=1,\dots,p$, $j=1,\dots,M$ \citep[see, e.g.][Chapter 7]{Green:1994}. Moreover, for $\boldsymbol{\gamma} = (\gamma_1, \ldots, \gamma_M)$ satisfying this constraint, \citet{Green:1994} show that $I_m^2(g) = \boldsymbol{\gamma}^{\top} \mathbf{\Omega} \boldsymbol{\gamma}$ with $\mathbf{\Omega}_{i,j} = \eta_{m,d}(\|\mathbf{t}_i - \mathbf{t}_j \|)$, $i, j = 1,\dots,p$. In order to automatically incorporate the constraint into the problem, we write $\boldsymbol{\gamma} = \mathbf{Q} \boldsymbol{\xi}$ for a $p \times (p-M)$ matrix  $\mathbf{Q}$ whose columns span the null space of $\boldsymbol{\Phi}^{\top}$ and $\boldsymbol{\xi} \in \mathbbm{R}^{p-M}$.

Denoting the matrix of the discretized processes by $\mathbf{X}$, i.e., $\mathbf{X}_{i,j} = \mathbf{X}_i(\mathbf{t}_j)$, $i=1,\dots,n$, $j=1,\dots,p$, and combining the above facts, we may deduce that to obtain $(\widehat\alpha_n, \widehat\beta_n)$ in~\eqref{eq:Est}, it suffices to minimize
\begin{align}
\label{eq:Mest}
    L_n\left(\boldsymbol{\theta}\right)  = \frac{1}{n} \sum_{i=1}^n \rho \left( \frac{Y_i - \mathbf{Z}_i^{\top} \boldsymbol{\theta} }{\widehat{\sigma}_n} \right) + \lambda \boldsymbol{\theta}^{\top} \mathbf{H} \boldsymbol{\theta},
\end{align}
with $\mathbf{Z} = [\mathbf{1}, \mathbf{X} \mathbf{W} \mathbf{\Omega} \mathbf{Q}, \mathbf{X} \mathbf{W} \boldsymbol{\Phi}]$, $\mathbf{Z}_i$ its $i$th row, $\boldsymbol{\theta} = (\alpha, \boldsymbol{\xi}^{\top}, \boldsymbol{\delta}^{\top})^{\top}$, and the $(p+1) \times (p+1)$ penalty matrix $\mathbf{H}$ given by
\begin{align*}
    \mathbf{H} = \begin{bmatrix}
        0 & 0 & 0 \\
        0 & \boldsymbol{Q}^{\top} \boldsymbol{\Omega} \mathbf{W} \boldsymbol{\Omega} \boldsymbol{Q} + \mathbf{Q}^{\top} \boldsymbol{\Omega} \mathbf{Q} & \mathbf{Q}^{\top} \boldsymbol{\Omega} \mathbf{W} \boldsymbol{\Phi} 
        \\ 0 & \boldsymbol{\Phi}^{\top} \mathbf{W} \boldsymbol{\Omega} \mathbf{Q} & \boldsymbol{\Phi}^{\top} \mathbf{W} \boldsymbol{\Phi}
    \end{bmatrix}.
\end{align*}
Here, $\mathbf{W}$ is a $p \times p$ diagonal matrix with $\mu(A_1), \ldots, \mu(A_p)$ in its diagonal. Thus, after suitable simplification, the objective function in \eqref{eq:Est} may be reduced to the objective function of a finite-dimensional penalized M-estimator, as given in \eqref{eq:Mest}.

Except for the least-squares case $\rho(x) = x^2$, the minimizer of \eqref{eq:Mest} cannot be obtained in closed form. Nevertheless, the minimizer can be identified through the penalized variant of the well-known and efficient iteratively reweighted least-squares (IRLS) algorithm, see \citet[Chapter 5]{Maronna:2019}. The algorithm amounts to a penalized weighted least-squares regression with weights at the $(k+1)$th step given by $w_i = \rho^{\prime}(r_i^k/\widehat\sigma_n)/(2 \widehat\sigma_n r_i^k)$  where $\rho'$ is the derivative of $\rho$ and $\left(r_1^k, \dots, r_n^k\right)^\top = \mathbf{r}^k$ denote the residuals from the $k$th step of the algorithm, i.e., $\mathbf{r}^k = \mathbf{Y} - \mathbf{Z}\boldsymbol{\theta}^k$, for $\boldsymbol{\theta}^k$ the value of $\boldsymbol{\theta}$ in the $k$th step of IRLS. The IRLS algorithm has the remarkable property of monotonically decreasing the objective function until convergence to the minimizer is reached, which for convex loss functions is guaranteed irrespective of the starting point \citep[Chapter 9]{Maronna:2019}. The IRLS algorithm is not directly applicable for the quantile loss function $\rho_{\tau}(x) = 2x(\tau - I(x<0))$, as $\rho_{\tau}$ is not differentiable at zero. In this case, we replace $\rho_{\tau}$ with its smooth approximation
\begin{align*}
    \widetilde{\rho}_{\tau}(x) = \begin{cases}
        \rho_{\tau}(x) & |x| \geq \varepsilon
        \\ 2\tau x^2/\varepsilon & 0 \leq x < \varepsilon
        \\  2(1-\tau)x^2/\varepsilon & -\varepsilon< x  \leq 0,
    \end{cases}
\end{align*}
for some small $\varepsilon<0$. The IRLS algorithm is applicable for $\widetilde{\rho}_{\tau}$ and, in our experience, converges fast to the minimizer of \eqref{eq:Mest}.

In order to select the penalty parameter $\lambda$ from the data we rely on the minimization of a robust scale of an approximation of the leave one out residuals. Specifically, let $\mathbf{r}_{-}(\lambda) = (r_{-1}(\lambda), \ldots, r_{-n}(\lambda))$ denote the leave-one-out residuals as obtained with $\lambda$ as the smoothing parameter. For the least-squares thin-plate estimator, $\mathbf{r}_{-}(\lambda)$ may be obtained explicitly using the ``leaving-out-one" lemma of \citep[Chapter 4]{Wahba:1990}:
\begin{align*}
    r_{-i}(\lambda) =  \frac{r_i(\lambda)}{1-\diag\left(\mathbf{H}(\lambda) \right)_i} , \quad( i =1  , \ldots, n),
\end{align*}
where $\{r_{i}(\lambda)\}_{i=1}^n$ denote the residuals and $\mathbf{H}(\lambda)$ the ``hat"-matrix, i.e., $\mathbf{H}(\lambda) = \mathbf{Z} \left( \mathbf{Z}^{\top}\mathbf{Z} + \lambda \mathbf{H} \right)^{-1} \mathbf{Z}^{\top}$ in our notation. For general thin-plate spline estimators a good approximation of $\mathbf{r}_{-}(\lambda)$  may be obtained from the last step of the IRLS algorithm, viz,
\begin{align*}
    r_{-i}(\lambda)  \approx  \frac{r_i(\lambda)}{1-\diag(\mathbf{H}_{\mathbf{W}}(\lambda))_i}, \quad( i =1  , \ldots, n),
\end{align*}
where now $\mathbf{H}_{\mathbf{W}}(\lambda)$ is the weighted hat-matrix with weighting matrix $\mathbf{W} = \diag(w_1, \ldots, w_n)$, i.e., $
    \mathbf{H}_\mathbf{W}(\lambda) = \mathbf{Z} \left( \mathbf{Z}^{\top} \mathbf{W} \mathbf{Z} + \lambda \mathbf{H} \right)^{-1} \mathbf{Z}^{\top} \mathbf{W}$.

With either the exact or approximate leave-one-out residuals, we propose to select the value of $\lambda$ minimizing
\begin{align*}
    \RCV(\lambda) = \tau(\mathbf{r}_{-}(\lambda))^2,
\end{align*}
where $\tau$ is the robust and efficient $\tau$-scale introduced by \citet{yohai1988high}, which may be loosely interpreted as a weighted standard deviation. This criterion has also been used by \citet{maronna2013robust} and is a robustification of the classical ordinary cross validation criterion, which is based on the mean of the squared leave-one-out residuals. Implementations and illustrative examples of the least-squares, $L_1$, Huber and logistic thin-plate spline estimators are provided in the \texttt{R}-package \texttt{RobustSpline} \citep{RobustSpline} accompanying this paper.

\section{Finite-Sample Performance}
\label{sec:5}

\subsection{Numerical Experiments}

In our numerical experiments we are interested in the sensitivity of the proposed family of estimators to the grid size $p$, the noise-to-signal ratio as well as the distributions of the functional covariate $X$ and the error $\epsilon$. The thin-plate spline estimators that we consider for this simulation study are:
\begin{itemize}
    \item The least-squares estimator with $\rho(x)=x^2$ abbreviated as \textsf{S}.
    \item The $L_1$ estimator with $\rho(x) = |x|$ abbreviated as \textsf{A}.
    \item The Huber estimator with $\rho(x)$ given in \eqref{eq:huber} and abbreviated as \textsf{H}.
    \item  The logistic estimator with $\rho(x) = 2x + 4\log(1+e^{-x})$ abbreviated as \textsf{L}.
\end{itemize}
For the Huber estimator in~\eqref{eq:huber} we use $ k= 1.345$, which ensures $95\%$ efficiency in the Gaussian location model \citep[see, e.g.,][Chapter 2]{Maronna:2019}. In order to cut down on computing times, in all the simulation settings we calculate the scale $\widehat\sigma_n$ for the Huber and logistic estimators from the residuals of initial undersmoothed $L_1$ estimator instead of a properly smoothed $L_1$ estimator that would be computationally more intensive. As we shall see, this simplification drastically reduces the computation times while not affecting the performance of the Huber and logistic estimators. We consider one-dimensional functional data in this study and present an example of two-dimensional functional data in the following section. For all estimators considered we use $m = 2$ corresponding to the popular cubic thin-plate splines. 

\begin{figure}[H]
    \centering
    \subfloat{\includegraphics[width = 0.496\textwidth]{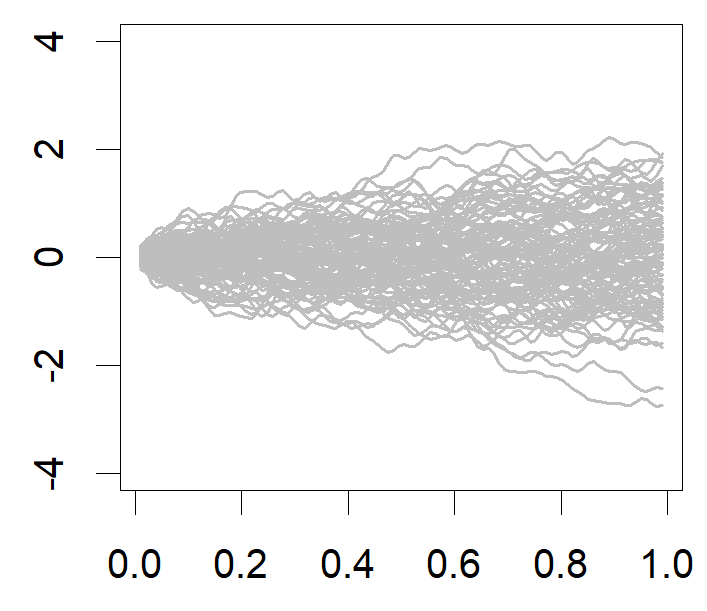}} \ \subfloat{\includegraphics[width = 0.496\textwidth]{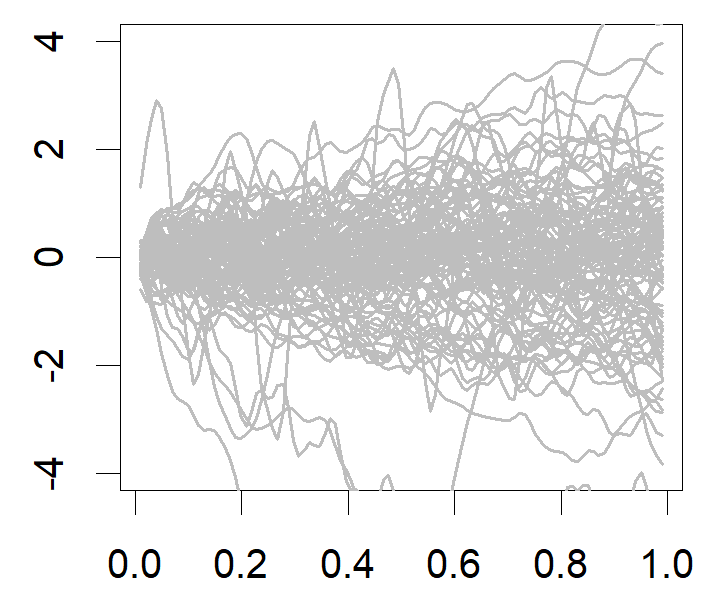}}
    \caption{$100$ curves generated with standard Gaussian and $t_2$-distributed $\{W_{ij}\}_{i, j=1}^{100,50}$ on the left and right panels, respectively. }
    \label{fig:curves}
\end{figure}

We generate the functional covariates $\{X_i\}_{i=1}^n$ defined on $\mathcal{I} = (0,1)$ according to the truncated Karhunen-Lo\`eve decomposition
\begin{align*}
    X_i(t) = \sqrt{2} \sum_{j=1}^{50} W_{ij} \frac{\sin \left( \left(j-1/2 \right) \pi t \right)}{\left(j-1/2\right)\pi t}, \quad (i = 1, \ldots, n),
\end{align*}
where $\{W_{ij}\}_{i,j=1}^{n,50}$ are independent mean-zero random variables.  The response variables $\{Y_i\}_{i=1}^n$ are subsequently generated from
\begin{align*}
    Y_i = \frac{1}{100}\sum_{k = 1}^{100} X_i(t_k) \beta_0(t_k) + \sigma \epsilon_i, \quad (i=1, \ldots, n),
\end{align*}
where $\{t_k\}_{k=1}^{100}$ are equidistant within $(0,1)$,  $\beta_0(t) = -\sin(5\,t/1.2)/0.5-1$ is the coefficient function, $\{\epsilon_i\}_{i=1}^n$ are random errors and $\sigma>0$ a constant that is chosen so as to achieve a given noise-to-signal-ratio (NSR). The NSR that we consider herein are $\{0.1, 0.2\}$ corresponding to moderate and high levels of noise in the data. We have also considered other coefficient functions but found similar results so that we only report the results for this $\beta_0$ here.

We test the estimators in the following three scenarios involving different distributions of $\{W_{ij}\}_{i,j=1}^{n,50}$ and $\{\epsilon_i\}_{i=1}^n$. The quantities $W_{ij}$ and $\epsilon_i$ are always independent.
\begin{enumerate}
[label = \textbf{Model \arabic*}, ref= Model \arabic*, start=1]
    \item $\{W_{ij}\}_{i,j=1}^{n,50}$ and $\{\epsilon_i\}_{i=1}^n$ follow standard Gaussian distributions.
    \item $\{W_{ij}\}_{i, j=1}^{n,50}$ follow a standard Gaussian distribution and $\{\epsilon_i\}_{i=1}^n$ follow a t-distribution with two degrees of freedom.
    \item $\{W_{ij}\}_{i,j=1}^{n,50}$ follow a $t$-distribution with two degrees of freedom and $\{\epsilon_i\}_{i=1}^n$ follow the standard Gaussian distribution.
\end{enumerate}
The above scenarios reflect the settings of regular data (\textbf{Model~1}) and data contaminated with vertical (\textbf{Model~2}) and leverage (\textbf{Model~3}) outliers, respectively. For a better appreciation of the effect of the distribution of the scores $\{W_{ij}\}_{i,j=1}^{n,50}$ on the shapes of the curves, Figure~\ref{fig:curves} plots 100 representative curves under \textbf{Model~1} and \textbf{Model~3} on the left and right panels, respectively. It may be seen that switching from the Gaussian to a t-distribution leads to several types of functional outliers among the curves, e.g., shape or amplitude outliers, which in turn complicate the estimation of $\beta_0$.

\begin{table}[htpb]
\centering
\caption{Means and standard errors (in brackets)
                          of the prediction/estimation errors in \textbf{Models~1}--\textbf{3} based on 1000 replications with $n= 300$.}                 
\label{tab:res}
\resizebox{\textwidth}{!}{\begin{tabular}{ccc|ccc|ccc}
& & & \multicolumn{3}{c}{$\NSR=0.1$} & \multicolumn{3}{c}{$\NSR=0.2$} \\
& & & $p=25$ & $p=50$ & $p=100$ & $p=25$ & $p=50$ & $p=100$ \\
\hline
\parbox[t]{2mm}{\multirow{8}{*}{\rotatebox[origin=c]{90}{\textbf{Model~1}}}} &\parbox[t]{2mm}{\multirow{4}{*}{\rotatebox[origin=c]{90}{$\SPE$}}} & \textsf{S} & 0.202  \scriptsize{(0.004)}  & 0.045  \scriptsize{(0.001)}  & 0.018  \scriptsize{(0.000)}  & 0.258  \scriptsize{(0.005)}  & 0.091  \scriptsize{(0.001)}  & 0.068  \scriptsize{(0.002)}  \\ 
& & \textsf{A} & 0.211  \scriptsize{(0.004)}  & 0.052  \scriptsize{(0.001)}  & 0.026  \scriptsize{(0.001)}  & 0.281  \scriptsize{(0.005)}  & 0.113  \scriptsize{(0.002)}  & 0.090  \scriptsize{(0.002)}  \\ 
& & \textsf{H} & 0.203  \scriptsize{(0.004)}  & 0.046  \scriptsize{(0.001)}  & 0.019  \scriptsize{(0.000)}  & 0.260  \scriptsize{(0.005)}  & 0.093  \scriptsize{(0.002)}  & 0.070  \scriptsize{(0.002)}  \\ 
& & \textsf{L} & 0.202  \scriptsize{(0.004)}  & 0.045  \scriptsize{(0.001)}  & 0.019  \scriptsize{(0.000)}  & 0.260  \scriptsize{(0.005)}  & 0.092  \scriptsize{(0.002)}  & 0.070  \scriptsize{(0.002)}  \\   \cdashline{3-9}
&  \parbox[t]{2mm}{\multirow{4}{*}{\rotatebox[origin=c]{90}{$\SEE$}}} & \textsf{S} & 4.652  \scriptsize{(0.247)}  & 3.414  \scriptsize{(0.217)}  & 3.107  \scriptsize{(0.200)}  & 11.451  \scriptsize{(0.948)} & 10.194  \scriptsize{(0.886)} & 11.312  \scriptsize{(0.834)} \\ 
& & \textsf{A} & 4.910  \scriptsize{(0.275)}  & 3.361  \scriptsize{(0.238)}  & 3.403  \scriptsize{(0.238)}  & 11.724  \scriptsize{(0.863)} & 10.016  \scriptsize{(0.972)} & 10.332  \scriptsize{(0.798)} \\ 
& & \textsf{H} & 4.525  \scriptsize{(0.226)}  & 3.535  \scriptsize{(0.235)}  & 3.280  \scriptsize{(0.217)}  & 11.659  \scriptsize{(0.962)} & 10.265  \scriptsize{(0.818)} & 11.430  \scriptsize{(0.843)} \\ 
& & \textsf{L} & 4.533  \scriptsize{(0.237)}  & 3.553  \scriptsize{(0.233)}  & 3.313  \scriptsize{(0.220)}  & 11.992  \scriptsize{(0.960)} & 10.358  \scriptsize{(0.886)} & 12.146  \scriptsize{(0.923)} \\ 
\hline
\parbox[t]{2mm}{\multirow{8}{*}{\rotatebox[origin=c]{90}{\textbf{Model~2}}}} & \parbox[t]{2mm}{\multirow{4}{*}{\rotatebox[origin=c]{90}{$\SPE$}}} & \textsf{S} & 0.542  \scriptsize{(0.118)}   & 0.591  \scriptsize{(0.404)}   & 0.210  \scriptsize{(0.035)}   & 0.824  \scriptsize{(0.036)}   & 0.709  \scriptsize{(0.045)}   & 0.610  \scriptsize{(0.036)}   \\ 
& & \textsf{A} & 0.232  \scriptsize{(0.004)}   & 0.069  \scriptsize{(0.001)}   & 0.043  \scriptsize{(0.001)}   & 0.362  \scriptsize{(0.006)}   & 0.193  \scriptsize{(0.004)}   & 0.162  \scriptsize{(0.004)}   \\ 
& & \textsf{H} & 0.226  \scriptsize{(0.004)}   & 0.067  \scriptsize{(0.001)}   & 0.041  \scriptsize{(0.001)}   & 0.347  \scriptsize{(0.006)}   & 0.183  \scriptsize{(0.004)}   & 0.152  \scriptsize{(0.003)}   \\ 
& & \textsf{L} & 0.229  \scriptsize{(0.004)}   & 0.070  \scriptsize{(0.001)}   & 0.044  \scriptsize{(0.001)}   & 0.359  \scriptsize{(0.006)}   & 0.195  \scriptsize{(0.004)}   & 0.162  \scriptsize{(0.003)}   \\   \cdashline{3-9}
&  \parbox[t]{2mm}{\multirow{4}{*}{\rotatebox[origin=c]{90}{$\SEE$}}} & \textsf{S} & 26.266  \scriptsize{(2.830)}  & 20.125  \scriptsize{(2.071)}  & 23.092  \scriptsize{(2.132)}  & 88.298  \scriptsize{(13.230)} & 88.913  \scriptsize{(10.978)} & 78.883  \scriptsize{(12.039)} \\ 
& & \textsf{A} & 13.615  \scriptsize{(0.829)}  & 10.327  \scriptsize{(0.540)}  & 9.963  \scriptsize{(0.496)}   & 29.432  \scriptsize{(1.601)}  & 35.711  \scriptsize{(2.221)}  & 35.942  \scriptsize{(2.173)}  \\ 
& & \textsf{H} & 10.587  \scriptsize{(0.753)}  & 8.601  \scriptsize{(0.448)}   & 8.327  \scriptsize{(0.477)}   & 25.757  \scriptsize{(1.742)}  & 29.599  \scriptsize{(1.924)}  & 28.171  \scriptsize{(1.771)}  \\ 
& & \textsf{L} & 10.097  \scriptsize{(0.538)}  & 8.603  \scriptsize{(0.457)}   & 8.128  \scriptsize{(0.474)}   & 25.784  \scriptsize{(1.603)}  & 30.830  \scriptsize{(2.021)}  & 27.636  \scriptsize{(1.705)}  \\ 
\hline
\parbox[t]{2mm}{\multirow{8}{*}{\rotatebox[origin=c]{90}{\textbf{Model~3}}}} & \parbox[t]{2mm}{\multirow{4}{*}{\rotatebox[origin=c]{90}{$\SPE$}}} & \textsf{S} & 1.727  \scriptsize{(0.047)}  & 0.520  \scriptsize{(0.068)}  & 0.191  \scriptsize{(0.014)}  & 2.638  \scriptsize{(0.104)}  & 1.911  \scriptsize{(0.726)}  & 0.746  \scriptsize{(0.046)}  \\ 
& & \textsf{A} & 2.699  \scriptsize{(0.239)}  & 0.679  \scriptsize{(0.082)}  & 0.238  \scriptsize{(0.013)}  & 3.352  \scriptsize{(0.190)}  & 2.154  \scriptsize{(0.670)}  & 0.933  \scriptsize{(0.056)}  \\ 
& & \textsf{H} & 2.418  \scriptsize{(0.201)}  & 0.564  \scriptsize{(0.070)}  & 0.195  \scriptsize{(0.014)}  & 2.816  \scriptsize{(0.111)}  & 1.931  \scriptsize{(0.733)}  & 0.769  \scriptsize{(0.049)}  \\ 
& & \textsf{L} & 2.395  \scriptsize{(0.200)}  & 0.556  \scriptsize{(0.069)}  & 0.193  \scriptsize{(0.014)}  & 2.777  \scriptsize{(0.107)}  & 1.935  \scriptsize{(0.733)}  & 0.754  \scriptsize{(0.047)}  \\   \cdashline{3-9}
&  \parbox[t]{2mm}{\multirow{4}{*}{\rotatebox[origin=c]{90}{$\SEE$}}} & \textsf{S} & 32.467  \scriptsize{(1.898)} & 8.766  \scriptsize{(0.685)}  & 4.168  \scriptsize{(0.473)}  & 44.604  \scriptsize{(3.095)} & 16.257  \scriptsize{(1.326)} & 16.348  \scriptsize{(1.967)} \\ 
& & \textsf{A} & 10.076  \scriptsize{(0.780)} & 5.884  \scriptsize{(0.627)}  & 3.617  \scriptsize{(0.274)}  & 16.881  \scriptsize{(1.386)} & 10.832  \scriptsize{(0.933)} & 13.711  \scriptsize{(1.629)} \\ 
& & \textsf{H} & 12.434  \scriptsize{(0.806)} & 5.774  \scriptsize{(0.488)}  & 4.349  \scriptsize{(0.569)}  & 24.784  \scriptsize{(1.759)} & 12.262  \scriptsize{(0.962)} & 16.448  \scriptsize{(2.006)} \\ 
& & \textsf{L} & 12.631  \scriptsize{(0.933)} & 5.942  \scriptsize{(0.509)}  & 4.227  \scriptsize{(0.516)}  & 23.292  \scriptsize{(1.600)} & 12.099  \scriptsize{(0.921)} & 15.733  \scriptsize{(1.671)} \\ 
\end{tabular}}
\end{table}

In order to test our estimators in the incomplete data setting, in each simulation run we randomly choose $p$ points $0<t_{k_1}<\ldots<t_{k_p} <1$ from $\{t_{k} \}_{k=1}^{100}$ and we compute the thin-plate spline estimators of $\beta_0$ from the incomplete data $(X_i(t_{k_1}), \ldots, X_i(t_{k_p}), Y_i)$, $i =1, \ldots, n$.  We consider values of $p$ in $\{25, 50 ,100\}$ with $p = 100$ corresponding to fully observed data. To assess the performance of the competing estimators we rely on the squared prediction and estimation error, SPE and SEE, respectively, given by
\begin{align*}
\SPE = \frac{1}{n} \sum_{i=1}^n \left| \frac{1}{100} \sum_{j=1}^{100} X_i(t_k) \beta_0(t_k)  -\widehat{\alpha}_n - \sum_{j=2}^p X_i(t_{k_j})\widehat{\beta}_n(t_{k_j}) (t_{k_j}-t_{k_{j-1}}) \right|^2,
\end{align*}
and
\begin{align*}
\SEE =  \sum_{j=2}^p \left|\beta_0(t_{k_j})- \widehat\beta_n(t_{k_j}) \right|^2 \left(t_{k_j}-t_{k_{j-1}}\right). 
\end{align*}
Table~\ref{tab:res} presents the average SPEs and SEEs and their standard errors for the four competing estimators based on $1000$ replications with $n = 300$.

There are several interesting observations emerging from Table~\ref{tab:res}. The most notable is the rapid deterioration of the performance of the least-squares estimator upon deviation from the ideal model conditions. Indeed, while under \textbf{Model~1} the least-squares estimator marginally outperforms its competitors, under the heavy-tailed $t_2$-distributed errors in \textbf{Model~2} its lead quickly evaporates and the estimator ends up widely outperformed by all other estimators. It is quite remarkable that these robust estimators maintain a relatively stable performance under \textbf{Model~1} and \textbf{Model~2} with respect to prediction. Their performance with respect to estimation deteriorates under \textbf{Model~2} but clearly not to the same extent as the performance of the least-squares estimator. The Huber and logistic estimators perform quite comparably in both situations and prove to be more efficient than the $L_1$ estimator under light-tailed errors. To illustrate the differences in performance between the least-squares and robust estimators, Figure~\ref{fig:functions1} presents the least-squares and Huber estimates for $\beta_0$ under \textbf{Model~1} and \textbf{Model~2} on the top and bottom rows respectively. These plots show that while the Huber estimates remain stable under contamination, the least-squares estimates can become erratic leading to a noticeable deterioration in performance.                                                                             

It is interesting to note that all estimators appear vulnerable to contamination in the predictor space as given in \textbf{Model~3}, although not to the same extent. The fact that estimators based on convex $\rho$-functions are vulnerable to this type of contamination is well-known in the robustness literature, see \citet[Chapter 4]{Maronna:2019}. No estimator convincingly outperforms the others with respect to prediction in this model, but the robust estimators perform significantly better than the least-squares estimator with respect to estimation. Among the robust estimators, the $L_1$ estimator offers the most protection against outlying observations in the predictor space albeit not by a large margin. However, since the $L_1$-estimator can perform considerably worse than the Huber and logistic estimators under regular data, its use is, in our opinion, warranted only in cases when one suspects heavy contamination within the data. 

It is also worth noting that higher level of noise and larger $p$ affect all estimators in the same way. In particular, higher NSR makes both estimation and prediction harder for all estimators. The effect of a larger $p$, that is, more completely observed functional data, is overall positive and particularly noticeable for moderate NSR. For high NSR, the situation is less clear-cut and a larger $p$ in this setting may even lead to deterioration in the performance of the estimators. We attribute this difference to the fact that for a larger $p$ more coefficients in \eqref{eq:TP} need to be estimated from the data. If the level of noise is high, these coefficients cannot be accurately estimated and as a result the performance of the estimators need not necessarily improve. 

\begin{figure}[H]
    \centering
    \subfloat{\includegraphics[width = 0.496\textwidth]{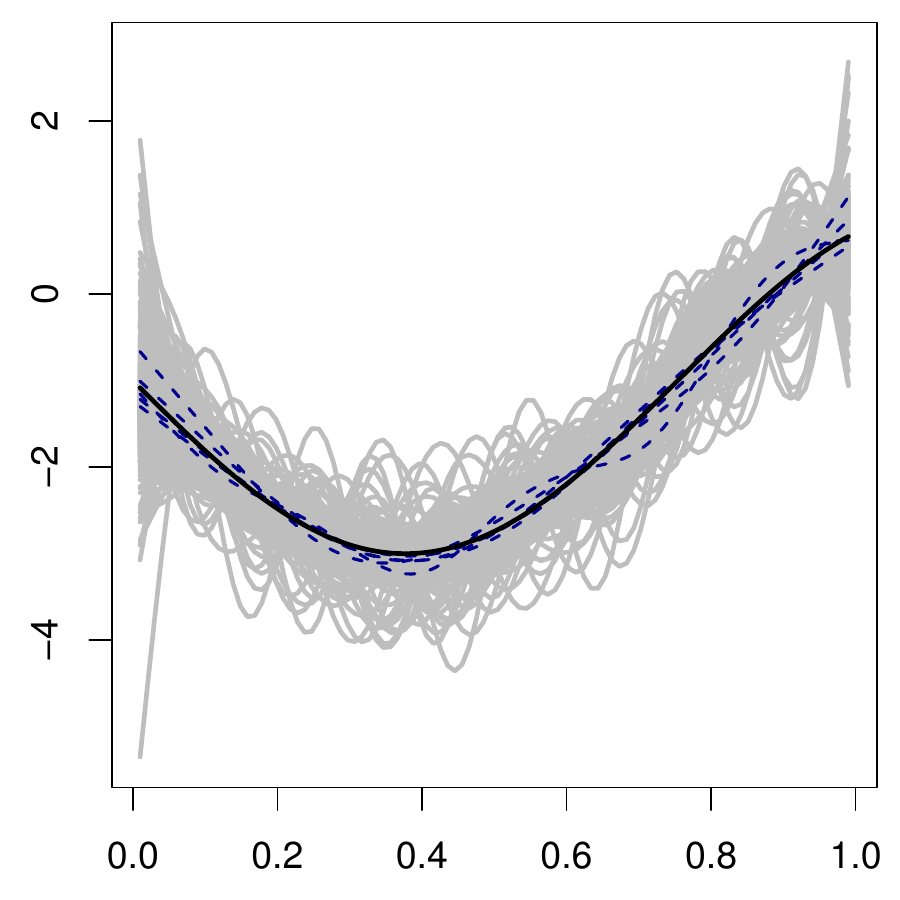}} \ \subfloat{\includegraphics[width = 0.496\textwidth]{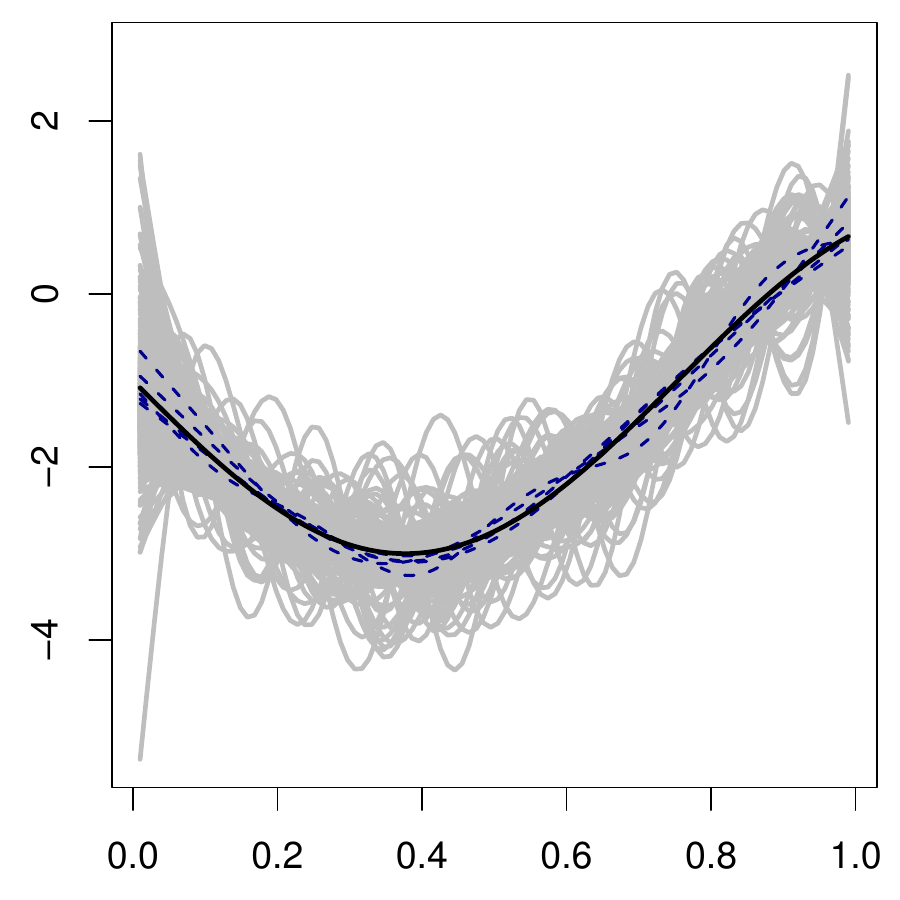}} \\
    \subfloat{\includegraphics[width = 0.496\textwidth]{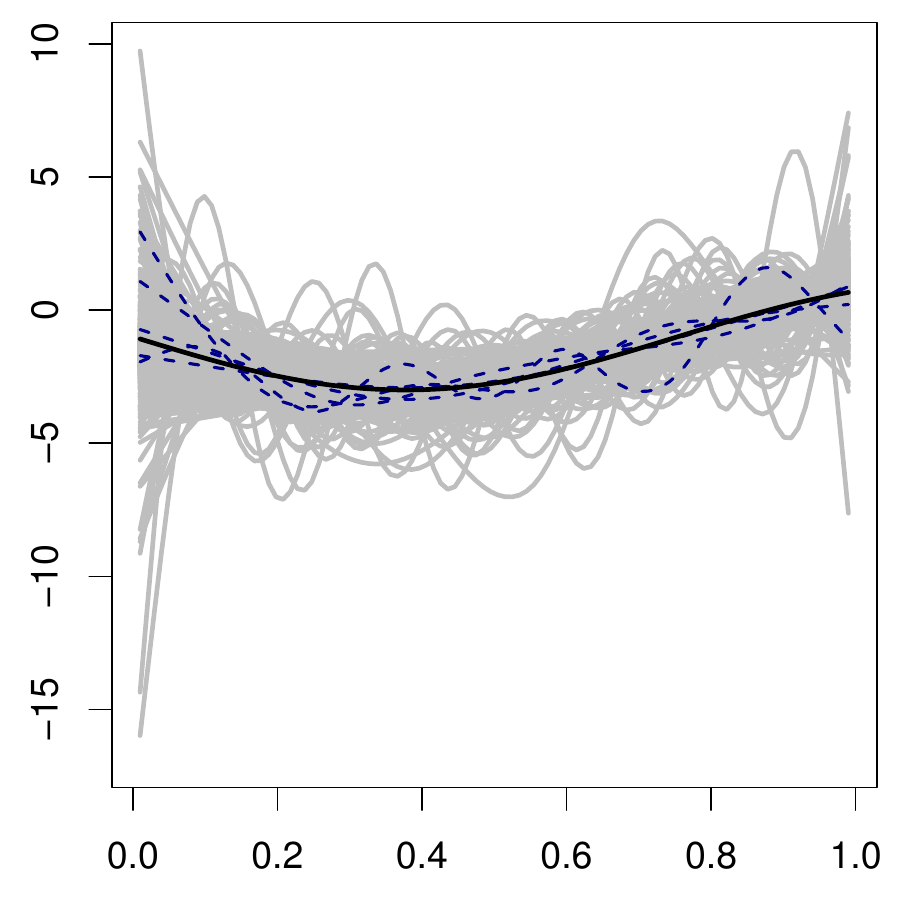}} \ \subfloat{\includegraphics[width = 0.496\textwidth]{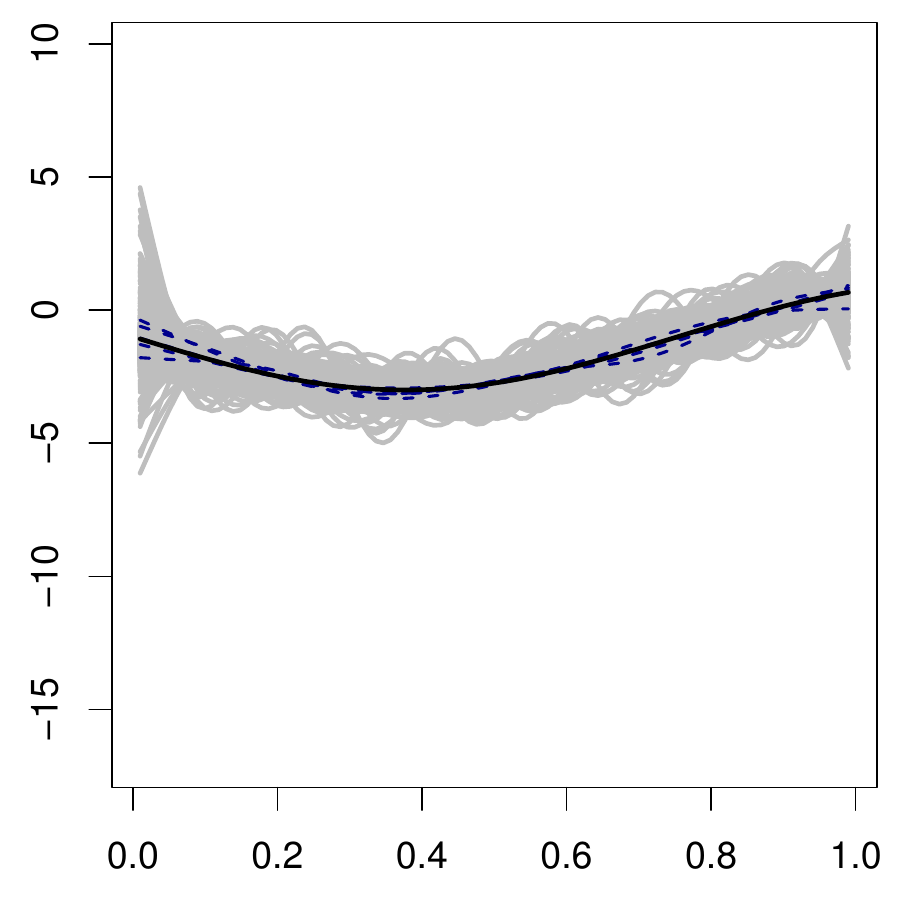}}
    \caption{1000 Least-squares (left) and Huber (right) estimates for \textbf{Model~1} (top panels) and \textbf{Model~2} (bottom panels)  under $\NSR=0.1$ and $p=100$. The lines (\solidd, \dashedd) depict the true coefficient function $\beta_0$ and the first $5$ estimated functions.}
    \label{fig:functions1}
\end{figure}

In summary, the results of our simulation study point towards the advantages of thin-plate spline estimators based on more slowly increasing $\rho$-functions relative to $\rho(x) = x^2$. With an appropriate choice of $\rho$, these estimators are very versatile in the sense that they are efficient in clean data and considerably safer in the presence of atypical observations than the least-squares thin-plate spline estimator.

\subsection{Computing Times}

Table~\ref{tab:times} provides the average computing times and standard errors for each estimator under all the settings of our numerical study. From these numbers it may be seen that, while the least-squares estimator is the fastest to compute, the Huber and logistic estimators are not too far behind. In fact, for \textbf{Model~1} and \textbf{Model~2}, for the Huber and logistic estimators in the vast majority of cases, the IRLS algorithm converges in at most four to five iterations. Thus, to obtain each one of these estimators it suffices to solve up to five least-square problems. By contrast, owing to its less smooth $\rho$-function, the $L_1$ estimator is much more computationally intensive.

\begin{table}[H]
\centering
\caption{Means and standard errors (in brackets)
                          of the computation times in seconds.} 
\label{tab:times}
\resizebox{\textwidth}{!}{\begin{tabular}{cc|ccc|ccc}
 & & \multicolumn{3}{c}{$\NSR=0.1$} & \multicolumn{3}{c}{$\NSR=0.2$} \\
 & & $p=25$ & $p=50$ & $p=100$ & $p=25$ & $p=50$ & $p=100$ \\
   \hline
\parbox[t]{2mm}{\multirow{4}{*}{\rotatebox[origin=c]{90}{\textbf{{\small Model~1}}}}} & \textsf{S} & 0.156  \scriptsize{(0.015)}   & 0.337  \scriptsize{(0.031)}   & 0.982  \scriptsize{(0.101)}   & 0.156  \scriptsize{(0.014)}   & 0.340  \scriptsize{(0.036)}   & 0.986  \scriptsize{(0.106)}   \\ 
                                                                                             & \textsf{A} & 2.404  \scriptsize{(1.115)}   & 8.083  \scriptsize{(3.470)}   & 35.795  \scriptsize{(14.486)} & 2.632  \scriptsize{(1.164)}   & 8.473  \scriptsize{(3.543)}   & 36.316  \scriptsize{(13.402)} \\ 
                                                                                             & \textsf{H} & 0.489  \scriptsize{(0.712)}   & 0.772  \scriptsize{(0.633)}   & 2.675  \scriptsize{(1.520)}   & 0.752  \scriptsize{(1.014)}   & 1.113  \scriptsize{(1.673)}   & 2.588  \scriptsize{(1.124)}   \\ 
                                                                                             & \textsf{L} & 0.489  \scriptsize{(0.757)}   & 0.732  \scriptsize{(0.772)}   & 2.357  \scriptsize{(0.766)}   & 0.788  \scriptsize{(1.107)}   & 1.048  \scriptsize{(1.520)}   & 2.328  \scriptsize{(0.949)}   \\   \cdashline{1-8}
  \parbox[t]{2mm}{\multirow{4}{*}{\rotatebox[origin=c]{90}{\textbf{{\small Model~2}}}}} & \textsf{S} & 0.156  \scriptsize{(0.014)}   & 0.348  \scriptsize{(0.046)}   & 0.983  \scriptsize{(0.106)}   & 0.157  \scriptsize{(0.018)}   & 0.355  \scriptsize{(0.047)}   & 0.963  \scriptsize{(0.110)}   \\ 
                                                                                             & \textsf{A} & 2.451  \scriptsize{(1.192)}   & 8.333  \scriptsize{(3.550)}   & 36.243  \scriptsize{(13.968)} & 2.733  \scriptsize{(1.214)}   & 8.914  \scriptsize{(3.621)}   & 35.693  \scriptsize{(13.150)} \\ 
                                                                                             & \textsf{H} & 0.625  \scriptsize{(0.877)}   & 0.949  \scriptsize{(0.999)}   & 2.876  \scriptsize{(1.174)}   & 1.023  \scriptsize{(1.173)}   & 1.553  \scriptsize{(2.228)}   & 2.719  \scriptsize{(0.863)}   \\ 
                                                                                             & \textsf{L} & 0.694  \scriptsize{(1.017)}   & 0.928  \scriptsize{(1.077)}   & 2.636  \scriptsize{(0.701)}   & 1.188  \scriptsize{(1.367)}   & 1.573  \scriptsize{(2.319)}   & 2.534  \scriptsize{(0.664)}   \\   \cdashline{1-8}
  \parbox[t]{2mm}{\multirow{4}{*}{\rotatebox[origin=c]{90}{\textbf{{\small Model~3}}}}} & \textsf{S} & 0.158  \scriptsize{(0.015)}   & 0.352  \scriptsize{(0.040)}   & 0.979  \scriptsize{(0.111)}   & 0.154  \scriptsize{(0.018)}   & 0.354  \scriptsize{(0.049)}   & 0.937  \scriptsize{(0.104)}   \\ 
                                                                                             & \textsf{A} & 4.565  \scriptsize{(1.612)}   & 13.268  \scriptsize{(6.006)}  & 49.798  \scriptsize{(22.925)} & 4.614  \scriptsize{(1.267)}   & 14.792  \scriptsize{(5.286)}  & 48.045  \scriptsize{(19.948)} \\ 
                                                                                             & \textsf{H} & 3.154  \scriptsize{(1.680)}   & 8.140  \scriptsize{(6.355)}   & 3.898  \scriptsize{(3.896)}   & 3.215  \scriptsize{(1.279)}   & 10.592  \scriptsize{(5.431)}  & 5.475  \scriptsize{(6.696)}   \\ 
                                                                                             & \textsf{L} & 3.737  \scriptsize{(1.868)}   & 9.104  \scriptsize{(6.584)}   & 4.217  \scriptsize{(6.063)}   & 3.785  \scriptsize{(1.388)}   & 11.699  \scriptsize{(5.262)}  & 9.147  \scriptsize{(12.140)}  \\ 
\end{tabular}}
\end{table}

For \textbf{Model~3}, i.e., contamination in the predictor space, the Huber and logistic estimators require more iterations of the IRLS algorithm, but these  estimators remain still considerably easier to compute than the $L_1$ estimator. Overall, Table~\ref{tab:times} indicates that the Huber and logistic estimators are computationally feasible alternatives to the least-squares estimator particularly in light of their increased resistance to atypical observations.

\begin{figure}[H]
    \centering
    \subfloat{\includegraphics[width = 0.499\textwidth]{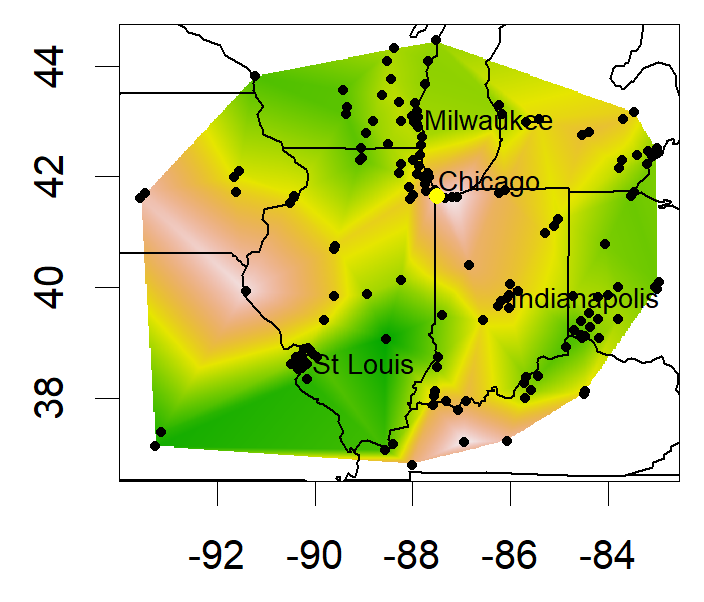}}  \subfloat{\includegraphics[width = 0.499\textwidth]{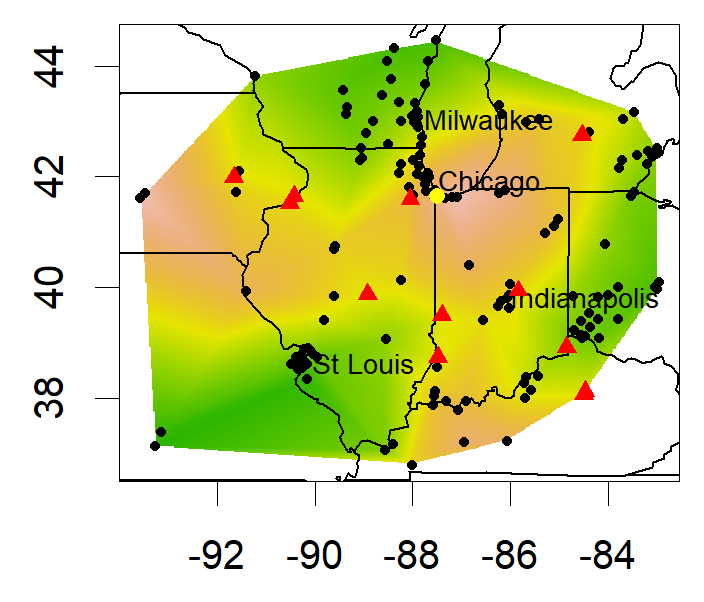}}
    \caption{Contours of the least-squares and Huber thin-plate spline coefficient function estimates on the convex hull of the data on the left and right panels, respectively. Lighter colors correspond to larger values of the estimates. The symbols (\circled, \circledd,  \triangled ) respectively indicate the positions of the monitoring stations, the position of our reference station and the outlying observations detected by the Huber estimates. }
    \label{fig:ozone1}
\end{figure}

\section{Application: Ozone Concentration in Midwestern United States}
\label{sec:6}

It is well-known that stratospheric ozone is beneficial to living organisms, as it protects them from ultraviolet sun radiation. High concentration of ground ozone, on the other hand, has been linked with a variety of respiratory problems, primarily for children and the elderly. Therefore, the ability to predict high concentrations of ground ozone may be helpful. As an illustration of the practical usefulness of the proposed family of estimators we analyze ground level ozone concentration in south Chicago as a function of ozone concentration in a broad area of the midwestern US. In particular, the data for this analysis consists of 7-hour average measurements of ground level ozone from 9am to 4pm in parts per billion (PPB) from $153$ stations during 89 days and are freely available as a part of the \texttt{fields} \textsf{R}-package \citep{Nyckha:2015} on CRAN.

To explain our approach in detail, let $\{Y_i\}_{i=1}^{89}$ denote the ground level ozone concentration in south Chicago on each one of the 89 days of measurement and let $\{X_i(t,s)\}_{i=1}^{89}$ denote the ground level ozone concentration on each one of these days as a function of longitude and latitude, denoted by $t$ and $s$, respectively. We consider the functional linear model
\begin{align*}
    Y_i = \alpha_0 +  \int_{37}^{45} \int_{-93}^{-83} X_i(t,s) \beta_0(t,s) \dd t \dd s + \epsilon_i, \quad (i=1, \ldots, 89),
\end{align*}
for some unknown $(\alpha_0,\beta_0) \in \mathbbm{R} \times \mathcal{H}^m(\mathbbm{R}^2)$. Here, the boundaries of integration result from the geographical positions of the measuring stations within the US. Of course, we are immediately faced with an incomplete data problem, as each $X_i$ is only observed at $152$ stations rather than in its entirety.  Nevertheless, our methodology is applicable and in order to estimate $(\alpha_0,\beta_0)$ we compute both the least-squares and Huber thin-plate spline estimators, which yield the contour plots in Figure~\ref{fig:ozone1}. Note that, since the ozone concentration reported by the monitoring station in south Chicago is used as the response variable, the geographic location of this station is excluded from the dataset.

\begin{figure}[H]
    \centering
    \subfloat{\includegraphics[width = 0.499\textwidth]{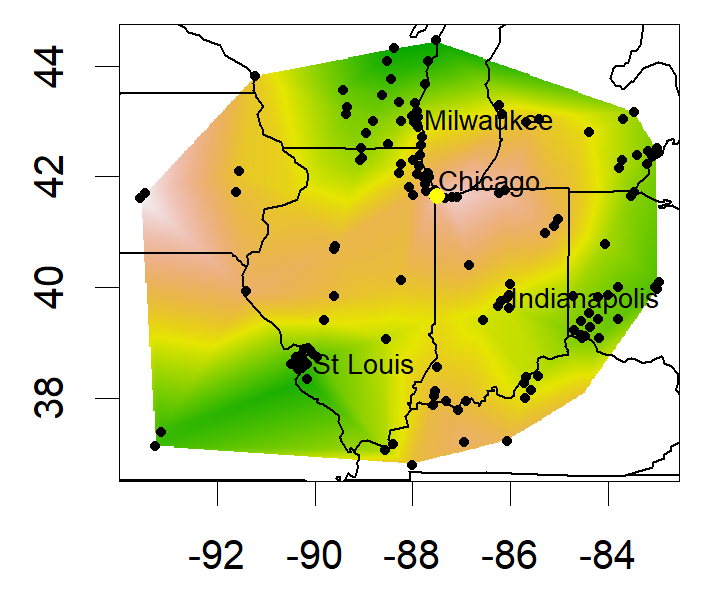}} \subfloat{\includegraphics[width = 0.499\textwidth]{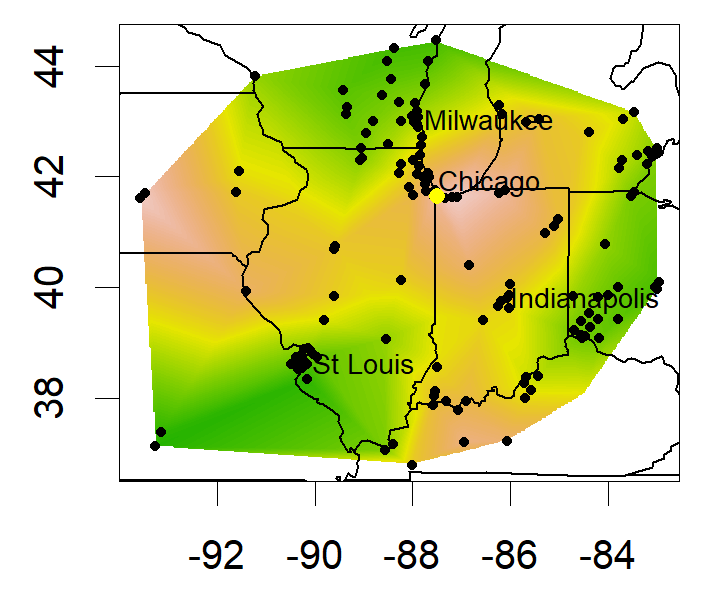}}
    \caption{Contours of the least-squares and Huber thin-plate spline coefficient function estimates on the convex hull of the data on the left and right panels, respectively, after the outliers have been removed. Lighter colors correspond to larger values of the estimates. The symbols (\circled, \circledd) respectively indicate the positions of the monitoring stations and the position of our reference station. }
    \label{fig:ozone2}
\end{figure}

It is natural to hypothesize that surface ozone concentration in one place would be positively correlated with surface ozone concentration in nearby places and that this correlation would wane as the distance increases. In other words, it is primarily ozone concentrations immediately in the vicinity of south Chicago that should be the most relevant. However, an examination of the contour plots in Figure~\ref{fig:ozone2} shows that while this is mostly true for the Huber estimates, the least-squares estimates indicate that the ozone concentrations in a large area in the north-west and south-east of St. Louis are very important predictors for the ozone concentration in south Chicago. This may seem puzzling, as the distance between these two places is more than 450 kilometers and so we may wonder whether the least-squares estimates are being distorted by atypical observations. 

In order to detect atypical and influential observations we may use the residuals of the more resistant Huber estimates. These estimates are much less drawn to atypical observations and therefore atypical observations show as large residuals. A classical detection rule in that respect involves examining the standardized residuals $r_i/\widehat\sigma_n$ and classifying as outliers observations whose absolute standardized residuals exceed $2.6$ \citep[Chapter 6]{Rousseeuw:1987}. In our case, this results in $14$ observations being classified as outliers; these observations are depicted with red triangles in the right panel of Figure~\ref{fig:ozone1}. 

It is interesting to observe that many of the outlying observations are located in or around the areas on which the least-squares and Huber estimates differ. As a sensitivity check, we remove the outlying observations from the data and recompute the estimators. Doing so now yields the contours in Figure~\ref{fig:ozone2}. These contours  reveal that while the Huber estimates have barely changed after the removal of the outlying observations, the least-squares estimates have undergone significant adjustment and are now very similar to the Huber estimates. Moreover, in the revised contour plot, the area around St. Louis seems to be much less influential in the determination of ground level ozone in Chicago and thus the estimates are now in line with prior expectations. This example demonstrates the sensitivity of least-squares estimators as well as the versatility of robust estimators, such as Huber estimators, which may be used to good effect both in the presence and absence of outliers and can even help detect outlying observations.

\section{Concluding Remarks}
\label{sec:7}

The theoretical and practical results of this paper provide justification for the use of a broad class of functional linear regression estimators in the ubiquitous setting of discretely sampled, possibly high dimensional, functional data. There are several research directions worth exploring from here, the most notable of which would be the generalization of our methodology to the setting of generalized linear models (GLM) allowing for different types of response variables, such as binary or count. The extension of our thin-plate spline estimator to the GLM context as well as the theoretical investigation of its properties in that setting is a promising topic for future research.

Another interesting and practically useful direction of future research would be the investigation of the theoretical properties of robust model selection procedures, such as the $\tau$-scale of the leave-one-out residuals employed herein. To the best of our knowledge, there do not currently exist any theoretical results with respect to robust model selection in the functional data setting despite the fact that the vast majority of available estimation procedures make use of robust model selection criteria. Thus, research in this direction would fill a gap in the literature and provide practical guarantees with respect to the use of robust model selection procedures.

\section*{Acknowledgments}

Most of the present research was carried out while I. Kalogridis was visiting Charles University. I. Kalogridis gratefully acknowledges support from the Research Foundation-Flanders (project 1221122N). The research of S. Nagy was supported by Czech Science Foundation (project n. 23-05737S).

\appendix
\section{Appendix: Proofs of our Theoretical Results}

In our proofs, we occasionally use inequalities of the form $A \leq B$ for $A, B$ random variables defined on the same probability space $\Omega$. These inequalities are always to be interpreted in the almost sure sense, i.e. $A(\omega) \leq B(\omega)$ for $\mathbb{P}$-almost all $\omega \in \Omega$. Moreover, we use $\langle \cdot, \cdot \rangle_p$ denote the discrete inner product, i.e., $\langle f, g \rangle_p =  \sum_{j=1}^p f(\mathbf{t}_j) g(\mathbf{t}_j) \mu(A_j)$ and $\|\cdot\|_p$ to denote the associated norm, i.e., $\|f\|_p = \langle f, f\rangle_p^{1/2}$. For the proofs of Theorem~\ref{thm:1}, Corollary~\ref{cor:2}, Theorem~\ref{thm:2} and Theorem~\ref{thm:3} we assume that $\mu(A_j) = \mu(\mathcal I)/p$, but this assumption is not needed for the proofs of Proposition~\ref{Prop:1}, Corollary~\ref{Cor:1} and Proposition~\ref{Prop:2}. 

\subsection{Proof of Proposition~\ref{Prop:1}} 


We may restrict attention to $\beta \in \mathcal{H}^m(\mathbbm{R}^d)$, as $\alpha$ is a finite-dimensional parameter and therefore the existence of a minimizer $\widehat{\alpha}_n$ follows from the convexity of $\rho$. Our proof consists of two steps related to the existence and characterization of the minimizer $\widehat{\beta}_n \in \mathcal{H}^m(\mathbbm{R}^d)$.

\mypar{Step 1: Existence.} We aim to apply the existence theorem given by \citet[Theorem 2.9]{Gu:2013} and for this we need to characterize the null space of the penalty functional $J_m$. We claim that $J_m(g) = 0 \iff g \equiv 0$ so that the null space of $J_m$ consists only of the zero element. To prove this we shall show that $J_m(g) = 0 \iff g|_{\mathcal{I}} \equiv 0$ and $g$ is a polynomial on $\mathbbm{R}^d$ of total order $m$. Necessity is obvious, hence we focus on the sufficiency part, that is, we prove that (i) $J_m(g) = 0 \implies g|_{\mathcal{I}} = 0$, and (ii) \label{condii} $g$ is a polynomial on $\mathbbm{R}^d$ of total order $m$. Implication~(ii) is clear, as $J_m(g) = 0$ implies that $I_m(g) = 0$, which in turn implies that $g$ is a polynomial of total order $m$. Therefore, we need to show that, for every $g \in \mathcal{H}^m\left(\mathbbm{R}^d\right)$, $J_m(g) = 0$ implies that $g|_{\mathcal{I}} = 0$.  To see this, recall that by \ref{A1}, $\mathcal{I}$ is a bounded open set satisfying the uniform cone condition. One can thus consider also the Sobolev space $\mathcal{H}^{m}\left(\mathcal{I}\right)$ of functions $f \colon \mathcal{I} \to \mathbbm{R}$ defined analogously as $\mathcal{H}^m\left(\mathbbm{R}^d\right)$, equipped with the norm $\left\Vert \cdot \right\Vert_{\mathcal{H}^{m}\left(\mathcal{I}\right)}$ defined by
\begin{align}
    \label{eq:norm}
    \left\|g\right\|_{\mathcal{H}^m(\mathcal{I})}^2 = \int_{\mathcal{I}} |g(\mathbf{t})|^2 \dd \mathbf{t} + \sum_{m_1 + \ldots + m_d = m} \binom{m}{m_1, \ldots, m_d} \int_{\mathcal{I}} \left( \frac{\partial^m g(\mathbf{t})}{\partial t_{1}^{m_1} \ldots \partial t_{d}^{m_d}}  \right)^2 \dd \mathbf{t},
 \end{align}
where naturally $\mathbf{t} = (t_1, \dots, t_d)$. By Theorem 3.4 of \citet{Utr:1988}, there exist $C_0$ and $B_1$ depending only on $m, d, \mathcal{I}$ such that
\begin{align*}
\int_{\mathcal{I}}|g(\mathbf{t})|^2 d\mathbf{t} \leq \frac{C_0}{p} \sum_{j=1}^p |g(\mathbf{t}_j)|^2 + C_0B_1^{2m} \sum_{m_1 + \ldots + m_d = m} \binom{m}{m_1, \ldots, m_d} \int_{\mathcal{I}} \left( \frac{\partial^m g(\mathbf{t})}{\partial t_{1}^{m_1} \ldots \partial t_{d}^{m_d}}  \right)^2 \dd \mathbf{t}.
\end{align*}
By definition of $J_m(g)$ and since $\mathcal{I} \subset \mathbbm{R}^d$, we now find that 
\begin{align*}
\int_{\mathcal{I}}|g(\mathbf{t})|^2 \dd \mathbf{t} & \leq \frac{C_0}{p} \sum_{j=1}^p |g(\mathbf{t}_j)|^2 + C_0 B_1^{2m} \sum_{m_1 + \ldots + m_d = m} \binom{m}{m_1, \ldots, m_d} \int_{\mathbbm{R}^d} \left( \frac{\partial^m g(\mathbf{t})}{\partial t_{1}^{m_1} \ldots \partial t_{d}^{m_d}}  \right)^2 \dd \mathbf{t}
\\ & = 0,
\end{align*}
as $J_m(g) = 0$ implies that $g(\mathbf{t}_j) = 0$ for $j = 1, \ldots, p$, due to the fact that, by construction, $\mu(A_j)>0$. By hypothesis, $\mathcal{I}$ satisfies the uniform cone condition, so for $2m>d$ the Sobolev embedding theorem \citep[Theorem 6.3]{Adams:2003} implies the (compact) embedding
\begin{align*}
\mathcal{H}^{m}\left(\mathcal{I}\right) \to \mathcal{C}(\mathcal{I}),
\end{align*}
with the consequence that there exists a $c_0>0$ with the property that
\begin{align*}
\sup_{\mathbf{t} \in \mathcal{I}}\left|g(\mathbf{t})\right| \leq c_0 \left\|g\right\|_{\mathcal{H}^{m}\left(\mathcal{I}\right)},
\end{align*}
for every $g \in \mathcal{H}^{m}(\mathcal{I})$. Clearly, for every $g \in \mathcal{H}^{m}(\mathbbm{R}^d)$, $g|_{\mathcal{I}} \in \mathcal{H}^{m}(\mathcal{I})$, hence
\begin{align*}
\sup_{\mathbf{t} \in \mathcal{I}}\left|g|_{\mathcal{I}}(\mathbf{t})\right| & \leq c_0 \left\|g\right\|_{\mathcal{H}^m(\mathcal{I})} = 0,
\end{align*}
proving the assertion. Thus, the null space of $J_m$ consists only of the zero function so that the uniqueness condition of Theorem 2.9 in \citet{Gu:2013} is trivially satisfied. That theorem therefore applies and yields the existence of a minimizer $\widehat{\beta}_n \in \mathcal{H}^{m}(\mathbbm{R}^d)$ concluding the first step in our proof.

\mypar{Step 2: Characterization.} We claim that under the conditions of the proposition $\widehat{\beta}_n$ must necessarily be a natural thin-plate spline with knots at the $\{\mathbf{t}_j\}_{j=1}^p$. To show this, we apply Theorem~4 of \citet{Duchon:1977} according to which there exists exactly one natural thin-plate spline, $s$, taking prescribed values at the $\{\mathbf{t}_j\}_{j=1}^p$ while minimizing $I_m^2(g)$. That is, for every set of points $y_1, \ldots, y_p \in \mathbbm{R}$, $s: \mathbbm{R}^d \to \mathbbm{R}$ it solves the problem
\begin{align*}
\min_{g \in \mathcal{H}^m\left(\mathbbm{R}^d\right)} I_m^2(g), \quad \text{subject to} \quad g(\mathbf{t}_j) = y_j, \ (j = 1, \ldots, p).
\end{align*}
By \textbf{Step~1} of this proof, there exists a minimizer $\widehat{\beta}_n \in \mathcal{H}^{m}(\mathbbm{R}^d)$. Denote $\widehat{\boldsymbol{\beta}}_n = (\widehat{\beta}_n(\mathbf{t}_1), \ldots, \widehat{\beta}_n(\mathbf{t}_p))$. Then, by Theorem~4 of \citet{Duchon:1977}, there exists a natural thin-plate spline, $\widehat{s}_n \in \mathcal{H}^{m}(\mathbbm{R}^d)$, interpolating $\widehat{\beta}_n$ at the $\{\mathbf{t}_j\}_{j=1}^p$ while $I_m(\widehat{s}_n) <  I_m(\widehat{\beta}_n)$. Because of these properties, it follows that  $\widehat{\beta}_n$ must itself be a natural thin-plate spline. Indeed, let $L_n(\widehat{\alpha}_n, \widehat{\beta}_n, \widehat\sigma_n)$ denote the minimal value of the objective function, that is,
\begin{align*}
L_n(\widehat{\alpha}_n, \widehat{\beta}_n, \widehat\sigma_n) = \frac{1}{n} \sum_{i=1}^n \rho\left(\frac{Y_i-\widehat{\alpha}_n -  \sum_{j=1}^p X_i(\mathbf{t}_j) \widehat{\beta}_n(\mathbf{t}_j) \mu(A_j)}{\widehat\sigma_n} \right) + \lambda \left[\sum_{j=1}^p |\widehat{\beta}_n(\mathbf{t}_j)|^2 \mu(A_j) + I_m^2\left(\widehat{\beta}_n \right) \right].
\end{align*}
If $\widehat{\beta}_n$ were not a natural thin-plate spline, then $\widehat{s}_n$ would leave the first two terms of $L_n(\widehat{\alpha}_n, \widehat{\beta}_n, \widehat\sigma_n)$ unchanged, but, due to $I_m(\widehat{s}_n) <  I_m(\widehat{\beta}_n)$, we would have $L_n(\widehat{\alpha}_n, \widehat{s}_n, \widehat\sigma_n) < L_n(\widehat{\alpha}_n, \widehat{\beta}_n,\widehat\sigma_n)$, contradicting the fact that $\widehat{\beta}_n$ is a minimizer. The proof is complete.

\subsection{Proof of Corollary~\ref{Cor:1}}

Let $L_{n, \inf}$ denote the infimum of the objective function, that is,
\begin{align*}
L_{n, \inf} = \inf_{(\alpha,\beta) \in \mathbbm{R}\times  \mathcal{H}^{m}(\mathbbm{R}^d)} \left[ \frac{1}{n} \sum_{i=1}^n \rho\left(\frac{Y_i-\alpha -  \sum_{j=1}^p X_i(\mathbf{t}_j) \beta(\mathbf{t}_j) \mu(A_j)}{\widehat\sigma_n} \right) + \lambda J_m^2\left(\beta \right)\right].
\end{align*}
Note that $L_{n,\inf}$ is finite, as the objective function is bounded from below by $0$. Define $\mathcal{D} = \{(\alpha, \beta) \in \mathbbm{R} \times \mathcal{H}^m(\mathbbm{R}^d): L_n(\alpha, \beta, \widehat\sigma _n) = L_{n,\inf} \}$. Clearly, $\mathcal{D} = \bigcap_{k=1}^{\infty} \mathcal{D}_k$ with $\mathcal{D}_k =  \{(\alpha, \beta) \in \mathbbm{R} \times \mathcal{H}^m(\mathbbm{R}^d): L_n(\alpha, \beta, \widehat\sigma_n) \leq L_{n,\inf} + k^{-1} \}$. By the convexity of $L_n(\alpha, \beta, \widehat\sigma_n)$, each $\mathcal{D}_k$ is a convex set. Hence $\mathcal{D}$, as the intersection of convex sets, is also convex. By Proposition~\ref{Prop:1}, it is also non-empty, as $(\widehat{\alpha}_n, \widehat{\beta}_n) \in \mathcal{D}$. Suppose now that there exist distinct $(\alpha_1, \beta_1)$ and $(\alpha_2, \beta_2)$ in $\mathcal{D}$. By the convexity of $\mathcal{D}$ and the strict convexity of $\rho(x)$, we would have
\begin{align*}
L_{n,\inf} = L_n(\alpha_1/2+\alpha_2/2, \beta_1/2+\beta_2/2, \widehat\sigma_n) < L_n(\alpha_1, \beta_1, \widehat\sigma_n)/2 + L_n(\alpha_2, \beta_2, \widehat\sigma_n)/2 = L_{n, \inf},
\end{align*}
which is clearly impossible. It follows that $\mathcal{D}$ must consist of a single element, as was to be shown.

\subsection{Proof of Proposition~\ref{Prop:2}}



Define for simplicity $\|\beta\|_p^2 = p^{-1} \sum_{j=1}^p |\beta(t_j)|^2$ so that
\begin{align*}
\widetilde{J}_m^{ 2}(\beta) = \left\|\mathcal{P}\beta \right\|^2_p + \left\|\beta^{(m)}\right\|^2.
\end{align*}
By the properties of Hilbert projections \citep[see, e.g.,][Theorem 2.7]{Conway:1990}, we have $\|\mathcal{P}\beta \|^2_p \leq \|\beta\|_p^2$ and from this we can immediately see that $\widetilde{J}_m^{2}(\beta) \leq J_m^2(\beta)$. Thus, Proposition~\ref{Prop:2} holds with $c_1 = 1$. 

To prove the other inequality, use the Hilbert projection theorem \citep[Theorem 2.5]{Conway:1990} to write
\begin{align*}
\beta = \mathcal{P}\beta +  (\mathcal{I}-\mathcal{P})\beta.
\end{align*}
with $\mathcal{I}$ denoting the identity operator on $\mathcal{H}^m(0,1)$. Clearly,
\begin{align*}
J_m^2(\beta) & = \left\|\mathcal{P}\beta\right\|_p^2 + \left\|\beta- \mathcal{P}\beta \right\|_p^2 + 2 \langle \mathcal{P}\beta, \beta- \mathcal{P}\beta \rangle_p + \left\|\beta^{(m)}\right\|^2 
\\ & \leq 2 \left(\left\|\mathcal{P}\beta\right\|_p^2 + \left\|\beta- \mathcal{P}\beta \right\|_p^2\right) + \left\|\beta^{(m)}\right\|^2.
\end{align*}
Now, again by properties of the projections \citep[Theorem 2.5]{Conway:1990}, we have $ \|\beta- \mathcal{P}\beta\|_p^2 \leq \|\beta- v\|_p^2$ for every other polynomial $v$ of order $m$. Choose $v$ to be the Taylor polynomial about zero of order $m$, viz,
\begin{align*}
v(t) =  \sum_{i=0}^{m-1} \beta^{(i)}(0) \frac{t^{i}}{i!},
\end{align*}
so that, using the integral form of the remainder for Taylor polynomials and the Cauchy-Schwarz inequality,
\begin{align*}
\left\|\beta- \mathcal{P}\beta \right\|_p^2 \leq \left\|\beta- v \right\|_p^2 =  \frac{1}{p} \sum_{j=1}^p \left| \int_{0}^1 \frac{(t_j-u)_{+}^{m-1}}{(m-1)!} \beta^{(m)}(u) \dd u \right|^2 \leq C \left\|\beta^{(m)} \right\|^2,
\end{align*}
for some $C>0$ depending only on $m$. It follows that
\begin{align*}
J_m^2(\beta) \leq (2+2C)\widetilde{J}_m^{2}(\beta) = c_2^2 \widetilde{J}_m^{2}(\beta),
\end{align*}
for $c_2 = \sqrt{2(1+C)}$, as was to be shown.

%
%
%

\subsection{Auxiliary Results on Empirical Processes}

Our next task is to derive the modulus of continuity of the empirical process associated with our estimator, which we do in general terms. Therefore, the notations used in this section are mostly independent of that used in the rest of our paper.

Let $\mathcal{B}$ denote a subset of a (semi-)metric space $(\Lambda, {\mathfrak D})$ endowed with a semi-metric ${\mathfrak D}$ and let $N(\delta, \mathcal{B},{\mathfrak D})$ denote the number of balls required to cover $\mathcal{B}$, that is, 
\begin{align*}
N(\delta, \mathcal{B},{\mathfrak D}) = \min(N: \ \text{there exist} \  \{\beta_1, \ldots, \beta_N\} \subset \mathcal{B} \ \text{such that} \ \max_{\beta \in \mathcal{B}} \min_{j=1, \ldots, N} {\mathfrak D}(\beta, \beta_j) \leq \delta  ).
\end{align*}
Further, let $H(\delta, \mathcal{B},{\mathfrak D}) = \log N(\delta, \mathcal{B},{\mathfrak D})$ denote the $\delta$-entropy of $\mathcal{B}$. We assume that the semi-metric ${\mathfrak D}$ is of the form
\begin{align}   \label{eq:frak di}
{\mathfrak D}^2 = \frac{1}{n} \sum_{i=1}^n {\mathfrak D}_i^2,
\end{align}
where ${\mathfrak D}_1, \ldots, {\mathfrak D}_n$ are also semi-metrics on $\mathcal{B}$. Consider real-valued random variables $U_{i,\beta}, i =1, \ldots, n,\ \beta \in \mathcal{B}$. We assume that the processes $\{ U_{i,\beta}: \beta \in \mathcal{B}\}$, $i = 1, \ldots, n$, are independent and centered, and that
\begin{align}
\label{VDG:1}
\left|U_{i,\beta} - U_{i, \widetilde{\beta}} \right| \leq V_i\, {\mathfrak D}_i(\beta, \widetilde{\beta}), \quad( i =1, \ldots, n),\ \beta, \widetilde{\beta} \in \mathcal{B}
\end{align}
for uniformly sub-Gaussian $V_i$. That is,
\begin{align}
\label{VDG:2}
\max_{1 \leq i \leq n} K^2 \mathbb{E}\left\{e^{V_i^2/K^2}-1 \right\} \leq \sigma_0^2,
\end{align}
for positive $K$ and $\sigma_0$. Lemma 8.5 in \citet{vandeGeer:2000} presents an exponential inequality for the supremum of the difference of these processes.

\begin{lemma}[Lemma 8.5 in \citealp{vandeGeer:2000}]	
\label{lem:1}
Assume \eqref{VDG:1} and \eqref{VDG:2} and that $\sup_{\beta \in \mathcal{B}} {\mathfrak D}(\beta, \beta_0) \leq R$ for some $\beta_0 \in \Lambda$. Then for some constant $\widetilde{C}$ depending only on $K$ and $\sigma_0$ and for all $\delta>0$ and $\sigma>0$ satisfying
\begin{align}
\label{VDG:3}
\delta \geq \widetilde{C} \max\left( \int_{\delta/(8\sigma)}^{R} H^{1/2}(u, \mathcal{B}, {\mathfrak D}) \dd u, R \right),
\end{align}
we have
\begin{align}
\label{VDG:4}
\mathbb{P} \left( \sup_{\beta \in \mathcal{B}} \left|\frac{1}{\sqrt{n}} \sum_{i=1}^n \left(U_{i,\beta}-U_{i,\beta_0} \right) \right| \geq \delta, \frac{1}{n} \sum_{i=1}^n V_i^2 \leq \sigma^2 \right) \leq \widetilde{C} \exp\left[ - \frac{\delta^2}{\widetilde{C}^2 R^2} \right].
\end{align}
\end{lemma}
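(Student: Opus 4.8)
This is Lemma~8.5 of \citet{vandeGeer:2000}, so in the paper we simply invoke it; for completeness I outline the chaining argument on which it rests. The plan is a Dudley-type chaining bound for the process $\beta\mapsto n^{-1/2}\sum_{i=1}^n(U_{i,\beta}-U_{i,\beta_0})$ on $(\mathcal B,\mathfrak D)$, combined with a sub-Gaussian tail for single increments and a truncation step that accommodates the fact that, by \eqref{VDG:2}, the Lipschitz weights $V_i$ in \eqref{VDG:1} are only uniformly sub-Gaussian rather than bounded.

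First I would establish the one-increment bound. Since the processes $\{U_{i,\beta}:\beta\in\mathcal B\}$ are independent and centered, a symmetrization inequality reduces matters to the Rademacher-symmetrized process $n^{-1/2}\sum_i e_i(U_{i,\beta}-U_{i,\widetilde\beta})$ with $e_i$ independent signs; the event $\{n^{-1}\sum_i V_i^2\le\sigma^2\}$ appearing in \eqref{VDG:4} is measurable with respect to the conditioning and is carried along throughout. Conditionally on the data, Hoeffding's inequality for weighted Rademacher sums gives a sub-Gaussian tail with variance proxy $n^{-1}\sum_i(U_{i,\beta}-U_{i,\widetilde\beta})^2$, which by \eqref{VDG:1} is at most $n^{-1}\sum_i V_i^2\,\mathfrak D_i^2(\beta,\widetilde\beta)$. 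On the event $\{n^{-1}\sum_i V_i^2\le\sigma^2\}$, together with the high-probability event on which the weights are simultaneously of order $K\sqrt{\log n}$ supplied by \eqref{VDG:2}, this variance proxy is controlled up to constants by $\sigma^2\mathfrak D^2(\beta,\widetilde\beta)$ via the defining identity \eqref{eq:frak di}. Hence on the good event the symmetrized process has sub-Gaussian increments with respect to the deterministic (rescaled) metric $\sigma\mathfrak D$.

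Next I would run the chaining. Fix a nested sequence of minimal $2^{-k}$-nets of $(\mathcal B,\mathfrak D)$, of log-cardinality $H(2^{-k},\mathcal B,\mathfrak D)$, and chain each $\beta$ back to $\beta_0$, but truncate the chain at resolution $\asymp\delta/(8\sigma)$ rather than refining indefinitely; this truncation is precisely what produces the lower limit $\delta/(8\sigma)$ of the entropy integral in \eqref{VDG:3}. A union bound over the successive nets, together with the one-increment sub-Gaussian tail, shows that the sum of the increments along the links of the truncated chain is $O\big(\widetilde C\int_{\delta/(8\sigma)}^{R}H^{1/2}(u,\mathcal B,\mathfrak D)\,\dd u\big)$ with overwhelming probability, while the residual link, from the last net point to $\beta$ itself, has $\mathfrak D$-length at most $\delta/(8\sigma)$ and so contributes only a sub-Gaussian fluctuation at scale $\delta/8$. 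Under \eqref{VDG:3} these contributions add up to less than $\delta$ except on an event of probability at most $\widetilde C\exp(-\delta^2/(\widetilde C^2R^2))$; the $R^2$ in the exponent reflects that $R$ bounds the $\mathfrak D$-radius of $\mathcal B$ about $\beta_0$ and that \eqref{VDG:3} already forces $\delta\ge\widetilde C R$.

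The step I expect to require the most care is the passage from the genuinely random pseudo-metric $\beta\mapsto (n^{-1}\sum_i(U_{i,\beta}-U_{i,\beta_0})^2)^{1/2}$ to the deterministic metric $\sigma\mathfrak D$ uniformly over $\mathcal B$, while keeping all constants free of $n$ and of the entropy; this is exactly what forces the joint use of the event $\{n^{-1}\sum_i V_i^2\le\sigma^2\}$ and the uniform sub-Gaussianity \eqref{VDG:2}. A second, more bookkeeping-type subtlety is the calibration of constants in the chaining so that the residual link lands at scale $\delta/8$ and the overall tail collapses to the stated $R^2$-exponent. Both are carried out in detail in \citet[Chapter~8]{vandeGeer:2000}, to which I would ultimately defer for the routine parts.
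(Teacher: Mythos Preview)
Your proposal is correct: the paper does not prove this lemma at all but simply states it with attribution to \citet[Lemma~8.5]{vandeGeer:2000} and then uses it as a black box in the proof of Lemma~\ref{lem:2}. Your chaining sketch is additional content beyond what the paper provides, but it is a faithful outline of the argument in \citet{vandeGeer:2000}.
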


This lemma is not directly applicable in our proofs, as in our setting we aim to identify $\Lambda$ with the Sobolev space $\mathcal{H}^{m}(\mathcal{I})$ and $\mathcal{B}$ with a large subset of $\mathcal{H}^{m}(\mathcal{I})$ that does not have finite entropy. That is, the entropy integral in \eqref{VDG:3} diverges. But we will decompose $\mathcal{B}$ as
\begin{align*}
\mathcal{B} = \bigcup_{M \geq 1} \mathcal{B}_{M},
\end{align*}
with $\mathcal{B}_{M} = \{f \in \mathcal{B}, 1+J_m(f) \leq M\}$, where in general, $J_m$ can be any function $J_m \colon \mathcal{B} \to \mathbbm{R}_+$, such that $\mathcal{B}_{M}$ has finite entropy. In fact, under our assumptions it will be shown that
\begin{align*}
H(\delta, \mathcal{B}_M, {\mathfrak D}) \leq A_0 \left( \frac{M}{\delta} \right)^{\frac{d}{m}}, \quad \delta>0,
\end{align*}
for $A_0 > 0$ a constant, with the consequence that the corresponding entropy integral behaves like
\begin{align}
\label{VDG:5}
\int_{0}^{\delta} H^{1/2}(u, \mathcal{B}_M, {\mathfrak D}) \dd u \leq A_0^{\star} M^{\frac{d}{2m}} \delta^{1-\frac{d}{2m}}, \quad \delta>0,
\end{align}
for some global $A_0^{\star}>0$ depending only on $m$. In~\eqref{VDG:5}, we have also taken $\sigma \to \infty$ in the RHS of \eqref{VDG:3}. This is permissible,  because the LHS of~\eqref{VDG:5} converges for $2m>d$. Lemma~\ref{lem:2} below greatly extends Lemma~\ref{lem:1} by establishing the modulus of continuity of 
\begin{align*}
\sup_{\beta \in \mathcal{B}} \left|\frac{1}{\sqrt{n}} \sum_{i=1}^n \left( U_{i,\beta}-U_{i,\beta_0} \right) \right|,
\end{align*}
both in terms of ${\mathfrak D}$ and the function $J_m$. The result is not only useful for our purposes but it can also be of general interest.

\begin{lemma}
\label{lem:2}
Assume \eqref{VDG:1}, \eqref{VDG:2}, \eqref{VDG:5} and that $\sup_{\beta \in \mathcal{B}} {\mathfrak D}(\beta, \beta_0) \leq R$ for some $\beta_0 \in \Lambda$. Then, there exists $c>0$ such that for all $T \geq c$ we have
\begin{align}
\label{VDG:6}
\mathbb{P} \left( \sup_{\beta \in \mathcal{B}} \frac{\left|\frac{1}{\sqrt{n}} \sum_{i=1}^n \left(U_{i,\beta}-U_{i,\beta_0} \right) \right|}{{\mathfrak D}^{1-\frac{d}{2m}}(\beta, \beta_0) \{1+J_m(\beta)\}^{\frac{d}{2m}}} \geq T, \frac{1}{n} \sum_{i=1}^n V_i^2 \leq \sigma^2 \right) \leq c \exp\left[ - \frac{T^2}{c^2} \right].
\end{align}
\end{lemma}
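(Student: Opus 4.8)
The plan is to obtain \eqref{VDG:6} from Lemma~\ref{lem:1} by a two–parameter peeling (slicing) argument — one parameter localizing ${\mathfrak D}(\beta,\beta_0)$, the other localizing $1+J_m(\beta)$, the quantity on which the entropy bound \eqref{VDG:5} depends. Write $\nu_n(\beta) := \frac{1}{\sqrt n}\sum_{i=1}^n(U_{i,\beta}-U_{i,\beta_0})$ and, for integers $s,\ell\ge 0$, set
$$\mathcal{B}_{s,\ell} := \left\{ \beta\in\mathcal{B}:\ 2^{-s-1}R < {\mathfrak D}(\beta,\beta_0)\le 2^{-s}R,\ \ 2^{\ell}\le 1+J_m(\beta) < 2^{\ell+1}\right\}.$$
Since $\sup_{\beta\in\mathcal{B}}{\mathfrak D}(\beta,\beta_0)\le R$ and $J_m\ge 0$, the family $\{\mathcal{B}_{s,\ell}\}_{s,\ell\ge 0}$ covers $\mathcal{B}$ except for the set $\{\beta:{\mathfrak D}(\beta,\beta_0)=0\}$, on which $\nu_n(\beta)=0$ almost surely by \eqref{VDG:1} and which thus does not contribute to the supremum in \eqref{VDG:6}. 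On the event in \eqref{VDG:6} there is then some pair $(s,\ell)$ and some $\beta\in\mathcal{B}_{s,\ell}$ with $|\nu_n(\beta)|\ge T\,{\mathfrak D}^{1-\frac{d}{2m}}(\beta,\beta_0)\{1+J_m(\beta)\}^{\frac{d}{2m}}$; since $2m>d$ makes both exponents positive, the lower bounds defining $\mathcal{B}_{s,\ell}$ give $\sup_{\beta\in\mathcal{B}_{s,\ell}}|\nu_n(\beta)|\ge \delta_{s,\ell}$ with $\delta_{s,\ell} := T\,(2^{-s-1}R)^{1-\frac{d}{2m}}\,2^{\ell\frac{d}{2m}}$.

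Next I would apply Lemma~\ref{lem:1} to each shell $\mathcal{B}_{s,\ell}$, taking the radius bound to be $R_s := 2^{-s}R\ge\sup_{\beta\in\mathcal{B}_{s,\ell}}{\mathfrak D}(\beta,\beta_0)$ and $\delta=\delta_{s,\ell}$. Condition \eqref{VDG:3} has to be checked: because $\mathcal{B}_{s,\ell}\subseteq\mathcal{B}_{2^{\ell+1}}$, monotonicity of the entropy together with \eqref{VDG:5} (with the lower integration limit pushed down to $0$, which is legitimate since $\int_0^{R_s}H^{1/2}<\infty$ for $2m>d$) gives $\int_{\delta_{s,\ell}/(8\sigma)}^{R_s}H^{1/2}(u,\mathcal{B}_{s,\ell},{\mathfrak D})\,du \le A_0^\star\,2^{(\ell+1)\frac{d}{2m}}\,R_s^{1-\frac{d}{2m}}$. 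Substituting the explicit forms of $\delta_{s,\ell}$ and $R_s$, both parts of \eqref{VDG:3} collapse, after cancelling the common factors $2^{\ell\frac{d}{2m}}$ and $(2^{-s}R)^{1-\frac{d}{2m}}$, to inequalities of the shape $T\ge \mathrm{const}$, with the constant depending only on $\widetilde C,A_0^\star,R,m,d$ and \emph{not} on $(s,\ell)$; call the resulting threshold $c_1$. Hence for all $T\ge c_1$ Lemma~\ref{lem:1} yields, for every $s,\ell\ge0$, $\mathbb{P}\big(\sup_{\beta\in\mathcal{B}_{s,\ell}}|\nu_n(\beta)|\ge\delta_{s,\ell},\ \tfrac1n\sum_{i=1}^n V_i^2\le\sigma^2\big)\le \widetilde C\exp\big(-\delta_{s,\ell}^2/(\widetilde C^2 R_s^2)\big)$.

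It then remains to sum over $(s,\ell)$. A short computation gives $\delta_{s,\ell}^2/(\widetilde C^2R_s^2) = a_0\,2^{(s+\ell)\frac{d}{m}}$ with $a_0 = c_2\,T^2R^{-\frac{d}{m}}$ for a constant $c_2>0$, so the union bound over the shells is at most $\widetilde C\sum_{k\ge0}(k+1)\exp(-a_0 2^{k\frac{d}{m}})$. Bounding $2^{k\frac{d}{m}}\ge 1+k\frac{d\ln 2}{m}$ turns this into a convergent geometric-type series, at most $\widetilde C\,e^{-a_0}\big(1-e^{-a_0 d\ln2/m}\big)^{-2}$; since $a_0$ is bounded below whenever $T\ge c_1$, the denominator is bounded away from $0$, and the whole expression is at most $c_3\exp(-c_2 T^2R^{-d/m})$. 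Taking $c$ to be the maximum of $c_1$, $\sqrt{c_3}$ and $(c_2R^{-d/m})^{-1/2}$ (so that $c_2R^{-d/m}\ge c^{-2}$ and the prefactor $c_3\le c$) gives \eqref{VDG:6}. The main obstacle is the bookkeeping in the middle step: one must check that \eqref{VDG:3} holds \emph{simultaneously} across all shells under a single $(s,\ell)$-free threshold on $T$, which works precisely because the power $1-\frac{d}{2m}$ of ${\mathfrak D}$ in \eqref{VDG:6}, the power $\frac{d}{2m}$ of $M$ in the entropy estimate \eqref{VDG:5}, and the requirement $2m>d$ (ensuring the entropy integral converges at $0$) are mutually consistent; once this is in place, the summation is routine.
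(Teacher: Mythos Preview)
Your proposal is correct and takes essentially the same approach as the paper: both arguments derive \eqref{VDG:6} from Lemma~\ref{lem:1} via a two-parameter peeling in ${\mathfrak D}(\beta,\beta_0)$ and in $1+J_m(\beta)$, exploiting the entropy bound \eqref{VDG:5} and the fact that $2m>d$ makes the entropy integral finite at $0$. The only organizational difference is that the paper nests the two peelings---first establishing an intermediate inequality on $\mathcal{B}_M$ by peeling in ${\mathfrak D}$, then peeling in $M$---whereas you peel simultaneously over the product shells $\mathcal{B}_{s,\ell}$ and sum the resulting double series in one step; both routes yield the same exponent $\delta_{s,\ell}^2/R_s^2\asymp T^2R^{-d/m}2^{(s+\ell)d/m}$ and the same summable tail.
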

\begin{proof} Under our assumptions, the conditions of Lemma~\ref{lem:1} are satisfied for each $\mathcal{B}_M$, hence the lemma is applicable. Our proof consists of iterative application of \eqref{VDG:4} on each $\mathcal{B}_{M}$ and the peeling technique \citep[see, e.g.,][p. 70]{vandeGeer:2000}. We break down the proof in two steps where in the first step we deal with $\mathcal{B}_M$ and in the second step pass to $\mathcal{B}$.

\mypar{Step 1.} We will prove that there exists a $c_1>0$ such that for all $ M \geq 1$,
\begin{align}
\label{VDG:7}
\mathbb{P} \left(  \sup_{\beta \in \mathcal{B}_M} \frac{ \left|\frac{1}{\sqrt{n}} \sum_{i=1}^n \left( U_{i,\beta}-U_{i,\beta_0} \right) \right| }{ {\mathfrak D}^{1-\frac{d}{2m}}(\beta, \beta_0)} \geq T \left( \frac{M}{2} \right)^{\frac{d}{2m}}, \frac{1}{n} \sum_{i=1}^n V_i^2 \leq \sigma^2\right) \leq c_1 \exp\left[- \frac{T^2 M^{\frac{d}{m}}}{c_1^2} \right].
\end{align}
To show \eqref{VDG:7} let $T = 2 \widetilde{C} A_0^{\star}  $ and use Boole's inequality to obtain
\begin{align*}
\mathbb{P} &\left(  \sup_{\beta \in \mathcal{B}_M} \frac{ \left|\frac{1}{\sqrt{n}} \sum_{i=1}^n \left( U_{i,\beta}-U_{i,\beta_0} \right) \right| }{ {\mathfrak D}^{1-\frac{d}{2m}}(\beta, \beta_0)} \geq T \left( \frac{M}{2} \right)^{\frac{d}{2m}}, \frac{1}{n} \sum_{i=1}^n V_i^2 \leq \sigma^2 \right)
\\ 
& \leq \sum_{s = 1}^{\infty}  \mathbb{P} \left(  \sup_{\beta \in \mathcal{B}_M, 2^{-s} R \leq {\mathfrak D}(\beta, \beta_0) \leq 2^{-s+1}R} \frac{ \left|\frac{1}{\sqrt{n}}  \sum_{i=1}^n \left( U_{i,\beta}-U_{i,\beta_0} \right) \right| }{ {\mathfrak D}^{1-\frac{d}{2m}}(\beta, \beta_0)} \geq T \left( \frac{M}{2} \right)^\frac{d}{2m}, \frac{1}{n} \sum_{i=1}^n V_i^2 \leq \sigma^2  \right)
\\ 
& \leq \sum_{s = 1}^{\infty}  \mathbb{P} \left(  \sup_{\beta \in \mathcal{B}_M, {\mathfrak D}(\beta, \beta_0) \leq 2^{-s+1}R}  \left|\frac{1}{\sqrt{n}} \sum_{i=1}^n \left( U_{i,\beta}-U_{i,\beta_0} \right) \right| \geq T \left(2^{-s} R\right)^{1-\frac{d}{2m}} \left( \frac{M}{2} \right)^\frac{d}{2m}, \frac{1}{n} \sum_{i=1}^n V_i^2 \leq \sigma^2\right)
\\ 
& = \sum_{s = 1}^{\infty}  \mathbb{P} \left(  \sup_{\beta \in \mathcal{B}_M, {\mathfrak D}(\beta, \beta_0) \leq 2^{-s+1}R} \left|\frac{1}{\sqrt{n}} \sum_{i=1}^n \left( U_{i,\beta}-U_{i,\beta_0} \right) \right|   \geq \widetilde{C} A_0^{\star}  \left(2^{-s+1} R\right)^{1-\frac{d}{2m}} M^\frac{d}{2m}, \frac{1}{n} \sum_{i=1}^n V_i^2 \leq \sigma^2  \right).
\end{align*}
Recalling that $M \geq 1$, we may apply Lemma~\ref{lem:1} on each of the summands to see that
\begin{align*}
\mathbb{P} & \left(  \sup_{\beta \in \mathcal{B}_M} \frac{ \left|\frac{1}{\sqrt{n}} \sum_{i=1}^n \left( U_{i,\beta}-U_{i,\beta_0} \right) \right| }{ {\mathfrak D}^{1-\frac{d}{2m}}(\beta, \beta_0)} \geq T  \left( \frac{M}{2} \right)^\frac{d}{2m}, \frac{1}{n} \sum_{i=1}^n V_i^2 \leq \sigma^2\right) 
\\ & \quad \leq \sum_{s=1}^{\infty}  \widetilde{C} \exp \left[ - \frac{ |A_0^{\star}|^2 (2^{-s+1}R)^{-\frac{d}{m}} M^{\frac{d}{m}}}{\widetilde{C}^2} \right] \leq c_1 \exp \left[ -\frac{T^2 M^{\frac{d}{m}}}{c_1^2} \right],
\end{align*}
for some $c_1>0$ depending only on $\widetilde{C}$, $m$ and $d$, as the series converges. The proof of \eqref{VDG:7} is complete.

\mypar{Step 2.} In the second step we use again the peeling technique along with \eqref{VDG:7} in order to complete the proof of the lemma. In particular, we have
\begin{align*}
\mathbb{P}  & \left( \sup_{\beta \in \mathcal{B}} \frac{\left|\frac{1}{\sqrt{n}}\sum_{i=1}^n \left(U_{i,\beta}-U_{i,\beta_0} \right) \right|}{{\mathfrak D}^{1-\frac{d}{2m}}(\beta, \beta_0) \{1+J_m(\beta)\}^{\frac{d}{2m}}} \geq T, \frac{1}{n} \sum_{i=1}^n V_i^2 \leq \sigma^2\right) \\
& \leq \sum_{s = 0}^{\infty} \mathbb{P} \left( \sup_{\beta \in \mathcal{B}, 2^{s} \leq 1+J_m(\beta) \leq 2^{s+1}} \frac{\left|\frac{1}{\sqrt{n}} \sum_{i=1}^n \left(U_{i,\beta}-U_{i,\beta_0} \right) \right|}{{\mathfrak D}^{1-\frac{d}{2m}}(\beta, \beta_0) \{1+J_m(\beta)\}^{\frac{d}{2m}}} \geq T, \frac{1}{n} \sum_{i=1}^n V_i^2 \leq \sigma^2 \right)
\\ & \leq \sum_{s=0}^{\infty} \mathbb{P} \left( \sup_{\beta \in \mathcal{B}, 1+J_m(\beta) \leq 2^{s+1}} \frac{\left|\frac{1}{\sqrt{n}} \sum_{i=1}^n \left(U_{i,\beta}-U_{i,\beta_0} \right) \right|}{{\mathfrak D}^{1-\frac{d}{2m}}(\beta, \beta_0)} \geq T 2^\frac{sd}{2m}, \frac{1}{n} \sum_{i=1}^n V_i^2 \leq \sigma^2 \right)
\\ & = \sum_{s=0}^{\infty} \mathbb{P} \left( \sup_{\beta \in \mathcal{B}_{2^{s+1}}} \frac{\left|\frac{1}{\sqrt{n}} \sum_{i=1}^n \left(U_{i,\beta}-U_{i,\beta_0} \right) \right|}{{\mathfrak D}^{1-\frac{d}{2m}}(\beta, \beta_0)} \geq T \left(\frac{2^{s+1}}{2} \right)^\frac{d}{2m}, \frac{1}{n} \sum_{i=1}^n V_i^2 \leq \sigma^2 \right)
\\ & \leq \sum_{s=0}^{\infty} c_1 \exp\left[ -\frac{T^2 2^{\frac{(s+1)d}{m}}}{c_1^2} \right] \leq c \exp\left[ - \frac{T^2}{c^2}\right],
\end{align*}
for some $c>0$, as the series converges. This completes the proof of the lemma.

\end{proof}
\subsection{Proof of Theorem~\ref{thm:1}}

Before presenting the proof of Theorem~\ref{thm:1}, it will be helpful to state and prove an auxiliary lemma involving ratios of sequences.

\begin{lemma}
\label{lem:3}
Let $\{x_n\}_{n=1}^{\infty}$ denote a sequence of non-negative real numbers and for $a>b$ denote the sequences $y_n = x_n^{a}/(1+x_n)^b$ and $z_n = x_n/(1+x_n)$. Then,
\begin{enumerate}[label=\Alph*., ref=\Alph*]
    \item \label{1.} For any $M > 0$ such that $\sup_n y_n \leq M$ there exists a finite $B = B(M)>0$ such that $\sup_n x_n \leq B$.
    \item \label{2.} If there exists an $\varepsilon \in (0,1)$ such that $z_n \leq 1-\varepsilon$ for all large $n$, then we can find finite $B = B(\varepsilon) > 0$ such that $\sup_n x_n \leq B$.
\end{enumerate}
\end{lemma}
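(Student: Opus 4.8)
The plan is to reduce both statements to elementary divergence and monotonicity properties of the scalar functions $\phi(x) := x^{a}/(1+x)^{b}$ and $\psi(x) := x/(1+x)$ on $[0,\infty)$, and then apply these pointwise to each term $x_n$.

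For claim~\ref{1.}, the key fact is that, because $a>b$, the function $\phi$ diverges at infinity: for $x\ge 1$ we have $1+x\le 2x$, hence $\phi(x)\ge x^{a-b}/2^{b}\to\infty$ since $a-b>0$. Therefore there is a finite $B_0=B_0(M)$ — e.g. $B_0=\max\{1,(2^{b}M)^{1/(a-b)}\}$ — such that $x>B_0$ forces $\phi(x)>M$. Since by hypothesis $y_n=\phi(x_n)\le M$ for every $n$, no $x_n$ can exceed $B_0$, so $\sup_n x_n\le B_0=:B(M)$, which is the claim. (Equivalently: the sublevel set $\{x\ge 0:\phi(x)\le M\}$ is bounded, and every $x_n$ belongs to it.)

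For claim~\ref{2.}, I would use that $\psi$ is strictly increasing on $[0,\infty)$ with inverse $\psi^{-1}(u)=u/(1-u)$, $u\in[0,1)$. Hence the inequality $z_n=\psi(x_n)\le 1-\varepsilon$ is equivalent to $x_n\le(1-\varepsilon)/\varepsilon$, and by hypothesis this holds for all $n\ge n_0$, say. The finitely many remaining values $x_1,\dots,x_{n_0-1}$ are finite, so $\sup_n x_n\le \max\{x_1,\dots,x_{n_0-1},(1-\varepsilon)/\varepsilon\}<\infty$; apart from those immaterial initial terms, this bound depends only on $\varepsilon$.

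There is no genuine obstacle: the lemma merely isolates a numerical step needed later in the proof of Theorem~\ref{thm:1}, where one must pass from an a priori bound on a ratio of the form $x_n^{a}/(1+x_n)^{b}$ (with $x_n$ a penalty value $J_m(\widehat\beta_n)$ or a distance) to separate, finite control of the two ingredients. The only points to keep in mind are that claim~\ref{1.} genuinely requires the strict inequality $a>b$ — it fails for $a=b$, where $\phi$ is bounded — and that in claim~\ref{2.} the hypothesis is only asymptotic, so the finitely many initial terms must be absorbed into the constant $B$.
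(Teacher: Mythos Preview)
Your proof is correct and rests on the same underlying observation as the paper's: that $x^{a}/(1+x)^{b}\to\infty$ when $a>b$ and $x/(1+x)\to 1$ as $x\to\infty$. The only difference is presentational --- the paper argues by contradiction via divergent subsequences, whereas you give a direct argument with explicit bounds $B_0=\max\{1,(2^{b}M)^{1/(a-b)}\}$ and $(1-\varepsilon)/\varepsilon$; your version is slightly more informative but the two are logically equivalent.
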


\begin{proof}[Proof of Lemma~\ref{lem:3}]
We start by proving part~\ref{1.}. Assume for contradiction that $\{x_n\}_{n=1}^{\infty}$ is unbounded. Then $\{x_n\}_{n=1}^{\infty}$ contains a  subsequence $\{x_{n_k}\}_{k=1}^{\infty}$ that diverges to $\infty$. But then $y_{n_k} \sim x_{n_k}^{a-b}$ where $a_n \sim b_n$ means that $\lim a_n/b_n = 1$. But then, since, by assumption, $a>b$, $y_{n_k}$ also diverges contradicting its boundedness. It follows that $\{x_n\}_{n=1}^{\infty}$ must be bounded, as was to be shown.

To prove part~\ref{2.}, it suffices to note that if $\{x_n\}_{n=1}^{\infty}$ contained a divergent subsequence, $\{x_{n_k}\}_{k=1}^{\infty}$, then $ \lim_{k \to \infty} z_{n_k} =1$ contradicting the fact that $z_n \leq 1-\varepsilon$ eventually. 
\end{proof}


The proof of Theorem~\ref{thm:1} consists of the convexity step of \citet{vandeGeer:2002} and the derivation of the modulus of continuity of the empirical process with respect to the semi-metric $\left\| \cdot \right\|_{n,p}$ that is induced by our penalized M-estimator. Let $L_n(\beta)$ denote the objective function, that is,
\begin{align*}
L_n(\beta) = \frac{1}{n} \sum_{i = 1}^n \rho\left(Y_i - \sum_{j=1}^p X_i \left( \mathbf{t}_j \right) \beta\left(\mathbf{t}_j\right)\mu\left(A_j\right) \right) + \lambda J_m^2\left(\beta \right).
\end{align*}
By definition of the minimizer, we have $L_n(\widehat{\beta}_n) \leq L_n(\beta)$ for all $\beta \in \mathcal{H}^{m}(\mathbbm{R}^d)$. Moreover, $L_n$ is convex, as it is the sum of two convex functions. Define the convex combination 
    \begin{equation}    \label{eq:beta tilde}
    \widetilde{\beta}_n  = \gamma_n \widehat{\beta}_n + (1-\gamma_n)\beta_0 \qquad \mbox{with} \qquad \gamma_n = 1/(1+\|\widehat{\beta}_n-\beta_0\|_{n,p}).    
    \end{equation} 
Clearly $\gamma_n \in (0,1)$ and, by convexity, 
\begin{align}
\label{eq:A1}
L_n(\widetilde{\beta}_n) \leq \gamma_n L_n(\widehat{\beta}_n) + (1-\gamma_n) L_n(\beta_0) \leq L_n(\beta_0),
\end{align}
as, by assumption, $\beta_0 \in \mathcal{H}^m(\mathbbm{R}^d)$. Define for simplicity
\begin{align*}
\;M_n(\beta) := \frac{1}{n} \sum_{i=1}^n \rho\left(\epsilon_i + d_i + \sum_{j=1}^p X_i \left( \mathbf{t}_j\right) \left( \beta_0\left(\mathbf{t}_j\right) - \beta \left(\mathbf{t}_j \right) \right) \mu\left(A_j \right) \right),
\end{align*}
so that $L_n(\beta) = M_n(\beta) + \lambda J^2_m(\beta)$. Furthermore, define 
\begin{align*}
M(\beta) := \frac{1}{n} \sum_{i=1}^n \mathbb{E}_{\epsilon_i} \left\{ \rho\left(\epsilon_i + d_i +  \sum_{j=1}^p X_i \left( t_j\right)\left( \beta_0\left(\mathbf{t}_j\right) - \beta \left(\mathbf{t}_j \right) \right)\mu\left(A_j \right)\right) \right\},
\end{align*}
where $\mathbb{E}_{\epsilon_i} \{ \cdot \}$ denotes expectation with respect to the $\epsilon_i$ (recall that the $\epsilon_i$ and $X_i$ are assumed to be independent and the $\epsilon_i$ are independent). With this notation, rearranging \eqref{eq:A1} yields
\begin{align}
\label{eq:A2}
\left(M(\widetilde{\beta}_n) - M(\beta_0)\right) + \lambda J_m^2(\widetilde{\beta}_n) \leq \left( M_n(\beta_0) - M(\beta_0) \right) - \left( M_n(\widetilde{\beta}_n) - M(\widetilde{\beta}_n) \right) +\lambda J_m^2(\beta_0).
\end{align}
Our approach consists of deriving a lower bound on the LHS of \eqref{eq:A2} in terms of $\| \widetilde{\beta}_n-\beta_0 \|_{n,p}$ and an upper bound on the RHS, also in terms of $\|\widetilde{\beta}_n-\beta_0 \|_{n,p}$ on a set $A_{\varepsilon} \subset \Omega$ with $\mathbb{P}(A_{\varepsilon}) \geq 1-\varepsilon$ for any given $\varepsilon>0$. Combining these two bounds appropriately will yield
\begin{align}
\label{eq:A3}
\left\|\widetilde{\beta}_n-\beta_0 \right\|_{n,p} = O_{\mathbb{P}}\left( \log^2(n) \left\{n^{-\frac{m}{2m+d}} + \max_{1 \leq j \leq p} \diam^{\kappa}(A_j)  \right\}\right) \quad \text{and} \quad J_{m}\left(\widetilde{\beta}_n  \right) = O_{\mathbb{P}}(1),
\end{align}
under \ref{i} and \ref{ii} of the theorem. The definition of the convex combination $\widetilde{\beta}_n$ will then allow us to derive the same rate of convergence for $\|\widehat{\beta}_n-\beta_0 \|_{n,p}$ as well as the boundedness of $J_m(\widehat{\beta}_n)$.
 
\mypar{Step 1.} We begin by deriving a lower bound on the LHS of \eqref{eq:A2}. By \ref{A4} and Taylor's theorem, for every $\beta \in \mathcal{H}^{m}(\mathbbm{R}^d)$, we find
\begin{align*}
\mathbb{E}_{\epsilon_i}\left\{\rho\left(\epsilon_i+d_i+\langle X_i, \beta_0-\beta\rangle_p\right) \right\} & = g\left(d_i+\langle X_i, \beta_0-\beta\rangle_p \right) 
\\ & = g\left(\langle X_i, \beta_0-\beta\rangle_p \right) + d_i  g^{\prime} \left(\langle X_i, \beta_0-\beta\rangle_p\right) + O(d_i^2)
\\ & = g\left(\langle X_i, \beta_0-\beta\rangle_p \right) + d_i  g^{\prime} \left(0\right) +  O\left(\left| d_i   \langle X_i, \beta_0-\beta\rangle_p \right| \right) + O(d_i^2)
\\ & = g\left(\langle X_i, \beta_0-\beta\rangle_p \right) +  O\left(\left| d_i   \langle X_i, \beta_0-\beta\rangle_p \right| \right) + O(d_i^2)
\end{align*}
where we have used the fact that, by \ref{A4}, $g^{\prime}(0) = 0$ and $\sup_{t \in \mathbbm{R}} |g^{\prime \prime}(t)| \leq B$ for some finite $B>0$. Expanding $\mathbb{E}_{\epsilon_i}\{\rho(\epsilon_i+d_i)\}$  in a similar manner, we see that
\begin{align*}
\mathbb{E}_{\epsilon_i}\{\rho(\epsilon_i+d_i)\} = g(d_i)  = g(0) + d_i g^{\prime}(0) + O(d_i^2) = g(0) + O(d_i^2). 
\end{align*}
Combining these two expansions and averaging, we may deduce the existence of a large enough $C>0$ such that
\begin{equation}\label{eq:inf1}
\begin{aligned}   
 & M(\beta) - M(\beta_0) 
 \\ & = \frac{1}{n} \sum_{i=1}^n \mathbb{E}_{\epsilon_i} \left\{\rho\left(\epsilon_i+d_i+ \langle X_i, \beta_0-\beta\rangle_p \right) - \rho\left(\epsilon_i+d_i \right) \right\}
\\ &   \geq \frac{1}{n} \sum_{i=1}^n \mathbb{E}_{\epsilon_i}\left\{\rho\left(\epsilon_i+\langle X_i, \beta_0-\beta\rangle_p \right) - \rho\left(\epsilon_i\right) \right\} - \frac{C}{n} \sum_{i=1}^n d_i^2 - \frac{C}{n} \sum_{i=1}^n \left| d_i   \langle X_i, \beta_0-\beta\rangle_p \right| 
\\ &   \geq  \frac{1}{n} \sum_{i=1}^n  \mathbb{E}_{\epsilon_i}\left\{\rho\left(\epsilon_i+\langle X_i, \beta_0-\beta\rangle_p \right) -  \rho\left(\epsilon_i\right) \right\} - \frac{C}{n} \sum_{i=1}^n d_i^2 - C \left(\frac{1}{n} \sum_{i=1}^n d_i^2 \right)^{\frac{1}{2}} \left\|\beta-\beta_0\right\|_{n,p},
\end{aligned}
\end{equation}
where to obtain the second inequality we have used the Cauchy-Schwarz inequality. Now, again by the Cauchy-Schwarz inequality,
\begin{align*}
    \max_{1 \leq i \leq n} \left|\langle X_i, \beta - \beta_0 \rangle_p \right| \leq \max_{1 \leq i\leq n} \left\| X_i \right\|_{\infty} \left\|\beta-\beta_0\right\|_{p},
\end{align*}
for every $\beta \in \mathcal{H}^{m}(\mathcal{I})$. Now, by Boole's and Markov's inequalities, for all $u>0$, we have
\begin{align}
\label{eq:max}
    \mathbb{P}\left( \max_{1 \leq i\leq n} \left\| X_i \right\|_{\infty} > u\right) \leq  n \mathbb{P} \left( \left\| X_1 \right\|_{\infty} > u \right) \leq n\mathbb{E}\left\{\exp\left[t\left\|X\right\|_{\infty}^2\right] \right\} e^{-tu^2} \leq D ne^{-tu^2},
\end{align}
for some $D>0$. Setting $u = \sqrt{C \log(n)}$ for sufficiently large $C>0$ we now see that the RHS of \eqref{eq:max} is summable whence, by the Borel-Cantelli lemma, we may conclude that
\begin{align*}
    \max_{1 \leq i\leq n} \left\| X_i \right\|_{\infty} \leq \sqrt{C \log(n)},
\end{align*}
for all large $n$, almost surely. 
But, by definition of $J_m(\cdot)$, we now see that
\begin{align}   \label{eq:cs}
\max_{1 \leq i \leq n}\left|\langle X_i, \beta - \beta_0 \rangle_p \right| \leq  \sqrt{C\log(n)} J_m(\beta- \beta_0),
\end{align}
almost surely. By our assumptions, $\beta_0 \in \mathcal{H}^m(\mathbbm{R}^d)$, hence $J_m(\beta_0)$ is bounded. Without loss of generality, we thus take $J_m(\beta_0) \leq 1$. From~\eqref{eq:cs}, it follows that we can find a large enough constant $D_{\varepsilon} \geq 1$ such that
\begin{align*}
\max_{1 \leq i \leq n}\frac{\left|\langle X_i, \beta - \beta_0 \rangle_p \right|}{D_{\varepsilon} \sqrt{\log(n)}\left(1+J_m(\beta)\right)} \leq \varepsilon
\end{align*}
with high probability, where $\varepsilon$ is the constant in \ref{A6}. It follows that for all large $n$, $n \geq n_0$, say, 
\begin{equation}    \label{eq:convexity}
\begin{aligned}
\frac{1}{n} \sum_{i=1}^n  \mathbb{E}_{\epsilon_i}\left\{\rho\left(\epsilon_i+\langle X_i, \beta_0-\beta\rangle_p \right) -  \rho\left(\epsilon_i\right) \right\} &\geq \frac{1}{n} \sum_{i=1}^n  \mathbb{E}_{\epsilon_i}\left\{\rho\left(\epsilon_i+ \frac{\langle X_i, \beta_0-\beta\rangle_p}{D_{\varepsilon} \sqrt{\log(n)}\{1+J_m(\beta)\}} \right) -  \rho\left(\epsilon_i\right) \right\} 
\\ & \geq \varepsilon \frac{\left\|\beta-\beta_0\right\|^2_{n,p}}{D_{\varepsilon}^2 \log(n) \{1+J_m(\beta)\}^2}.
\end{aligned}
\end{equation}
The first inequality in \eqref{eq:convexity} follows from the fact that the function $g(t) = \mathbb{E}_{\epsilon_i} \left\{\rho\left(\epsilon_i+ t \right) \right\}$ is non-decreasing in $|t|$ due to \ref{A5} and the convexity of $\rho$ from~\ref{A4}, and because of our assumption that $D_{\varepsilon} \geq 1$, meaning that also $D_{\varepsilon} \sqrt{\log(n)} (1+J_m(\beta)) \geq 1$ for all large $n$ and $\beta \in \mathcal{H}^m(\mathbbm{R}^d)$. In the second inequality in~\eqref{eq:convexity}, we have used~\ref{A6}, and the definition of the semi-norm $\| \cdot \|_{n,p}$ from~\eqref{eq:np norm}.

The inequality~\eqref{eq:convexity} holds for every fixed $\beta \in \mathcal{H}^{m}\left( \mathbbm{R}^d\right)$, hence what we have shown in~\eqref{eq:inf1} and~\eqref{eq:convexity} is
\begin{align*}
\inf_{\beta \in \mathcal{H}^{m}\left(\mathbbm{R}^d\right)} \left[ \frac{M(\beta) - M(\beta_0) +  \frac{C}{n} \sum_{i=1}^n d_i^2 + C \left(\frac{1}{n} \sum_{i=1}^n d_i^2 \right)^{\frac{1}{2}} \left\|\beta-\beta_0\right\|_{n,p}}{\varepsilon \frac{\left\|\beta-\beta_0\right\|^2_{n,p}}{D_{\varepsilon}^2 
\log(n)\left(1+J_m(\beta)\right)^2}} \right] \geq 1.
\end{align*}
To establish a similar inequality for the random $\widetilde{\beta}_n \in \mathcal{H}^{m}(\mathbbm{R}^d)$ it suffices to note that, by the last inequality, we have
\begin{align}
\label{eq:A4}
M(\widetilde{\beta}_n) - M(\beta_0) & = M(\widetilde{\beta}_n) - M(\beta_0) + \frac{C}{n} \sum_{i=1}^n d_i^2 + C \left(\frac{1}{n} \sum_{i=1}^n d_i^2 \right)^{1/2} \left\|\widetilde{\beta}_n-\beta_0\right\|_{n,p} \nonumber
\\ & \quad - \frac{C}{n} \sum_{i=1}^n d_i^2 - C \left(\frac{1}{n} \sum_{i=1}^n d_i^2 \right)^{1/2} \left\|\widetilde{\beta}_n-\beta_0\right\|_{n,p} \nonumber
\\ & \geq \varepsilon \frac{\left\|\widetilde{\beta}_n-\beta_0\right\|^2_{n,p}}{D_{\varepsilon}^2\log(n)\left(1+J_m(\widetilde{\beta}_n)\right)^2} - \frac{C}{n} \sum_{i=1}^n d_i^2 - C \left(\frac{1}{n} \sum_{i=1}^n d_i^2 \right)^{1/2} \left\|\widetilde{\beta}_n-\beta_0\right\|_{n,p},
\end{align}
which provides the desired lower bound for \eqref{eq:A2}.

\mypar{Step 2.} The next step in our proof is the derivation of an upper bound for the RHS of \eqref{eq:A2} in terms of $\|\widetilde{\beta}_n - \beta_0 \|_{n,p}$. To that end, for each $\beta \in \mathcal{H}^{m}(\mathbbm{R}^d)$ define the independent centered processes
\begin{align*}
U_{i,\beta} = \rho\left(\epsilon_i + d_i + \langle X_i, \beta_0 - \beta \rangle_p \right) - \mathbb{E}_{\epsilon_i} \left\{\rho\left(\epsilon_i + d_i + \langle X_i, \beta_0 - \beta \rangle_p \right)  \right\}, \quad (i = 1, \ldots, n). 
\end{align*}
With this definition we find 
\begin{align*}
\left( M_n(\beta_0) - M(\beta_0) \right) - \left( M_n(\widetilde{\beta}_n) - M(\widetilde{\beta}_n) \right) = \frac{1}{n} \sum_{i=1}^n \left( U_{i,\beta_0} - U_{i,\widetilde{\beta}_n}  \right).
\end{align*}
Notice that, by definition~\eqref{eq:beta tilde} of $\widetilde{\beta}_n$, we have
\begin{align*}
\left\|\widetilde{\beta}_n - \beta_0\right\|_{n,p} = \gamma_n \left\|\widehat{\beta}_n-\beta_0\right\|_{n,p} = \frac{\left\|\widehat{\beta}_n-\beta_0\right\|_{n,p}}{1+\left\|\widehat{\beta}_n-\beta_0\right\|_{n,p}} \leq 1.
\end{align*}
Moreover, by \ref{A1}, $\{\mathbf{t}_{j}\}_{j=1}^p \subset \mathcal{I}$. These two facts imply that we may restrict attention to the subset $\mathcal{B} \subset \mathcal{H}^{m}(\mathcal{I})$ given by $\mathcal{B} =  \{\beta \in \mathcal{H}^{m}(\mathcal{I}): \|\beta - \beta_0\|_{n,p} \leq 1 \}$. We will apply Lemma~\ref{lem:2} to these processes after identifying ${\mathfrak D}(\beta, \widetilde{\beta})$ in~\eqref{eq:frak di} with $\|\beta - \widetilde{\beta}\|_{n,p}$, and ${\mathfrak D}_i(\beta, \widetilde{\beta})$ with $|\langle X_i, \beta - \widetilde{\beta} \rangle_p|$. By the Lipschitz continuity of $\rho$ assumed in \ref{A4}, we have
\begin{align*}
\left| U_{i,\beta} - U_{i,\widetilde{\beta}} \right| \leq 2 C_2 \left| \langle X_i, \beta - \widetilde{\beta} \rangle_p \right|, \quad (i = 1, \ldots, n),
\end{align*}
and the constant random variables $\{ 2\,C_2 \}_{i=1}^n$ are trivially uniformly sub-Gaussian. In order to apply Lemma \ref{lem:2}, we need to verify \eqref{VDG:5} by showing that the entropy integral for the class of functions $\mathcal{B}_M := \{\beta\in \mathcal{H}^{m}(\mathcal{I}):\|\beta - \beta_0\|_{n,p} \leq 1, 1+J_m(\beta) \leq M\}$ for $M \geq 1$ behaves like
\begin{align}
\label{eq:A5}
\int_{0}^{\delta} H^{1/2}(u, \mathcal{B}_M, d) \dd u \lesssim M^{d/(2m)} \delta^{1-d/(2m)}, \quad \delta>0.
\end{align}

To verify the bound in \eqref{eq:A5}, notice that, by our assumptions, for every $\beta \in \mathcal{H}^{m}(\mathcal{I})$, $\|\beta\|_{n,p}$ is up to a constant dominated by $\| \beta \|_p$, with high probability. This is because, by the Cauchy-Schwarz inequality,
\begin{align*}
\left\|\beta \right\|_{n,p}^2 = \frac{1}{n} \sum_{i=1}^n \left|  \sum_{j=1}^p X_i(\mathbf{t}_j)\beta(\mathbf{t}_j) \mu(A_j)  \right|^2 &\leq  \frac{1}{n} \sum_{i=1}^n \left[ \left\{  \sum_{j=1}^p \left|X_i(\mathbf{t}_j) \right|^2 \mu(A_j) \right\} \left\{ \sum_{j=1}^p \left|\beta(\mathbf{t}_j)\right|^2 \mu(A_j)\right\}  \right].
\end{align*}
But, by \ref{A3}, 
\begin{align*}
\mathbb{E}\left\{ \frac{1}{n} \sum_{i=1}^n \sum_{j=1}^p \left|X_i(\mathbf{t}_j) \right|^2 \mu(A_j) \right\} \leq \sup_{\mathbf{t} \in \mathcal{I}} \mathbb{E}\left\{ |X_1(\mathbf{t})|^2 \right\} \mu(\mathcal{I}),
\end{align*}
where we have used the fact that $A_1, \ldots, A_p$ form a partition of $\mathcal{I}$. The finiteness of the RHS implies the existence of a $M_2$ such that $\|\beta\|_{n,p} \leq M_2\|\beta\|_p$ with high probability. It follows that a cover of $\mathcal{B}_M$ in the $\| \cdot \|_p$-semimetric also provides a cover in the $\|\cdot\|_{n,p}$-semimetric. In other words,
\begin{align*}
H(u, \mathcal{B}_M, \left\|\cdot\right\|_{n,p}) \leq H(u/M_2, \mathcal{B}_M, \left\|\cdot\right\|_{p}), \quad u>0,
\end{align*}
with high probability. Now, $1+J_m(\beta) \leq M$ implies both that $1+\|\beta\|_p \leq M$ and that $1+I_m(\beta) \leq M$.  Hence, we obtain the following inclusions:
\begin{align*}
\mathcal{B}_M & \subset \left\{\beta \in \mathcal{H}^m(\mathcal{I}): \left\|\beta-\beta_0\right\|_{n,p} \leq 1, 1+ \left\|\beta\right\|_p \leq M, 1+I_m(\beta) \leq M  \right\}
\\ & \subset \left\{\beta \in \mathcal{H}^m(\mathcal{I}): 1+ \left\|\beta\right\|_p \leq M, 1+I_m(\beta) \leq M  \right\}.
\end{align*}
Therefore,
\begin{align*}
H(u, \mathcal{B}_M, \left\|\cdot\right\|_{p}) \leq H\left(u, \left\{\beta \in \mathcal{H}^m(\mathcal{I}): 1+ \left\|\beta\right\|_p \leq M, 1+I_m(\beta) \leq M  \right\} , \left\|\cdot\right\|_{p}\right), \quad u>0.
\end{align*}
We now bound the latter entropy. Since, for any $\beta \in \mathcal{H}^m(\mathcal{I})$, we have
\begin{align*}
\sum_{j=1}^p |\beta(\mathbf{t}_j)|^2 \mu(A_j) \leq \sup_{\mathbf{t} \in \mathcal{I}} |\beta(\mathbf{t})|^2 \mu(\mathcal{I}),
\end{align*}
it follows that
\begin{align*}
 H&\left(u, \left\{\beta \in \mathcal{H}^m(\mathcal{I}): 1+ \left\|\beta\right\|_p \leq M, 1+I_m(\beta) \leq M  \right\} , \left\|\cdot\right\|_{p} \right) 
  \\ & \quad  \leq  H_{\infty}\left(u/\sqrt{\mu(\mathcal{I})}, \left\{\beta \in \mathcal{H}^m(\mathcal{I}): 1+ \left\|\beta\right\|_p \leq M, 1+I_m(\beta) \leq M  \right\}\right), \quad u>0,
\end{align*}
with $H_{\infty}(u, \mathcal{F})$ denoting the entropy in the supremum norm of a class of functions $\mathcal{F}$. We will now show that there exists a constant $c_0>0$, independent of $M$, such that
\begin{align*}
\left\{\beta \in \mathcal{H}^m(\mathcal{I}): 1+ \left\|\beta\right\|_p \leq M, 1+I_m(\beta) \leq M  \right\} \subset \left\{\beta \in \mathcal{H}^m(\mathcal{I}): \left\|\beta\right\|_{\mathcal{H}^m(\mathcal{I})  }  \leq c_0 M \right\}.
\end{align*}
Indeed, take a $\beta \in \mathcal{H}^m(\mathcal{I})$ such that $1+\|\beta\|_p \leq M$ and  $1+I_m(\beta) \leq M$. By Theorem 3.4 of \citet{Utr:1988}, there exists a global $c_0>0$ such that
\begin{align*}
\int_{\mathcal{I}} |\beta(\mathbf{t})|^2 \dd \mathbf{t} &\leq  \frac{c_0}{p} \sum_{j=1}^p |\beta(\mathbf{t}_j)|^2 + c_0 I_m^2(\beta) \leq c_0(M-1)^2 + c_0(M-1)^2  =  2 c_0(M-1)^2.
\end{align*}
Hence, for every $\beta \in \mathcal{H}^m(\mathcal{I})$, by definition of $\| \cdot\|_{\mathcal{H}^m(\mathcal{I})}$,
\begin{align*}
\left\|\beta\right\|^2_{\mathcal{H}^m(\mathcal{I})  }  \leq \int_{\mathcal{I}} |\beta(\mathbf{t})|^2 \dd \mathbf{t}  + I_m^2(\beta) \leq 2c_0(M-1)^2 + (M-1)^2 \leq \max\{2c_0, 1 \} M^2,
\end{align*}
as $M \geq 1$. Combining all the above, we have shown that 
\begin{align}
\label{eq:A6}
H\left(u, \mathcal{B}_M, \left\|\cdot\right\|_{n,p}\right) \leq H_{\infty}\left(u/(\sqrt{\mu(\mathcal{I})}M_2), \left\{\beta \in \mathcal{H}^m(\mathcal{I}): \left\|\beta\right\|_{\mathcal{H}^m(\mathcal{I})  }  \leq c_0 M \right\} \right), \quad u>0,
\end{align}
with high probability. But the RHS of \eqref{eq:A6} is the supremum entropy of the closed ball in $\mathcal{H}^m(\mathcal{I})$ with radius $c_0M$. Since, by our assumptions $2m>d$, Proposition 6 in \citet{Cucker:2001} now yields
\begin{align*}
H_{\infty}\left(u, \left\{\beta \in \mathcal{H}^m(\mathcal{I}): \left\|\beta\right\|_{\mathcal{H}^m(\mathcal{I})  }  \leq c_0 M \right\} \right) \lesssim M^{\frac{d}{m}} u^{-\frac{d}{m}}, \quad u>0,
\end{align*}
which, upon taking square roots and integrating, yields \eqref{eq:A5}.

Since \eqref{eq:A5} holds,  applying Lemma~\ref{lem:2} next leads to
\begin{align*}
\sup_{\beta \in \mathcal{H}^{m}\left(\mathbbm{R}^d\right): \left\|\beta-\beta_0\right\|_{n,p} \leq 1}  \frac{ \left| \frac{1}{n} \sum_{i=1}^n \left( U_{i,\beta}- U_{i,\beta_0} \right) \right| }{\left\|\beta-\beta_0\right\|_{n,p}^{1-\frac{d}{2m}}\{1+J_m(\beta)\}^{\frac{d}{m}}}  = O_{\mathbb{P}}(n^{-\frac{1}{2}}).
\end{align*}
This inequality provides the desired upper bound for the RHS in \eqref{eq:A2}, namely,
\begin{align}
\label{eq:A7}
\left(M_n(\widetilde{\beta}_n) - M(\widetilde{\beta}_n)\right) - \left( M_n(\beta_0) - M(\beta_0) \right)  = O_\mathbb{P}(n^{-\frac{1}{2}}) \|\widetilde{\beta}_n-\beta_0\|_{n,p}^{1-\frac{d}{2m}} \{1+J_m(\widetilde{\beta}_n)\}^{\frac{d}{2m}}.
\end{align} 

\mypar{Step 3.} The third step of the proof involves combining the lower and upper bounds derived in the previous steps in order to obtain a rate of convergence for $\|\widetilde{\beta}_n-\beta_0\|_{n,p}$. Indeed, plugging \eqref{eq:A4} and \eqref{eq:A7} into \eqref{eq:A2} yields
\begin{align}
\label{eq:A8}
\varepsilon \frac{\left\|\widetilde{\beta}_n-\beta_0\right\|^2_{n,p}}{D_{\varepsilon}^2 \log(n) \left(1+J_m(\widetilde{\beta}_n)\right)^2} + \lambda J_m^2\left(\widetilde{\beta}_n\right) & \leq  O_\mathbb{P}(n^{-\frac{1}{2}}) \|\widetilde{\beta}_n-\beta_0\|_{n,p}^{1-\frac{d}{2m}} \{1+J_m(\widetilde{\beta}_n)\}^{\frac{d}{2m}} + \lambda J_m^2\left(\beta_0 \right) \nonumber
\\ & \quad  +  \frac{C}{n} \sum_{i=1}^n d_i^2 + C \left(\frac{1}{n} \sum_{i=1}^n d_i^2 \right)^{\frac{1}{2}} \left\|\widetilde{\beta}_n-\beta_0\right\|_{n,p},
\end{align}
with high probability. 

We determine next the order of $n^{-1} \sum_{i=1}^n d_ i^2$, for $\{d_i\}_{i=1}^n$ defined in~\eqref{eq:di}. It is easy to see that
\begin{align*}
\left|d_i\right| & \leq \left|\sum_{j=1}^p \int_{A_j} \left(\beta_0(\mathbf{t})-\beta_0(\mathbf{t}_j)\right) X_i(\mathbf{t}) \dd \mathbf{t} \right| +  \left| \sum_{j=1}^p \int_{A_j} \beta_0(\mathbf{t}_j)(X_i(\mathbf{t}) - X_i(\mathbf{t}_j)) \dd \mathbf{t} \right|
\\ & = I + II,
\end{align*}
say. To bound $I$, let us first observe that for $2m>d+1$, by the Sobolev embedding theorem \citep[Theorem 4.12]{Adams:2003}, $\mathcal{H}^{m}(\mathcal{I})$ embeds on the space of Lipschitz continuous functions on $\mathcal{I}$ denoted by $\mathcal{C}^{0,1}(\mathcal{I})$. In other words, for every $\beta \in \mathcal{H}^{m}(\mathcal{I})$ we have $\beta \in \mathcal{C}^{0,1}(\mathcal{I})$ and there exists a global $C>0$ such that
\begin{align*}
\sup_{\mathbf{t}, \mathbf{s} \in \mathcal{I}} \frac{\left|\beta\left(\mathbf{t}\right)-\beta\left(\mathbf{s}\right)\right|}{\left\|\mathbf{t}-\mathbf{s}\right\|_{\mathbbm{R}^d}} & \leq C \left\{ \int_{\mathcal{I}} |\beta\left(\mathbf{t}\right)|^2 \dd \mathbf{t} +  \sum_{m_1 + \ldots + m_d = m} \binom{m}{m_1, \ldots, m_d} \int_{\mathcal{I}} \left( \frac{\partial^m \beta(\mathbf{t})}{\partial t_{1}^{m_1} \ldots \partial t_{d}^{m_d}}  \right)^2 \dd \mathbf{t} \right\}^{\frac{1}{2}}
\\ & = C \left\|\beta\right\|_{\mathcal{H}^m(\mathcal{I})}.
\end{align*}
Recalling that, by assumption, $\beta_0 \in \mathcal{H}^{m}(\mathbbm{R}^d)$, hence also $\beta_0 \in \mathcal{H}^{m}(\mathcal{I})$, hence 
\begin{align*}
I \leq \sum_{j=1}^p \int_{A_j} \left|\left(\beta_0(\mathbf{t})-\beta_0(\mathbf{t}_j)\right) X_i(\mathbf{t}) \right| \dd \mathbf{t}  & \leq C \left\|\beta_0\right\|_{\mathcal{H}^m(\mathcal{I})} \sup_{\mathbf{t} \in \mathcal{I}}\left|X_{i}(\mathbf{t})\right|\sum_{j=1}^p \int_{A_j} \left\|\mathbf{t}-\mathbf{t}_j\right\|_{\mathbbm{R}^d} \dd \mathbf{t} \nonumber
\\ & \leq  C  \left\|\beta_0\right\|_{\mathcal{H}^m(\mathcal{I})} 
 \sup_{\mathbf{t} \in \mathcal{I}}\left|X_{i}(\mathbf{t})\right| \mu(\mathcal{I}) \max_{1 \leq j \leq p} \diam(A_j),
\end{align*}
for some $C>0$ where the last inequality follows from the fact that the sets $A_j$ form a partition of $\mathcal{I}$. Now note that, by \ref{A3}, $\mathbb{E}\{\|X_i\|_{\infty}\} < \infty$, as exponential moments imply the existence of polynomial moments so that
\begin{align}
\label{eq:A9}
    I  = O_{\mathbb{P}} \left(\max_{1 \leq j \leq p} \diam(A_j) \right).
\end{align}
To bound $II$, observe that
\begin{align*}
II \leq \sup_{\mathbf{t} \in \mathcal{I}}\left|\beta_0(\mathbf{t})\right| \sum_{j=1}^p \int_{A_j} \left|X_i\left(\mathbf{t}\right) - X_i\left(\mathbf{t}_j\right) \right|  \dd \mathbf{t}.
\end{align*}
Note that $\sup_{\mathbf{t} \in \mathcal{I}}\left|\beta_0(\mathbf{t})\right|$ is finite, as, by \ref{A1}, $\mathcal{I}$ is bounded and $\beta_0$ is continuous for all $m \geq 1$. By  the Cauchy-Schwarz inequality and \ref{A3}, we have
\begin{align*}
\mathbb{E}\left\{\left|X_1\left(\mathbf{t}\right) - X_1\left(\mathbf{t}_j\right) \right|  \right\} \leq \left[\mathbb{E}\left\{\left|X_1\left(\mathbf{t}\right) - X_1\left(\mathbf{t}_j\right) \right|^2 \right\} \right]^{1/2} \leq \sqrt{C_1} \left\|\mathbf{t}-\mathbf{t}_j \right\|^{\kappa}_{\mathbbm{R}^d},
\end{align*}
which, by Markov's inequality, leads to
\begin{align}
\label{eq:A10}
II = O_\mathbb{P} \left(\max_{1 \leq j \leq p} \diam^{\kappa}(A_j)\right).
\end{align}

As $\kappa \leq 1$, the bounds in \eqref{eq:A9} and \eqref{eq:A10} imply that 
    \begin{equation}    \label{eq: di}
    \frac{1}{n}\sum_{i=1}^n d_i^2 =  O_\mathbb{P} (\max_{1 \leq j \leq p} \diam^{2\kappa}(A_j))    
    \end{equation}
so that
\begin{align}
\label{eq:A11}
\underbrace{\varepsilon \frac{\left\|\widetilde{\beta}_n-\beta_0\right\|^2_{n,p}}{\log(n) D_{\varepsilon}^2\left\{1+J_m\left(\widetilde{\beta}_n\right)\right\}^2} + \lambda J_m^2\left(\widetilde{\beta}_n\right)}_{a} & \leq  \underbrace{O_\mathbb{P}(n^{-\frac{1}{2}}) \|\widetilde{\beta}_n-\beta_0\|_{n,p}^{1-\frac{d}{2m}} \{1+J_m(\widetilde{\beta}_n) \}^{\frac{d}{2m}}}_{b} + \underbrace{\lambda J_m^2\left(\beta_0 \right)}_{c} \nonumber
\\ & \quad  + \underbrace{O_\mathbb{P} \left( \max_{1 \leq j \leq p} \diam^{2 \kappa}(A_j)\right)}_{d} \nonumber
\\ & \quad  + \underbrace{O_\mathbb{P} \left( \max_{1 \leq j \leq p} \diam^{\kappa}(A_j)\right) \left\|\widetilde{\beta}_n-\beta_0\right\|_{n,p}}_{e}.
\end{align}
We next show that the bound in \eqref{eq:A11} implies \eqref{eq:A3}. The former is an inequality of the form $a \leq b+c+d + e $ for real numbers $a,b,c,d,e $ and therefore we must have either $a \leq 4b$ or $a \leq 4c$ or $a \leq 4d$ or $a \leq 4e$ (for if that were not true we would have $a >b+c+d+e$). We consider each one of these four possible cases and show that in all cases \eqref{eq:A3} holds. 

\mypar{Case 1.} Starting from the first possible case, we have $a \leq 4\,b$, that is,
\begin{align*}
\varepsilon \frac{\left\|\widetilde{\beta}_n-\beta_0\right\|^2_{n,p}}{\log(n) D_{\varepsilon}^2\left\{1+J_m\left(\widetilde{\beta}_n\right)\right\}^2} + \lambda J_m^2\left(\widetilde{\beta}_n\right) & \leq  O_\mathbb{P}(n^{-\frac{1}{2}}) \left\|\widetilde{\beta}_n-\beta_0\right\|_{n,p}^{1-\frac{d}{2m}} \left\{1+J_m(\widetilde{\beta}_n)\right\}^{\frac{d}{2m}}.
\end{align*}
Since both terms on the LHS are non-negative, this inequality implies both 
\begin{align}
\label{eq:A12}
\varepsilon \frac{\left\|\widetilde{\beta}_n-\beta_0\right\|^2_{n,p}}{\log(n) D_{\varepsilon}^2\left\{1+J_m\left(\widetilde{\beta}_n\right)\right\}^2} \leq O_\mathbb{P}(n^{-\frac{1}{2}}) \|\widetilde{\beta}_n-\beta_0\|_{n,p}^{1-\frac{d}{2m}} \left\{1+J_m(\widetilde{\beta}_n)\right\}^{\frac{d}{2m}} ,
\end{align}
and
\begin{align}
\label{eq:A13}
 \lambda J_m^2\left(\widetilde{\beta}_n\right) \leq O_\mathbb{P}(n^{-\frac{1}{2}}) \|\widetilde{\beta}_n-\beta_0\|_{n,p}^{1-\frac{d}{2m}} \left\{1+J_m(\widetilde{\beta}_n)\right\}^{\frac{d}{2m}}.
\end{align}
We will use such a separation into two inequalities also in the analysis of the remaining cases that follow. From \eqref{eq:A12} we get
\begin{align}
\label{eq:A14}
\left\|\widetilde{\beta}_n-\beta_0\right\|_{n,p} \leq O_\mathbb{P}\left(\log^{\frac{2m}{2m+d}}(n)n^{-\frac{m}{2m+d}} \right) \left\{1+J_m(\widetilde{\beta}_n)\right\}^{\frac{4m+d}{2m+d}}
\end{align}
which upon plugging into \eqref{eq:A13} yields
\begin{align*}
 \lambda J_m^2\left(\widetilde{\beta}_n\right) \leq O_\mathbb{P}\left(n^{-\frac{2m}{2m+d}} \log^{\frac{2m-d}{2m+d}}(n)\right) \left\{1+J_m(\widetilde{\beta}_n) \right\}^{\frac{4m}{2m+d}},
\end{align*}
or equivalently,
\begin{align}   \label{eq:power}
\frac{J_m^2\left(\widetilde{\beta}_n\right)}{\left\{1+J_m(\widetilde{\beta}_n)\right\}^{\frac{4m}{2m+d}}} \leq O_\mathbb{P}\left(n^{-\frac{2m}{2m+d}} \log^{\frac{2m-d}{2m+d}}(n)\right) \lambda^{-1},
\end{align}
whose RHS is $O_{\mathbb{P}}(1)$ provided that $n^{-2m/(2m+d)} \log^{(2m-d)/(2m+d)}(n)\lambda^{-1} = O_{\mathbb{P}}(1)$. This is ensured by \ref{ii}, as, for all large $n$,
\begin{align*}
    n^{-\frac{2m}{2m+d}} \log^{\frac{2m-d}{2m+d}}(n)\lambda^{-1} \leq n^{-\frac{2m}{2m+d}} \log^{\frac{3}{2}}(n)\lambda^{-1} = O_{\mathbb{P}}(1).
\end{align*}
Now, since the RHS of \eqref{eq:power} is $O_{\mathbb{P}}(1)$ and $2>4m/(2m+d)$, an application of part~\ref{1.} of Lemma~\ref{lem:3} reveals that $J_m(\widetilde{\beta}_n) = O_{\mathbb{P}}(1)$ also. From \eqref{eq:A14} we  now see  that $\|\widetilde{\beta}_n-\beta_0\|_{n,p} = O_{\mathbb{P}}(\log^{2}(n) n^{-m/(2m+d)})$ and consequently \eqref{eq:A3} holds.

\mypar{Case~2.} Focusing now on the second possible case, i.e., on
\begin{align*}
\varepsilon \frac{\left\|\widetilde{\beta}_n-\beta_0\right\|^2_{n,p}}{\log(n) D_{\varepsilon}^2\left\{1+J_m\left(\widetilde{\beta}_n\right)\right\}^2} + \lambda J_m^2\left(\widetilde{\beta}_n\right) \leq 4\lambda J_m^2\left(\beta_0\right),
\end{align*}
we immediately see that $\lambda J_m^2\left(\widetilde{\beta}_n\right) \leq 4\lambda J_m^2\left(\beta_0\right) = O_{\mathbb{P}}(\lambda)$ entails $J_m(\widetilde{\beta}_n) = O_{\mathbb{P}}(1)$, which along with the boundedness of $J_m^2\left(\beta_0\right)$ leads to 
\begin{align*}
    \|\widetilde{\beta}_n-\beta_0\|_{n,p} = O_{\mathbb{P}}(\lambda^{1/2} \log^{1/2}(n)) = \log^2(n) O_{\mathbb{P}}\left( n^{-m/(2m+d)} + \max_{1 \leq j \leq p} \diam^{\kappa}(A_j) \right),
\end{align*}
by our assumption~\ref{i} on $\lambda$. Thus, \eqref{eq:A3} is again obtained. 

\mypar{Case~3.} The third possible case is
\begin{align*}
\varepsilon \frac{\left\|\widetilde{\beta}_n-\beta_0\right\|^2_{n,p}}{D_{\varepsilon}^2 \log(n) \left\{1+J_m\left(\widetilde{\beta}_n\right)\right\}^2} + \lambda J_m^2\left(\widetilde{\beta}_n\right) \leq O_\mathbb{P} \left( \max_{1 \leq j \leq p} \diam^{2 \kappa}(A_j)\right).
\end{align*}
It is clear from this inequality that $J_m(\widetilde{\beta}_n) = O_{\mathbb{P}}(1)$ provided that $\max_{1 \leq j \leq p} \diam^{2 \kappa}(A_j) \lambda^{-1} = O_{\mathbb{P}}(1)$, which is ensured by \ref{ii}, as, for all large $n$,
\begin{align*}
\max_{1 \leq j \leq p} \diam^{2 \kappa}(A_j) \lambda^{-1} \leq \max_{1 \leq j \leq p} \diam^{2 \kappa}(A_j) \log^\frac{3}{2}(n) \lambda^{-1} = O_{\mathbb{P}}(1).
\end{align*}
Therefore, $\|\widetilde{\beta}_n-\beta_0\|_{n,p} = O_\mathbb{P}( \log^{1/2}(n) \max_{1 \leq j \leq p} \diam^{\kappa}(A_j)) =  O_\mathbb{P}( \log^2(n) \max_{1 \leq j \leq p} \diam^{\kappa}(A_j))$ and \eqref{eq:A3} is again established.

\mypar{Case~4.} The fourth and final possible case is
\begin{align}
\label{eq:A15}
\varepsilon \frac{\left\|\widetilde{\beta}_n-\beta_0\right\|^2_{n,p}}{D_{\varepsilon}^2 \log(n) \left\{1+J_m\left(\widetilde{\beta}_n\right)\right\}^2} + \lambda J_m^2\left(\widetilde{\beta}_n\right) & \leq  O_\mathbb{P} \left( \max_{1 \leq j \leq p} \diam^{\kappa}(A_j)\right) \left\|\widetilde{\beta}_n-\beta_0\right\|_{n,p}.
\end{align}
Inequality \eqref{eq:A15} immediately gives
\begin{align}   \label{eq:case4}
\left\|\widetilde{\beta}_n-\beta_0\right\|_{n,p} \leq  O_\mathbb{P} \left( \max_{1 \leq j \leq p} \diam^{\kappa}(A_j) \log(n) \right) \left\{1+J_m\left(\widetilde{\beta}_n\right)\right\}^2,
\end{align}
which, upon plugging this back into the RHS of~\eqref{eq:A15}, leads to
\begin{align*}
\frac{J_m^2\left(\widetilde{\beta}_n\right)}{\left\{1+J_m\left(\widetilde{\beta}_n\right)\right\}^2} \leq O_\mathbb{P} \left( \max_{1 \leq j \leq p} \diam^{2 \kappa}(A_j) \log(n)\right) \lambda^{-1}. 
\end{align*}
By \ref{ii}, the RHS of this inequality will be strictly smaller than $1$ for all large $n$ with high probability, as
\begin{align*}
    \left( \max_{1 \leq j \leq p} \diam^{2 \kappa}(A_j) \log(n)\right) \lambda^{-1} =  \left( \max_{1 \leq j \leq p} \diam^{2 \kappa}(A_j) \log^{3/2}(n)\right) \log^{-\frac{1}{2}}(n) \lambda^{-1} 
 = O_{\mathbb{P}} \left( \log^{-\frac{1}{2}}(n) \right).
\end{align*}
It follows from part~\ref{2.} of Lemma~\ref{lem:3} that $J_m(\widetilde{\beta}_n) = O_{\mathbb{P}}(1)$ and \eqref{eq:case4} leads to
\begin{align*}
\left\|\widetilde{\beta}_n-\beta_0\right\|_{n,p} = O_\mathbb{P} \left( \log^2(n) \max_{1 \leq j \leq p} \diam^{\kappa}(A_j)\right),
\end{align*}
so that \eqref{eq:A3} is verified. We have thus completed the proof of \eqref{eq:A3}. 

\mypar{Step 4.} The last step of our proof involves obtaining the same rate of convergence for $\|\widehat{\beta}_n-\beta_0\|_{n,p}$ and establishing that $J_m(\widehat{\beta}_n) = O_{\mathbb{P}}(1)$. For this, we use the triangle inequality to get
\begin{align}   \label{eq:triangle}
\left\|\widehat{\beta}_n-\beta_0\right\|_{n,p} \leq \left\|\widehat{\beta}_n-\widetilde{\beta}_n\right\|_{n,p} +  \left\|\widetilde{\beta}_n-\beta_0\right\|_{n,p}.
\end{align}
Using the definition of $\widetilde{\beta}_n$ in~\eqref{eq:beta tilde} we can express 
    \[  \left\|\widehat{\beta}_n-\widetilde{\beta}_n\right\|_{n,p} = \left( 1 - \gamma_n \right) \left\|\widehat{\beta}_n-\beta_0 \right\|_{n,p}.  \]
Plugging this into~\eqref{eq:triangle} and using~\eqref{eq:A3} we get
\begin{align*}
\frac{\left\|\widehat{\beta}_n-\beta_0\right\|_{n,p}}{1+\left\|\widehat{\beta}_n-\beta_0\right\|_{n,p}} = O_{\mathbb{P}}\left( \log^2(n) \left\{n^{-
\frac{m}{2m+d}} +  \max_{1 \leq j \leq p} \diam^{\kappa}(A_j)  \right\}\right).
\end{align*}
Solving for $\|\widehat{\beta}_n-\beta_0\|_{n,p}$ yields 
\begin{align*}
\left\|\widehat{\beta}_n-\beta_0\right\|_{n,p}(1-o_{\mathbb{P}}(1))  =  O_{\mathbb{P}}\left(\log^2(n) \left\{n^{-\frac{m}{2m+d}} +  \max_{1 \leq j \leq p} \diam^{\kappa}(A_j) \right\}  \right)
\end{align*}
so that 
\begin{align}   \label{eq:final beta bound}
\left\|\widehat{\beta}_n-\beta_0\right\|_{n,p}  = O_{\mathbb{P}}\left(\log^2(n) \left\{n^{-\frac{m}{2m+d}} +  \max_{1 \leq j \leq p} \diam^{\kappa}(A_j) \right\}  \right),
\end{align}
which is the first result of the theorem. To conclude the proof we deduce the boundedness of $J_m(\widehat{\beta}_n)$ from the boundedness of $J_m(\widetilde{\beta}_n)$. For this recall that $J_m(\cdot)$ is a semi-norm on $\mathcal{H}^{m}(\mathbbm{R}^d)$ so that it is homogeneous and satisfies the triangle inequality. Moreover, by construction, $\gamma_n \leq 1$. Therefore, with the help of~\eqref{eq:beta tilde} and \eqref{eq:A3} we obtain
\begin{align*}
\gamma_n\,J_m(\widehat{\beta}_n-\beta_0) = J_m(\gamma_n(\widehat{\beta}_n-\beta_0)) \leq J_m(\widetilde{\beta}_n) + J_m(\beta_0) = O_{\mathbb{P}}(1).
\end{align*}
The boundedness of $\|\widehat{\beta}_n-\beta_0\|_{n,p}$ from~\eqref{eq:final beta bound} now implies that $\gamma_n^{-1} = O_{\mathbb{P}}(1)$ so that $J_m(\widehat{\beta}_n-\beta_0) = O_{\mathbb{P}}(1)$ also. It follows by another application of the triangle inequality that
\begin{align*}
J_m\left(\widehat \beta_n\right) \leq J_m\left(\beta_0\right) + J_m\left(\widehat \beta_n - \beta_0 \right) = O_{\mathbb{P}}(1),
\end{align*}
which is the desired result.




\subsection{Proof of Corollary~\ref{cor:2}}

Observe that, since the $\{A_j\}_{j=1}^p$ partition $\mathcal{I}$, we can write
\begin{align*}
\left\|\widehat{\beta}_n - \beta_0 \right\|_n^2 = \frac{1}{n} \sum_{i=1}^n\left| \int_{\mathcal{I}} X_i(\mathbf{t})\left(\widehat{\beta}_n(\mathbf{t}) - \beta_0(\mathbf{t})\right) \dd \mathbf{t} \right|^2 = \frac{1}{n} \sum_{i=1}^n\left| \sum_{j=1}^p \int_{A_j} X_i(\mathbf{t})\left(\widehat{\beta}_n(\mathbf{t}) - \beta_0(\mathbf{t})\right) \dd \mathbf{t} \right|^2,
\end{align*}
so that, using the inequality $|x+y|^2 \leq 2(|x|^2+|y|^2)$ twice, the definition of the $d_i$ in \eqref{eq:di} as well as the definition of the semi-norm $\| \cdot \|_{n,p}$ in \eqref{eq:np norm}, we get
\begin{align}
\label{eq:c1}
\left\|\widehat{\beta}_n - \beta_0 \right\|_n^2 & \leq \frac{2}{n} \nonumber \sum_{i=1}^n \left| \sum_{j=1}^p \int_{A_j} \left\{ X_i(\mathbf{t}) \left(\widehat{\beta}_n(\mathbf{t}) - \beta_0(\mathbf{t})\right) -  X_i(\mathbf{t}_j) \left(\widehat{\beta}_n(\mathbf{t}_j) - \beta_0(\mathbf{t}_j)\right) \right\} \dd \mathbf{t} \right|^2
\\ & \quad + \frac{2}{n} \sum_{i=1}^n \left| \sum_{j=1}^p  \nonumber X_i(\mathbf{t}_j)\left(\widehat \beta_n(\mathbf{t}_j) - \beta_0(\mathbf{t}_j)\right) \mu(A_j) \right|^2
\\ &  \leq \frac{4}{n} \sum_{i=1}^n \left|\sum_{j=1}^p \int_{A_j} \{X_i(\mathbf{t})\widehat\beta_n(\mathbf{t}) - \nonumber X_i(\mathbf{t}_j)\widehat\beta_n(\mathbf{t}_j) \} \dd \mathbf{t} \right|^2 
\\ & \quad + \frac{4}{n} \sum_{i=1}^n \left|\sum_{j=1}^p \int_{A_j} \{X_i(\mathbf{t})\beta_0(\mathbf{t}) -X_i(\mathbf{t}_j)\beta_0(\mathbf{t}_j) \} \dd \mathbf{t} \right|^2  \nonumber
\\ & \quad + \frac{2}{n} \sum_{i=1}^n \left| \sum_{j=1}^p X_i(\mathbf{t}_j)\left(\widehat \beta_n(\mathbf{t}_j) - \beta_0(\mathbf{t}_j)\right) \mu(A_j) \right|^2 \nonumber
\\ & = \frac{4}{n} \sum_{i=1}^n \left|\sum_{j=1}^p \int_{A_j} \{X_i(\mathbf{t})\widehat\beta_n(\mathbf{t}) -X_i(\mathbf{t}_j)\widehat\beta_n(\mathbf{t}_j) \} \dd \mathbf{t} \right|^2 + \frac{4}{n} \sum_{i=1}^n d_i^2 + 2\left\|\widehat\beta_n - \beta_0\right\|_{n,p}^2 .
\end{align}
A rate of convergence for $\|\widehat \beta_n - \beta_0\|_{n,p}$ was obtained in Theorem~\ref{thm:1}. From equation~\eqref{eq: di} in the proof of that theorem we also know that $n^{-1}\sum_{i=1}^n d_i^2 = O_{\mathbb{P}}(\max_{j \leq p} \diam^{2\kappa}(A_j))$. Hence, to prove the corollary we only need to determine the order of the first term in the RHS of \eqref{eq:c1}.

To bound the first term in \eqref{eq:c1}, write
\begin{align}
\label{eq:c2}
\left|\sum_{j=1}^p \int_{A_j} \{X_i(\mathbf{t})\widehat\beta_n(\mathbf{t}) -X_i(\mathbf{t}_j)\widehat\beta_n(\mathbf{t}_j) \} \dd \mathbf{t} \right| & \leq \left|\sum_{j=1}^p \int_{A_j} \left(\widehat \beta_n(\mathbf{t})-\widehat \beta_n(\mathbf{t}_j)\right) X_i(\mathbf{t}) \dd \mathbf{t} \right| \nonumber
\\ & \quad +  \left| \sum_{j=1}^p \int_{A_j}  \widehat \beta_n(\mathbf{t}_j)(X_i(\mathbf{t}) - X_i(\mathbf{t}_j)) \dd \mathbf{t} \right| \nonumber
\\ & = I + II,
\end{align}
say. To bound $I$ in \eqref{eq:c2}, let us note that since, by assumption $2m>d+1$, $\mathcal{H}^m(\mathcal{I})$ embeds on the space of Lipschitz continuous functions on $\mathcal{I}$, which we previously denoted by $\mathcal{C}^{0,1}(\mathcal{I})$. Since, as shown in Proposition~\ref{Prop:1}, $\widehat\beta_n \in \mathcal{H}^m(\mathcal{I})$, we therefore have
\begin{align*}
\sup_{\mathbf{t}, \mathbf{s} \in \mathcal{I}} \frac{\left|\widehat\beta_n(\mathbf{t}) - \widehat \beta_n(\mathbf{s})\right|}{\left\|\mathbf{t}-\mathbf{s} \right\|_{\mathbbm{R}^d }} & \leq   C \left\{\int_{\mathcal{I}} \left|\widehat \beta_n(\mathbf{t})\right|^2 \dd \mathbf{t} +  \sum_{m_1 + \ldots + m_d = m} \binom{m}{m_1, \ldots, m_d} \int_{\mathcal{I}} \left( \frac{\partial^m \widehat\beta_n(\mathbf{t})}{\partial t_{1}^{m_1} \ldots \partial t_{d}^{m_d}}  \right)^2 \dd \mathbf{t}\right\}^{1/2}
\\ & \leq C \left\{\int_{\mathcal{I}} \left|\widehat \beta_n(\mathbf{t})\right|^2 \dd \mathbf{t} +  J_m^2(\widehat\beta_n)\right\}^{1/2},
\end{align*}
for some global $C>0$. By Theorem~\ref{thm:1}, $J_m(\widehat\beta_n) = O_{\mathbb{P}}(1)$. At the same time, by Theorem 3.4 of \citet{Utr:1988}, there exist $C_0$ and $B_1$, depending only on $m, d, \mathcal{I}$ such that
\begin{align*}
    \int_{\mathcal{I}} \left|\widehat \beta_n(\mathbf{t})\right|^2 \dd \mathbf{t} \leq C_0 \sum_{j=1}^p \left|\widehat\beta_n(\mathbf{t}_j) \right|^2 \mu(A_j) + C_0 I_m^2(\widehat\beta_n) \leq C_0 J_m^2\left(\widehat \beta_n\right).
\end{align*}
By the boundedness of $J_m(\widehat{\beta}_n)$, which also implies the boundedness of $\sum_{j=1}^p |\widehat\beta_n(\mathbf{t}_j) |^2 \mu(A_j)$, it now follows that
\begin{align*}
    \sup_{\mathbf{t}, \mathbf{s} \in \mathcal{I}} \frac{\left|\widehat\beta_n(\mathbf{t}) - \widehat \beta_n(\mathbf{s})\right|}{\left\|\mathbf{t}-\mathbf{s} \right\|_{\mathbbm{R}^d }} = O_{\mathbb{P}}(1),
\end{align*}
from which we obtain
\begin{align*}
    I \leq O_{\mathbb{P}}(1) \sup_{\mathbf{t} \in \mathcal{I}} \left|X_i(\mathbf{t}) \right| \sum_{j=1}^p \int_{A_j}\left\|\mathbf{t}-\mathbf{t}_j\right\|_{\mathbbm{R}^d}  \dd \mathbf{t}.
\end{align*}
Under \ref{A3}, $\sup_{\mathbf{t} \in \mathcal{I}} |X_i(\mathbf{t})| = O_{\mathbb{P}}(1)$. Hence, we may conclude that 
\begin{align}
\label{eq:c3}
    I = O_{\mathbb{P}}\left( \max_{1 \leq j \leq p} \diam(A_j) \right).
\end{align}

To bound $II$ in \eqref{eq:c2}, let us note that for $2m>d$, $\mathcal{H}^{m}(\mathcal{I})$ also embeds on the space $\mathcal{C}(\mathcal{I})$ of continuous functions on $\mathcal{I}$, hence
\begin{align*}
\sup_{\mathbf{t} \in \mathcal{I}} \left| \widehat\beta_n(\mathbf{t}) \right| \leq C \left\{\int_{\mathcal{I}} \left|\widehat \beta_n(\mathbf{t})\right|^2 \dd \mathbf{t} +  J_m^2(\widehat\beta_n)\right\}^{1/2},
\end{align*}
for some $C>0$. Arguing as with the bound for $I$, we can deduce that $\sup_{\mathbf{t} \in \mathcal{I}} |\widehat\beta_n(\mathbf{t}) | = O_{\mathbb{P}}(1)$. Therefore,
\begin{align*}
    II \leq \sup_{\mathbf{t}  \in \mathcal{I}} \left|\widehat\beta_n(\mathbf{t}) \right| \sum_{j=1}^p \int_{A_j} \left|X_i(\mathbf{t}) - X_i(\mathbf{t}_j) \right| \dd \mathbf{t},
\end{align*}
and \ref{A3} yields 
\begin{align}
\label{eq:c4}
    II = O_{\mathbb{P}}\left( \max_{1 \leq j \leq p} \diam^{\kappa}(A_j) \right).
\end{align}
Combining \eqref{eq:c1}, \eqref{eq:c2}, \eqref{eq:c3} and \eqref{eq:c4} with Theorem~\ref{thm:1} now leads to the result.

\subsection{Proof of Theorem~\ref{thm:2}}

The proof of Theorem~\ref{thm:2} relies on a suitable decomposition of the empirical process along with its tightness as a function of $\sigma$ over $|\sigma-\sigma_0|\leq \varepsilon$ for some $\varepsilon>0$. The crucial elements for our proof are established in Lemma~\ref{lem:4} below. For its statement we adopt our notation from the proof of Theorem~\ref{thm:1}.

\begin{lemma}
\label{lem:4}
For some fixed $(\sigma_0, \beta_0) \in \mathbbm{R} \times \mathcal{H}^{m}(\mathbbm{R}^d)$, consider the independent processes $V_{i,\beta,\sigma}$ and $Z_{i,\beta,\sigma}$ given by
\begin{align*}
    V_{i, \beta,\sigma} & = \int_{d_i}^{d_i+\langle X_i,\beta_0-\beta\rangle_p} \left\{\psi \left(\frac{\epsilon_i+t}{\sigma} \right) - \psi \left(\frac{\epsilon_i}{\sigma} \right) \right\} \dd t 
    \\ & \quad - \mathbb{E}_{\epsilon_i} \left\{ \int_{d_i}^{d_i+\langle X_i,\beta_0-\beta\rangle_p} \left\{\psi \left(\frac{\epsilon_i+t}{\sigma} \right) - \psi \left(\frac{\epsilon_i}{\sigma} \right) \right\} \dd t \right\}, \quad (i = 1\ldots, n),
    \\  Z_{i,\beta,\sigma} & = \left(\psi\left(\frac{\epsilon_i}{\sigma} \right) - \mathbb{E}_{\epsilon_i}\left\{\psi\left(\frac{\epsilon_i}{\sigma} \right) \right\} \right)  \langle X_i, \beta_0-\beta\rangle_p,  \quad (i = 1\ldots, n).
\end{align*}
Here, both $V_{i,\beta, \sigma}$ and $Z_{i,\beta,\sigma}$ are indexed by $\beta \in \mathcal{H}^m(\mathbbm{R}^d)$ with $2m>d$ and $\sigma \in [\sigma_0-\varepsilon, \sigma_0+\varepsilon]$ for some small $\varepsilon>0$ such that $\sigma_0-\varepsilon>0$.  Then, the following asymptotic results hold.
\begin{enumerate} [label=\Alph*., ref=\Alph*]
\item \label{AA.} $\begin{aligned}[t]
\sup_{\substack{\beta \in \mathcal{H}^m(\mathbbm{R}^d): \left\|\beta-\beta_0\right\|_{n,p} \leq 1\\ |\sigma-\sigma_0|\leq \varepsilon }} \frac{\left|\frac{1}{\sqrt{n}} \sum_{i=1}^n V_{i,\beta,\sigma} \right|}{\left\|\beta-\beta_0\right\|_{n,p}^{1-\frac{d}{2m}}\left\{1+J_m(\beta)\right\}^{\frac{d}{2m}}} = O_{\mathbb{P}} \left(\sqrt{\log(n)} \right).
\end{aligned}$
\item \label{BB.} $\begin{aligned}[t]
\sup_{\substack{\beta \in \mathcal{H}^m(\mathbbm{R}^d): \left\|\beta-\beta_0\right\|_{n,p} \leq 1\\ |\sigma-\sigma_0|\leq \varepsilon }} \frac{\left|\frac{1}{\sqrt{n}} \sum_{i=1}^n Z_{i,\beta,\sigma} \right|}{\left\|\beta-\beta_0\right\|_{n,p}^{1-\frac{d}{2m}}\left\{1+J_m(\beta)\right\}^{\frac{d}{2m}}} = O_{\mathbb{P}} \left(\sqrt{\log(n)} \right).
\end{aligned}$
\end{enumerate}
\end{lemma}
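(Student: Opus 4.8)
The plan is to prove both parts by the same two-stage argument: for each \emph{fixed} value of the auxiliary scale $\sigma$, the supremum over $\beta$ is a direct application of Lemma~\ref{lem:2}, and the supremum over the compact interval $\{\sigma:|\sigma-\sigma_0|\le\varepsilon\}$ is then obtained by discretizing that interval, taking a union bound over the grid (this is the only place the $\sqrt{\log n}$ factor enters), and controlling the oscillation between neighbouring grid points by means of the slow-variation tail condition in \ref{B4}. Part~\ref{AA.} and part~\ref{BB.} differ only in that $Z_{i,\beta,\sigma}$ depends on $\beta$ linearly through $\langle X_i,\beta_0-\beta\rangle_p$, whereas $V_{i,\beta,\sigma}$ has an extra integral structure; both reduce to the same Lipschitz estimate in $\beta$.

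\emph{Stage 1 (fixed $\sigma$).} For fixed $\sigma\in[\sigma_0-\varepsilon,\sigma_0+\varepsilon]$ the processes $\{V_{i,\beta,\sigma}\}_i$ and $\{Z_{i,\beta,\sigma}\}_i$ are independent and centered with respect to $\epsilon_i$ by construction, exactly as the processes $U_{i,\beta}$ in the proof of Theorem~\ref{thm:1}. I would verify the Lipschitz condition \eqref{VDG:1} with ${\mathfrak D}_i(\beta,\widetilde\beta)=|\langle X_i,\beta-\widetilde\beta\rangle_p|$: for $Z$ this is immediate from the boundedness of $\psi$ in \ref{B4}, giving $|Z_{i,\beta,\sigma}-Z_{i,\widetilde\beta,\sigma}|\le 2\sup_x|\psi(x)|\,|\langle X_i,\beta-\widetilde\beta\rangle_p|$; for $V$ it follows because the integrand $\psi((\epsilon_i+t)/\sigma)-\psi(\epsilon_i/\sigma)$ is bounded by $2\sup_x|\psi(x)|$, so that changing the upper integration limit produces $|V_{i,\beta,\sigma}-V_{i,\widetilde\beta,\sigma}|\le 4\sup_x|\psi(x)|\,|\langle X_i,\beta-\widetilde\beta\rangle_p|$. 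The associated multipliers are deterministic constants, hence \eqref{VDG:2} and the side condition in Lemma~\ref{lem:2} hold trivially. Since $\|\cdot\|_{n,p}$ and $\langle X_i,\cdot\rangle_p$ depend on $\beta$ only through its values on $\{\mathbf t_j\}_{j=1}^p\subset\mathcal I$, I would identify the index set with $\mathcal B=\{\beta\in\mathcal H^m(\mathcal I):\|\beta-\beta_0\|_{n,p}\le1\}$ and reuse the entropy estimate \eqref{eq:A5}, which was established in the proof of Theorem~\ref{thm:1} and depends neither on $\sigma$ nor on the loss. Lemma~\ref{lem:2} then yields, for each fixed $\sigma$, an exponential tail bound
\[
\mathbb P\!\left(\sup_{\beta\in\mathcal B}\frac{\bigl|\tfrac1{\sqrt n}\sum_{i=1}^n V_{i,\beta,\sigma}\bigr|}{\|\beta-\beta_0\|_{n,p}^{1-d/(2m)}\{1+J_m(\beta)\}^{d/(2m)}}\ge T\right)\le c\,e^{-T^2/c^2},
\]
and identically for $Z$.

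\emph{Stage 2 (uniformity in $\sigma$).} Partition $[\sigma_0-\varepsilon,\sigma_0+\varepsilon]$ into $N_n=n$ points $\sigma_1,\dots,\sigma_{N_n}$ of spacing $\le 2\varepsilon/N_n$, apply the Stage~1 bound at each $\sigma_k$ with $T\asymp\sqrt{\log n}$, and union-bound over $k$ to get $\sup_k\sup_{\beta\in\mathcal B}(\text{ratio at }\sigma_k)=O_{\mathbb P}(\sqrt{\log n})$. To pass from the grid to all $\sigma$, put $\varepsilon_1=1/(\sigma_0+\varepsilon)>0$; the tail condition in \ref{B4} applied with $t=1/\sigma\ge\varepsilon_1$ and $s=1/\sigma_k\ge\varepsilon_1$ gives $\sup_x|\psi(x/\sigma)-\psi(x/\sigma_k)|\le D_{\varepsilon_1}|1/\sigma-1/\sigma_k|\le D_{\varepsilon_1}(\sigma_0-\varepsilon)^{-2}\cdot 2\varepsilon/N_n$ whenever $|\sigma-\sigma_k|\le 2\varepsilon/N_n$. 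Consequently, uniformly over $\beta$, both $|V_{i,\beta,\sigma}-V_{i,\beta,\sigma_k}|$ and $|Z_{i,\beta,\sigma}-Z_{i,\beta,\sigma_k}|$ are bounded by a constant multiple of $N_n^{-1}|\langle X_i,\beta_0-\beta\rangle_p|$ (for $V$ the bound is applied inside the integral and inside the centering expectation), whence by the Cauchy--Schwarz inequality in $i$ and the definition of $\|\cdot\|_{n,p}$,
\[
\Bigl|\tfrac1{\sqrt n}\sum_{i=1}^n(V_{i,\beta,\sigma}-V_{i,\beta,\sigma_k})\Bigr|\le \frac{C\sqrt n}{N_n}\,\|\beta-\beta_0\|_{n,p},
\]
and similarly for $Z$. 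Dividing by the denominator $\|\beta-\beta_0\|_{n,p}^{1-d/(2m)}\{1+J_m(\beta)\}^{d/(2m)}\ge\|\beta-\beta_0\|_{n,p}^{1-d/(2m)}$ and using $\|\beta-\beta_0\|_{n,p}\le1$ shows this oscillation contribution is $O(\sqrt n/N_n)=O(n^{-1/2})$. Combining, $\sup_{\beta\in\mathcal B,|\sigma-\sigma_0|\le\varepsilon}(\text{ratio})\le O_{\mathbb P}(\sqrt{\log n})+O(n^{-1/2})=O_{\mathbb P}(\sqrt{\log n})$, which is \ref{AA.} and \ref{BB.}.

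I expect the only delicate point to be the $\sigma$-uniformity of Stage~2: the grid must be fine enough that the oscillation term is genuinely negligible against $\sqrt{\log n}$, and the only available handle on that oscillation is the (somewhat unusual) slow-variation tail assumption on $\psi$ in \ref{B4}, which is precisely what makes $\psi(x/\sigma)$ Lipschitz in $\sigma$ uniformly in $x$. Everything else is bookkeeping on top of Lemma~\ref{lem:2} and the entropy estimate already proved for Theorem~\ref{thm:1}.
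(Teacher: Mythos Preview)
Your proposal is correct and follows essentially the same two-stage route as the paper: apply Lemma~\ref{lem:2} for each fixed $\sigma$ using the entropy bound \eqref{eq:A5} from the proof of Theorem~\ref{thm:1}, then discretize $[\sigma_0-\varepsilon,\sigma_0+\varepsilon]$, control the oscillation via the tail condition in \ref{B4}, and union-bound over the grid. The only cosmetic difference is the grid resolution: the paper takes spacing of order $T/\sqrt{n}$ (so $N\lesssim\sqrt{n}/T$) and arranges the oscillation term to equal exactly $T$, whereas you take $N_n=n$ and make the oscillation $O(n^{-1/2})$; both choices yield $T\asymp\sqrt{\log n}$ after the union bound.
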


\begin{proof}[Proof of Lemma~\ref{lem:4}]

We begin by proving part~\ref{AA.}. Consider first the case of a fixed $\sigma \in [\sigma_0-\varepsilon, \sigma_0+\varepsilon]$ with $\varepsilon$ small enough so that $\sigma_0-\varepsilon>0$. Observe that in this case Lemma~\ref{lem:2} applies, as each $V_{i,\beta,\sigma}$ has mean zero, $V_{i, \beta_0, \sigma} = 0$ and, for every $\beta, \beta^{\prime} \in \mathcal{H}^m(\mathbbm{R}^d)$, by \ref{B4} and the fundamental theorem of calculus, we find that
\begin{align*}
& \left|\int_{d_i}^{d_i+\langle X_i,\beta_0-\beta\rangle_p} \left\{\psi \left(\frac{\epsilon_i+t}{\sigma} \right) - \psi \left(\frac{\epsilon_i}{\sigma} \right) \right\} \dd t - \int_{d_i}^{d_i+\langle X_i,\beta_0-\beta^{\prime}\rangle_p} \left\{\psi \left(\frac{\epsilon_i+t}{\sigma} \right) - \psi \left(\frac{\epsilon_i}{\sigma} \right) \right\} \dd t \right|
\\ & = \left| \int_{d_i+\langle X_i,\beta_0-\beta^{\prime}\rangle_p}^{d_i+\langle X_i,\beta_0-\beta\rangle_p} \left\{\psi \left(\frac{\epsilon_i+t}{\sigma} \right) - \psi \left(\frac{\epsilon_i}{\sigma} \right) \right\} \dd t\right| \leq 2 \left\|\psi \right\|_{\infty} \left|\langle X_i, \beta - \beta^{\prime} \rangle_p \right|,
\end{align*} 
and constants are trivially uniformly sub-Gaussian. Furthermore, as we demonstrate in the proof of Theorem~\ref{thm:1} (see \textbf{Step 2} there),
\begin{align*}
    H\left(\delta,\{\beta\in \mathcal{H}^{m}(\mathcal{I}):\|\beta - \beta_0\|_{n,p} \leq 1, 1+J_m(\beta) \leq M\}, \|\cdot\|_{n,p} \right) \lesssim M^{\frac{d}{2m}} \delta^{1-\frac{d}{2m}}, \quad \delta>0, M \geq 1.
\end{align*}
Hence, applying Lemma~\ref{lem:2} yields the existence of a $c>0$ such that for all $T \geq c$
\begin{align}
\label{eq:exp}
\mathbb{P} \left( \sup_{\beta \in \mathcal{H}^m(\mathbbm{R}^d): \left\|\beta-\beta_0\right\|_{n,p} \leq 1} \frac{\left|\frac{1}{\sqrt{n}} \sum_{i=1}^n V_{i,\beta,\sigma} \right|}{\left\|\beta-\beta_0\right\|_{n,p}^{1-\frac{d}{2m}}\left\{1+J_m(\beta)\right\}^{\frac{d}{2m}}} \geq T \right) \leq c \exp \left[ -\frac{T^2}{c^2} \right].
\end{align}
This result is valid for each $\sigma \in [\sigma_0-\varepsilon,\sigma_0+\varepsilon]$, but it is not uniform. To make the result uniform for $\sigma \in [\sigma_0-\varepsilon,\sigma_0+\varepsilon]$, notice first that condition \ref{B4} ensures that for $\varepsilon_1 = 1/(\sigma_0+\varepsilon)$ there exists a $D_{\sigma_0}$ such that for all $s \geq \varepsilon_1, t \geq\varepsilon_1$ we have
\begin{align}
\label{eq:tail}
    \sup_{x \in \mathbbm{R}}\left|\psi(tx)-\psi(sx) \right| \leq D_{\sigma_0}\left|t-s\right|.
\end{align}
Split $ [\sigma_0-\varepsilon,\sigma_0+\varepsilon]$ into $N$ subintervals each of them having radius $\leq T/(2\sqrt{n}D_{\sigma_0})$ and select  $\{\sigma_k\}_{k=1}^N$, one $\sigma_k$ in each one of these $N$ subintervals. By Lemma 2.5 in \citet{vandeGeer:2000}, we can have
\begin{align}
\label{eq:coverball}
N \leq \left( \frac{8\sqrt{n} D_{\sigma_0}\varepsilon}{T}+1 \right).
\end{align}
Observe next that
\begin{align*}
& \sup_{\substack{\beta \in \mathcal{H}^m(\mathbbm{R}^d): \left\|\beta-\beta_0\right\|_{n,p} \leq 1 \\ |\sigma-\sigma_0|\leq \varepsilon}}  \frac{\left|\frac{1}{\sqrt{n}} \sum_{i=1}^n V_{i,\beta,\sigma} \right|}{\left\|\beta-\beta_0\right\|_{n,p}^{1-\frac{d}{2m}}\left\{1+J_m(\beta)\right\}^{\frac{d}{2m}}} 
\\ &  \leq \max_{1 \leq k \leq N}\sup_{\beta \in \mathcal{H}^m(\mathbbm{R}^d): \left\|\beta-\beta_0\right\|_{n,p} \leq 1} \frac{\left|\frac{1}{\sqrt{n}} \sum_{i=1}^n V_{i,\beta,\sigma_k} \right|}{\left\|\beta-\beta_0\right\|_{n,p}^{1-\frac{d}{2m}}\left\{1+J_m(\beta)\right\}^{\frac{d}{2m}}}
\\ & \quad  + \sup_{|\sigma-\sigma_0|\leq \varepsilon} \min_{1 \leq k \leq N} \sup_{\beta \in \mathcal{H}^m(\mathbbm{R}^d): \left\|\beta-\beta_0\right\|_{n,p} \leq 1}  \frac{\left|\frac{1}{\sqrt{n}} \sum_{i=1}^n \{V_{i,\beta,\sigma}-V_{i,\beta,\sigma_k} \}\right|}{\left\|\beta-\beta_0\right\|_{n,p}^{1-\frac{d}{2m}}\left\{1+J_m(\beta)\right\}^{\frac{d}{2m}}},
\end{align*}
where to obtain the second term on the RHS of this inequality we have used the inequality $|\sup_{x}f(x) - \sup_{x}g(x)| \leq \sup_{x} |f(x) - g(x)|$ for real valued functions $f$ and $g$. Using \eqref{eq:tail}, for every $\sigma$ satisfying $|\sigma-\sigma_k| \leq T/(2\sqrt{n}D_{\sigma_0})$ for some $k \in \{1, \ldots, N\}$, we now see that
\begin{align*}
\left|\frac{1}{\sqrt{n}} \sum_{i=1}^n \{V_{i,\beta,\sigma}-V_{i,\beta,\sigma_k}\} \right| & \leq \left| \frac{1}{\sqrt{n}} \sum_{i=1}^n \int_{d_i}^{d_i+\langle X_i,\beta_0-\beta\rangle_p} \left\{\psi \left(\frac{\epsilon_i+t}{\sigma} \right) - \psi \left(\frac{\epsilon_i+t}{\sigma_k} \right) \right\} \dd t  \right| 
 \\ & \quad + \left| \frac{1}{\sqrt{n}} \sum_{i=1}^n \int_{d_i}^{d_i+\langle X_i,\beta_0-\beta\rangle_p} \left\{\psi \left(\frac{\epsilon_i}{\sigma} \right) - \psi \left(\frac{\epsilon_i}{\sigma_k} \right) \right\} \dd t\right|
\\ & \leq 2D_{\sigma_0} \sqrt{n}\left|  \sigma-\sigma_k \right| \frac{1}{n} \sum_{i=1}^n \left|\langle X_i,\beta-\beta_0 \rangle_p \right|
\\ & \leq T\left\|\beta-\beta_0\right\|_{n,p},
\end{align*}
where to obtain the last inequality we have also used the Cauchy-Schwarz inequality and the definition of $\| \cdot \|_{n,p}$ in \eqref{eq:npd norm}. Since $J_m(\beta) \geq 0$ for all $\beta \in \mathcal{H}^m(\mathbbm{R}^d)$ and $d/(2m)<1$, the quotient may be bounded by
\begin{align*}
     \sup_{|\sigma-\sigma_0|\leq \varepsilon} & \min_{1 \leq k \leq N} \sup_{\beta \in \mathcal{H}^m(\mathbbm{R}^d): \left\|\beta-\beta_0\right\|_{n,p} \leq 1}  \frac{\left|\frac{1}{\sqrt{n}} \sum_{i=1}^n \{V_{i,\beta,\sigma}-V_{i,\beta,\sigma_k} \}\right|}{\left\|\beta-\beta_0\right\|_{n,p}^{1-\frac{d}{2m}}\left\{1+J_m(\beta)\right\}^{\frac{d}{2m}}}
    \\ & \quad \leq T \sup_{\beta \in \mathcal{H}^m(\mathbbm{R}^d): \left\|\beta-\beta_0\right\|_{n,p} \leq 1} \frac{\|\beta - \beta_0\|_{n,p}}{\left\|\beta-\beta_0\right\|_{n,p}^{1-\frac{d}{2m}}\left\{1+J_m(\beta)\right\}^{\frac{d}{2m}}}
    \\ & \quad \leq T \sup_{\beta \in \mathcal{H}^m(\mathbbm{R}^d): \left\|\beta-\beta_0\right\|_{n,p} \leq 1} \frac{\|\beta - \beta_0\|_{n,p}^{\frac{d}{2m}}}{\left\{1+J_m(\beta)\right\}^{\frac{d}{2m}}}
    \\ & \quad \leq T.
\end{align*}
From this, Boole's inequality, \eqref{eq:exp} and \eqref{eq:coverball} we find
\begin{align*}
&\mathbb{P} \left( \sup_{\substack{\beta \in \mathcal{H}^m(\mathbbm{R}^d): \left\|\beta-\beta_0\right\|_{n,p} \leq 1\\ |\sigma-\sigma_0|\leq \varepsilon }} \frac{\left|\frac{1}{\sqrt{n}} \sum_{i=1}^n V_{i,\beta,\sigma} \right|}{\left\|\beta-\beta_0\right\|_{n,p}^{1-\frac{d}{2m}}\left\{1+J_m(\beta)\right\}^{\frac{d}{2m}}} \geq 2T  \right)
\\ & \quad \leq \mathbb{P}\left(\max_{1 \leq k \leq N}\sup_{\beta \in \mathcal{H}^m(\mathbbm{R}^d): \left\|\beta-\beta_0\right\|_{n,p} \leq 1} \frac{\left|\frac{1}{\sqrt{n}} \sum_{i=1}^n V_{i,\beta,\sigma_k} \right|}{\left\|\beta-\beta_0\right\|_{n,p}^{1-\frac{d}{2m}}\left\{1+J_m(\beta)\right\}^{\frac{d}{2m}}} \geq T \right)
\\ & \quad = \mathbb{P}\left(\bigcup_{k=1}^N \left\{ \sup_{\beta \in \mathcal{H}^m(\mathbbm{R}^d): \left\|\beta-\beta_0\right\|_{n,p} \leq 1} \frac{\left|\frac{1}{\sqrt{n}} \sum_{i=1}^n V_{i,\beta,\sigma_k} \right|}{\left\|\beta-\beta_0\right\|_{n,p}^{1-\frac{d}{2m}}\left\{1+J_m(\beta)\right\}^{\frac{d}{2m}}} \geq T \right\}\right)
\\ & \quad \leq \sum_{k=1}^N \mathbb{P}\left( \sup_{\beta \in \mathcal{H}^m(\mathbbm{R}^d): \left\|\beta-\beta_0\right\|_{n,p} \leq 1} \frac{\left|\frac{1}{\sqrt{n}} \sum_{i=1}^n V_{i,\beta,\sigma_k} \right|}{\left\|\beta-\beta_0\right\|_{n,p}^{1-\frac{d}{2m}}\left\{1+J_m(\beta)\right\}^{\frac{d}{2m}}} \geq T \right)
\\ & \quad \leq \left( \frac{8\sqrt{n} D_{\sigma_0}\varepsilon}{T}+1 \right) c \exp \left[ -\frac{T^2}{c^2} \right].
\end{align*}
The result follows by setting $T = \sqrt{C \log(n)}$ for a sufficiently large $C>0$. To prove part~\ref{BB.}, use exactly the same arguments given that each $Z_{i,\beta,\sigma}$ is a centered process with $Z_{i,\beta_0,\sigma} = 0$.
\end{proof}

We now move on to the proof of Theorem~\ref{thm:2}. We only sketch the proof, as most of the arguments from the proof of Theorem~\ref{thm:1} carry over to this case. Let $L_n(\beta, \widehat{\sigma}_n)$ denote the objective function, that is,
\begin{align*}
L_n(\beta, \widehat{\sigma}_n) = \frac{1}{n} \sum_{i = 1}^n \rho\left(\frac{Y_i - \sum_{j=1}^p X_i \left( \mathbf{t}_j \right) \beta\left(\mathbf{t}_j\right)\mu\left(A_j\right)}{\widehat{\sigma}_n} \right) + \lambda J_m^2\left(\beta \right).
\end{align*}
Defining the convex combination $\widetilde{\beta}_n  = \gamma_n \widehat{\beta}_n + (1-\gamma_n)\beta_0$ with $\gamma_n = 1/(1+\|\widehat{\beta}_n-\beta_0\|_{n,p})$, we have 
\begin{align}
\label{eq:B1}
L_n(\widetilde{\beta}_n, \widehat{\sigma}_n) \leq \gamma_n L_n(\widehat{\beta}_n, \widehat{\sigma}_n) + (1-\gamma_n) L_n(\beta_0, \widehat{\sigma}_n) \leq L_n(\beta_0, \widehat{\sigma}_n).
\end{align}
Set
\begin{align*}
M_n(\beta, \widehat{\sigma}_n) := \frac{1}{n} \sum_{i=1}^n \rho\left(\frac{\epsilon_i + d_i + \sum_{j=1}^p X_i \left( \mathbf{t}_j\right) \left( \beta_0\left(\mathbf{t}_j\right) - \beta \left(\mathbf{t}_j \right) \right) \mu\left(A_j \right) }{\widehat{\sigma}_n} \right),
\end{align*}
so that $L_n(\beta, \widehat{\sigma}_n) = M_n(\beta, \widehat{\sigma}_n) + \lambda J^2_m(\beta)$. Furthermore, set
\begin{align*}
M(\beta, \widehat{\sigma}_n) := \frac{1}{n} \sum_{i=1}^n \mathbb{E}_{\epsilon_i} \left\{ \rho\left( \frac{\epsilon_i + d_i +  \sum_{j=1}^p X_i \left( \mathbf{t}_j\right)\left( \beta_0\left(\mathbf{t}_j\right) - \beta \left(\mathbf{t}_j \right) \right)\mu\left(A_j \right)}{\widehat{\sigma}_n}\right) \right\}.
\end{align*}
Rearranging \eqref{eq:B1} yields
\begin{align}
\label{eq:B2}
M(\widetilde{\beta}_n, \widehat{\sigma}_n) - M(\beta_0, \widehat{\sigma}_n) + \lambda J_m^2(\widetilde{\beta}_n) & \leq \left( M_n(\beta_0, \widehat{\sigma}_n) - M(\beta_0, \widehat{\sigma}_n) \right) \nonumber \\ &  - \left( M_n(\widetilde{\beta}_n, \widehat{\sigma}_n) - M(\widetilde{\beta}_n, \widehat{\sigma}_n) \right) +\lambda J_m^2(\beta_0).
\end{align}
As in the proof of Theorem~\ref{thm:1}, our approach consists of deriving a lower bound on the LHS of \eqref{eq:B2} in terms of $\| \widetilde{\beta}_n-\beta_0 \|_{n,p}$ and an upper bound on the RHS, also in terms of $\|\widetilde{\beta}_n-\beta_0 \|_{n,p}$. Combining these two bounds appropriately will then yield the result:
\begin{align}
\label{eq:B3}
\left\|\widetilde{\beta}_n-\beta_0 \right\|_{n,p} = \log^2(n) O_{\mathbb{P}}\left(n^{-\frac{m}{2m+d}} + \max_{1 \leq j \leq p} \diam^{\kappa}(A_j)  \right) \quad \text{and} \quad J_{m}\left(\widetilde{\beta}_n \right) = O_{\mathbb{P}}(1),
\end{align}
for $\lambda$ satisfying \ref{ip} and \ref{iip}. The definition of the convex combination $\widetilde{\beta}_n$ will then allow us to derive the same rate of convergence for $\|\widehat{\beta}_n-\beta_0 \|_{n,p}$ as well as the boundedness of $J_m(\widehat{\beta}_n)$.

\mypar{Step 1}. We begin by deriving a lower bound on the LHS of \eqref{eq:B2}. First, notice that, since $\widehat{\sigma}_n \xrightarrow{\mathbb{P}} \sigma_0$ by \ref{B5}, we may assume that $|\widehat{\sigma}_n - \sigma_0| \leq \varepsilon$ for all $\varepsilon>0$ and large $n$. Choose $\varepsilon$ small enough so that \ref{B6} is satisfied and $\sigma_0-\varepsilon>0$. Since $\widehat\sigma_n \in [\sigma_0-\varepsilon, \sigma_0+\varepsilon]$, we have
\begin{align*}
M(\widetilde{\beta}_n, \widehat{\sigma}_n) - M(\beta_0, \widehat{\sigma}_n) \geq \inf_{|\sigma-\sigma_0|\leq \varepsilon} \left[M(\widetilde{\beta}_n, \sigma) - M(\beta_0, \sigma)\right].
\end{align*}
By \ref{B6} and after interchanging expectation and differentiation, which is permitted by the boundedness of $\psi$, we see that
\begin{align*}
    \mathbb{E}_{\epsilon_i}\left\{\rho\left(\frac{\epsilon_i+d_i+\langle X_i, \beta_0-\beta\rangle_p}{\sigma}\right) \right\} & =  \mathbb{E}_{\epsilon_i}\left\{\rho\left(\frac{\epsilon_i+\langle X_i, \beta_0-\beta\rangle_p}{\sigma}\right) \right\} + \frac{d_i}{\sigma} g_{\sigma}\left(\frac{\langle X_i, \beta_0-\beta\rangle_p}{\sigma} \right) + O(d_i^2)
    \\ & =  \mathbb{E}_{\epsilon_i}\left\{\rho\left(\frac{\epsilon_i+\langle X_i, \beta_0-\beta\rangle_p}{\sigma}\right) \right\} + O\left( \left| d_i\langle X_i, \beta_0-\beta\rangle_p \right| \right)  + O(d_i^2),
\end{align*}
where to obtain the second equality we have used the fact that $g_{\sigma}(0) = \mathbb{E}\{\psi(\epsilon_1/\sigma) \} = 0$ for every $\sigma \in [\sigma_0-\epsilon,\sigma_0+\epsilon]$. Similarly,
\begin{align*}
\mathbb{E}_{\epsilon_i}\left\{\rho\left(\frac{\epsilon_i+d_i}{\sigma}\right)\right\} = \mathbb{E}_{\epsilon_i}\left\{\rho\left(\frac{\epsilon_i}{\sigma}\right)\right\} + O(d_i^2). 
\end{align*}
Combining these two expansions and averaging, we find
\begin{align*}
 M(\beta, \sigma) - M(\beta_0, \sigma)   & \geq  \frac{1}{n} \sum_{i=1}^n  \mathbb{E}_{\epsilon_i}\left\{\rho\left(\frac{\epsilon_i+\langle X_i, \beta_0-\beta\rangle_p}{\sigma} \right) -  \rho\left(\frac{\epsilon_i}{\sigma}\right) \right\} - \frac{C}{n} \sum_{i=1}^n d_i^2 - 
 \\ & \quad  - C \left(\frac{1}{n} \sum_{i=1}^n d_i^2 \right)^{1/2} \left\|\beta-\beta_0\right\|_{n,p},
\end{align*}
by the Cauchy-Schwarz inequality, for some large $C>0$. Reasoning as in \eqref{eq:convexity} in the proof of Theorem~\ref{thm:1}, we have 
\begin{equation}    \label{eq:convexityB}
\begin{aligned}
\frac{1}{n} \sum_{i=1}^n  \mathbb{E}_{\epsilon_i}\left\{\rho\left(\frac{\epsilon_i+\langle X_i, \beta_0-\beta\rangle_p}{\sigma} \right) -  \rho\left(\frac{\epsilon_i}{\sigma}\right) \right\} & \geq \varepsilon \frac{\left\|\beta-\beta_0\right\|^2_{n,p}}{D_{\varepsilon}^2 \log(n) (1+J_m(\beta))^2},
\end{aligned}
\end{equation}
for every $\beta \in \mathcal{H}^{m}(\mathbbm{R}^d)$, with high probability. The RHS of \eqref{eq:convexityB} is uniform in $\sigma \in [\sigma_0-\varepsilon, \sigma_0+\varepsilon]$. Furthermore, it follows as in the proof of Theorem~\ref{thm:1} that this inequality also holds for $\widetilde{\beta}_n$. Therefore,
\begin{align}
\label{eq:lower}
M(\widetilde{\beta}_n, \widehat{\sigma}_n) - M(\beta_0, \widehat{\sigma}_n)  & \geq \inf_{|\sigma-\sigma_0|\leq \varepsilon} \left[M(\widetilde{\beta}_n, \sigma) - M(\beta_0, \sigma) \right]  \nonumber
\\ &  \geq\frac{ \varepsilon \left\|\widetilde{\beta}_n-\beta_0\right\|^2_{n,p}}{D_{\varepsilon}^2\log(n)\left(1+J_m(\widetilde{\beta}_n)\right)^2} - \frac{C}{n} \sum_{i=1}^n d_i^2 \nonumber
\\ & \quad - C \left(\frac{1}{n} \sum_{i=1}^n d_i^2 \right)^{1/2} \left\|\widetilde{\beta}_n-\beta_0\right\|_{n,p},
\end{align}
which provides the desired lower bound for \eqref{eq:B2}.

\mypar{Step 2}. To derive an upper bound for \eqref{eq:B2}, use the fundamental theorem of calculus to write
\begin{align*}
    \rho\left(\frac{\epsilon_i + d_i + \langle X_i, \beta_0-\widetilde\beta_n\rangle_p}{\widehat{\sigma}_n}  \right) - \rho\left(\frac{\epsilon_i+d_i}{\widehat\sigma_n} \right) & = \frac{1}{\widehat{\sigma}_n}\int_{d_i}^{d_i + \langle X_i, \beta_0-\widetilde{\beta}_n\rangle_p} \left\{ \psi \left(\frac{\epsilon_i+t}{\widehat\sigma_n} \right) - \psi\left( \frac{\epsilon_i}{\widehat{\sigma}_n} \right) \right\} \dd t
    \\ & \quad + \frac{1}{\widehat{\sigma}_n} \psi\left( \frac{\epsilon_i}{\widehat{\sigma}_n}\right) \langle X_i, \beta_0-\widetilde{\beta}_n\rangle_p,
\end{align*}
so that, in the notation of Lemma~\ref{lem:4},
\begin{align*}
    \widehat{\sigma}_n\left|\left(M_n(\beta_0, \widehat{\sigma}_n) - M(\beta_0, \widehat{\sigma}_n) \right) - \left( M_n(\widetilde{\beta}_n, \widehat{\sigma}_n) - M(\widetilde{\beta}_n, \widehat{\sigma}_n) \right)\right| \leq \left|\frac{1}{n} \sum_{i=1}^n V_{i,\widetilde \beta_n, \widehat\sigma_n} \right| + \left| \frac{1}{n} \sum_{i=1}^n Z_{i,\widetilde \beta_n, \widehat\sigma_n}\right|.
    \end{align*}
An application of Lemma~\ref{lem:4} now yields
\begin{align}
\label{eq:upper}
\nonumber
    \left(M_n(\beta_0, \widehat{\sigma}_n) - M(\beta_0, \widehat{\sigma}_n) \right) &- \left( M_n(\widetilde{\beta}_n, \widehat{\sigma}_n)  - M(\widetilde{\beta}_n, \widehat{\sigma}_n) \right) 
    \\ & = O_{\mathbb{P}}\left(\sqrt{\frac{\log(n)}{n} }\left\|\widetilde\beta_n-\beta_0\right\|_{n,p}^{1-\frac{d}{2m}}\left\{1+J_m(\widetilde{\beta}_n)\right\}^{\frac{d}{2m}} \right),
\end{align}
as $\widehat\sigma_n \in [\sigma_0-\varepsilon, \sigma_0+\varepsilon]$ for all large $n$. Thus, \eqref{eq:upper} provides the desired upper bound for \eqref{eq:B2}.

\mypar{Step 3}. Combining \eqref{eq:B2} with \eqref{eq:lower} and \eqref{eq:upper}, we find
\begin{align*}
 \frac{\varepsilon\left\|\widetilde{\beta}_n-\beta_0\right\|^2_{n,p}}{D_{\varepsilon}^2 \log(n) \left\{1+J_m\left(\widetilde{\beta}_n\right)\right\}^2} + \lambda J_m^2\left(\widetilde{\beta}_n\right) & \leq  O_\mathbb{P}\left( \sqrt{\frac{\log(n)}{n}}\right) \|\widetilde{\beta}_n-\beta_0\|_{n,p}^{1-\frac{d}{2m}} \{1+J_m(\widetilde{\beta}_n) \}^{\frac{d}{2m}} + \lambda J_m^2\left(\beta_0 \right) \nonumber
\\ & \quad  + O_\mathbb{P} \left( \max_{1 \leq j \leq p} \diam^{2 \kappa}(A_j)\right) \nonumber
\\ & \quad  + O_\mathbb{P} \left( \max_{1 \leq j \leq p} \diam^{\kappa}(A_j)\right) \left\|\widetilde{\beta}_n-\beta_0\right\|_{n,p},
\end{align*}
the only difference with the corresponding step in the proof of Theorem~\ref{thm:1} being the presence of an additional $\sqrt{\log(n)}$-term on the RHS. Reasoning along the lines of that proof and making use of \ref{ip} and \ref{iip}, \eqref{eq:B3} can be easily verified.

\mypar{Step 4}. The last step in our proof involves establishing the same rate of convergence for $\|\widehat{\beta}_n-\beta_0\|_{n,p}$ as $\|\widetilde{\beta}_n-\beta_0\|_{n,p}$  and using $J_m(\widetilde\beta_n) = O_{\mathbb{P}}(1)$ in order to deduce $J_m(\widehat\beta_n) = O_{\mathbb{P}}(1)$. Both of these facts follow from identical arguments as in the proof of Theorem~\ref{thm:1} and a detailed proof is thus omitted. This concludes the proof of Theorem~\ref{thm:2}.

\subsection{Proof of Theorem~\ref{thm:3}}

To prove that $\widehat\sigma_n \xrightarrow{\mathbb{P}} \sigma_0$  it suffices to prove that $\mathbb{P}(\widehat\sigma_n \geq \sigma_0+\varepsilon) \to 0$ and  
$\mathbb{P}(\widehat\sigma_n\leq\sigma_0-\varepsilon) \to 0 $ for every $\varepsilon>0$. Consider first the event $\{\widehat\sigma_n \geq \sigma_0 + \varepsilon\}$ and recall that $\widehat\sigma_n$ is the solution of
\begin{align}
\label{eq:mscale1}
    \frac{1}{n}\sum_{i=1}^n \chi \left( \frac{\epsilon_i+d_i+(\alpha_0-\widehat\alpha_n)+ \sum_{j=1}^p X_i(\mathbf{t}_j)\left(\beta_0(\mathbf{t}_j)-\widehat\beta_n(\mathbf{t}_j)\right) \mu(A_j)  }{\widehat\sigma_n} \right) = \frac{1}{2},
\end{align}
where $\{d_i\}_{i=1}^n$ are the discretization errors introduced in \eqref{eq:di} and $(\widehat{\alpha}_n,\widehat\beta_n)$ are the thin-plate estimates. By the monotonicity of $\chi$, on the set  $\{\widehat\sigma_n \geq \sigma_0+\varepsilon\}$ we have
\begin{align}
\label{eq:mscale2}
    \frac{1}{n}\sum_{i=1}^n \chi \left( \frac{\epsilon_i+d_i+(\alpha_0-\widehat\alpha_n)+ \sum_{j=1}^p X_i(\mathbf{t}_j)\left(\beta_0(\mathbf{t}_j)-\widehat\beta_n(\mathbf{t}_j)\right) \mu(A_j)  }{\sigma_0+\varepsilon} \right) > \frac{1}{2}.
\end{align}
Now, by a first order Taylor expansion, the LHS of \eqref{eq:mscale2} may be rewritten as
\begin{align*}
     \frac{1}{n}\sum_{i=1}^n \chi \left( \frac{\epsilon_i }{\sigma_0+\varepsilon} \right) + \underbrace{\frac{1}{n(\sigma_0+\varepsilon)} \sum_{i=1}^n \chi^{\prime} \left(\epsilon_i^{\star}\right)\left(d_i + \left(\alpha_0-\widehat\alpha_n\right) + \sum_{j=1}^p X_i(\mathbf{t}_j)\left(\beta_0(\mathbf{t}_j)-\widehat\beta_n(\mathbf{t}_j)\right) \mu(A_j) \right)}_{  S_n}
\end{align*}
for some mean values $\left\{\epsilon_i^{\star}\right\}_{i=1}^n$. But, $\chi^{\prime}$ is bounded and, as we show in the proof of Theorem~\ref{thm:1} (see \eqref{eq: di} there), $n^{-1} \sum_{i=1}^n |d_i|^2 = o_{\mathbb{P}}(1)$, $|\widehat\alpha_n - \alpha_0| = o_{\mathbb{P}}(1)$ and $\|\widehat\beta_n-\beta_0 \|_{n,p} = o_{\mathbb{P}}(1)$. Combining these facts along with the triangle and Cauchy-Schwarz inequalities we now see that
\begin{align*}
   \left|S_n\right|  &  = \left| \frac{1}{n(\sigma_0+\varepsilon)} \sum_{i=1}^n \chi^{\prime} \left(\epsilon_i^{\star}\right)\left(d_i + \left(\alpha_0-\widehat\alpha_n \right)+ \sum_{j=1}^p X_i(\mathbf{t}_j)\left(\beta_0(\mathbf{t}_j)-\widehat\beta_n(\mathbf{t}_j)\right) \mu(A_j) \right)\right| 
    \\ & \quad \leq \frac{\left\| \chi^{\prime}\right\|_{\infty}}{\sigma_0+\varepsilon} \left(\frac{1}{n} \sum_{i=1}^n \left|d_i\right| + \left|\widehat\alpha_n-\alpha_0 \right| + \frac{1}{n} \sum_{i=1}^n \left|\langle X_i, \beta_0- \widehat \beta_n \rangle_p\right| \right)
    \\ & \quad \leq  \frac{\left\| \chi^{\prime}\right\|_{\infty}}{\sigma_0+\varepsilon} \left( \left\{\frac{1}{n}\sum_{i=1}^n \left|d_i\right|^2 \right\}^{1/2} + \left|\widehat\alpha_n-\alpha_0 \right| + \left\|\widehat\beta_n - \beta_0 \right\|_{n,p} \right)
    \\ & \quad  = o_\mathbb{P}(1).
\end{align*}
Therefore, on the set $\{\widehat\sigma_n \geq \sigma_0 +\varepsilon\}$ we have
\begin{align}
    \label{eq:mscale3}
   \frac{1}{n} \sum_{i=1}^n \chi \left(\frac{\epsilon_i} {\sigma_0+\varepsilon}\right) + S_n > \frac{1}{2}.
\end{align}
But the summands $\{\chi(\epsilon_i/(\sigma_0+\varepsilon))\}_{i=1}^n$ are i.i.d. and bounded so that, by the WLLN, $n^{-1} \sum_{i=1}^n \chi(\epsilon_i/(\sigma_0+\varepsilon)) \xrightarrow{\mathbb{P}} \mathbb{E}\{\chi(\epsilon_1/(\sigma_0+\varepsilon)) \}$. By definition of $\sigma_0$ 
as well as the strict monotonicity of $\chi$, $\mathbb{E}\{\chi(\epsilon_1/(\sigma_0+\epsilon)) \} < 1/2$. Since $S_n = o_{\mathbb{P}}(1)$, it  now follows that 
\begin{align*}
    \lim_{n \to \infty}\mathbb{P} \left( \widehat{\sigma}_n \geq \sigma_0 + \varepsilon \right) \leq 
    \lim_{n \to \infty}\mathbb{P} \left( \frac{1}{n} \sum_{i=1}^n \chi \left(\frac{\epsilon_i} {\sigma_0+\varepsilon}\right) + S_n > \frac{1}{2} \right) = 0.
\end{align*}
A similar argument shows that $\mathbb{P}(\widehat\sigma_n\leq\sigma_0-\varepsilon) \to 0 $. Combining these two limits establishes the result of the theorem.

\bibliographystyle{apalike}
\bibliography{papers}
\end{document}